\newcommand{\MST}{\text{\em MST}}
\newcommand{\SMGG}{\text{\em Strong-matching}}
\newcommand{\dg}[1]{deg(#1)}
\newcommand{\nnb}[1]{\overline{N}(#1)}
\newcommand{\tr}[1]{t(#1)}
\newcommand{\tru}[1]{t'(#1)}
\newcommand{\trmin}{t}
\newcommand{\emin}{e}
\newcommand{\cmin}{D(u,v)}
\newcommand{\cone}[2]{C^{#1}_{#2}}
\newcommand{\hex}[2]{X(#1,#2)}
\newcommand{\tra}[2]{{#1}({#2})}
\newcommand{\G}[2]{G_{#1}({#2})}
\newcommand{\Inf}[1]{\text{Inf}(#1)}
\newcommand{\disc}{\Circle}
\newcommand{\discs}{\scriptsize \Circle}
\newcommand{\ddisc}{\ominus}
\newcommand{\ddiscs}{\scriptsize \ominus}
\newcommand{\sqr}{\Box}
\newcommand{\sqrs}{\scriptsize\Box}
\newcommand{\trid}{\bigtriangledown}
\newcommand{\trids}{\scriptsize\bigtriangledown}
\newcommand{\triu}{\bigtriangleup}
\newcommand{\trius}{\scriptsize\bigtriangleup}
\newcommand{\GD}{G_{\bigtriangledown}}
\newcommand{\GU}{G_{\bigtriangleup}}
\newcommand{\GUD}{G_{\bigtriangledown \hspace*{-7.5pt} \bigtriangleup}}
\newcommand{\sq}{\square}
\newcommand{\GS}[1]{G_{\sq}(#1)}
\newcommand{\SP}[2]{S^{\text{\tiny +#1}}_\text{\tiny #2}}
\newcommand{\SM}[2]{S^{\text{-\tiny #1}}_\text{\tiny #2}}
\title{Strong Matching of Points with Geometric Shapes\thanks{Research supported by NSERC.}}
\author{
Ahmad Biniaz\thanks{School of Computer Science, Carleton University, Ottawa, Canada.}
\and 
Anil Maheshwari\footnotemark[2]
\and 
Michiel Smid\footnotemark[2]
}
\date{\today}
\newtheorem{lemma}{Lemma}
\newtheorem{corollary}{Corollary}
\newtheorem{conjecture}{Conjecture}
\newtheorem{theorem}{Theorem}
\newtheorem{observation}{Observation}
\newtheorem{definition}{Definition}
\begin{document}

\maketitle

\begin{abstract}
Let $P$ be a set of $n$ points in general position in the plane. Given a convex geometric shape $S$, a geometric graph $\G{S}{P}$ on $P$ is defined to have an edge between two points if and only if there exists an empty homothet of $S$ having the two points on its boundary. A matching in $\G{S}{P}$ is said to be {\em strong}, if the homothests of $S$ representing the edges of the matching, are pairwise disjoint, i.e., do not share any point in the plane. We consider the problem of computing a strong matching in $\G{S}{P}$, where $S$ is a diametral-disk, an equilateral-triangle, or a square. We present an algorithm which computes a strong matching in $\G{S}{P}$; if $S$ is a diametral-disk, then it computes a strong matching of size at least $\lceil \frac{n-1}{17} \rceil$, and if $S$ is an equilateral-triangle, then it computes a strong matching of size at least $\lceil \frac{n-1}{9} \rceil$. If $S$ can be a downward or an upward equilateral-triangle, we compute a strong matching of size at least $\lceil \frac{n-1}{4} \rceil$ in $\G{S}{P}$. When $S$ is an axis-aligned square we compute a strong matching of size $\lceil \frac{n-1}{4} \rceil$ in $\G{S}{P}$, which improves the previous lower bound of $\lceil \frac{n}{5} \rceil$. 
\end{abstract}

\section{Introduction}
\label{intro}

Let $S$ be a compact and convex set in the plane that contains the origin in its interior. A {\em homothet} of $S$ is obtained by scaling $S$ with respect to the origin by some factor $\mu\ge 0$, followed by a translation to a point $b$ in the plane: $b+\mu S=\{b+\mu a: a\in S\}$.
For a point set $P$ in the plane, we define $\G{S}{P}$ as the geometric graph on $P$ which has an straight-line edge between two points $p$ and $q$ if and only if there exists a homothet of $S$ having $p$ and $q$ on its boundary and whose interior does not contain any point of $P$. If $P$ is in ``general position'', i.e., no four points of $P$ lie on the boundary of any homothet of $S$, then $\G{S}{P}$ is plane~\cite{Bose2010}. Hereafter, we assume that $P$ is a set of $n$ points in the plane, which is in general position with respect to $S$ (we will define the general position in Section~\ref{preliminaries}). 
If $S$ is a disk $\disc$ whose center is the origin, then $\G{\discs}{P}$ is the Delaunay triangulation of $P$. If $S$ is an equilateral triangle $\trid$ whose barycenter is the origin, then $\G{\trids}{P}$ is the triangular-distance Delaunay graph of $P$ which is introduced by Chew~\cite{Chew1989}.

A {\em matching} in a graph $G$ is a set of edges which do not share any vertices. A {\em maximum matching} is a matching with maximum cardinality. A {\em perfect matching} is a matching which matches all the vertices of $G$. Let $\mathcal{M}$ be a matching in $\G{S}{P}$. $\mathcal{M}$ is referred to as a {\em matching of points with shape $S$}, e.g., a matching in $\G{\discs}{P}$ is a matching of points with with disks. Let $\mathcal{S}_{\mathcal{M}}$ be a set of homothets of $S$ representing the edges of $\mathcal{M}$. $\mathcal{M}$ is called a {\em strong matching} if there exists a set $\mathcal{S}_{\mathcal{M}}$ whose elements are pairwise disjoint, i.e., the objects in $\mathcal{S}_{\mathcal{M}}$ do not share any point in the plane. Otherwise, $\mathcal{M}$ is a {\em weak matching}. See Figure~\ref{strong-example}. To be consistent with the definition of the matching in the graph theory, we use the term ``matching'' to refer to a weak matching. Given a point set $P$ in the plane and a shape $S$, the {\em (strong) matching problem} is to compute a (strong) matching of maximum cardinality in $\G{S}{P}$.
In this paper we consider the strong matching problem of points in general position in the plane with respect to a given shape $S\in \{\ddisc,\trid, \sqr\}$ (see Section~\ref{preliminaries} for the definition), where by $\ddisc$ we mean the line segment between the two points on the boundary of the disk is a diameter of that disk.

\begin{figure}[htb]
  \centering
\setlength{\tabcolsep}{0in}
  $\begin{tabular}{ccc}
\multicolumn{1}{m{.33\columnwidth}}{\centering\includegraphics[width=.22\columnwidth]{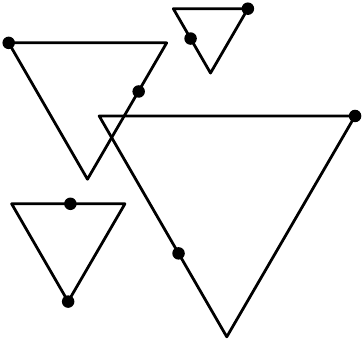}}
&\multicolumn{1}{m{.33\columnwidth}}{\centering\includegraphics[width=.22\columnwidth]{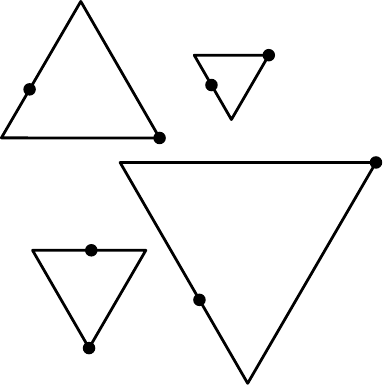}} &\multicolumn{1}{m{.33\columnwidth}}{\centering\includegraphics[width=.22\columnwidth]{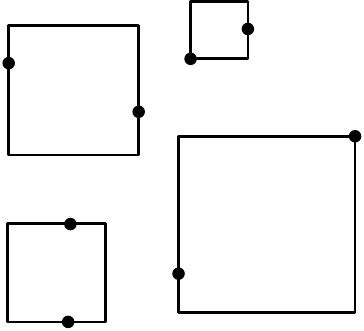}}
\\
(a) & (b)& (c)
\end{tabular}$
  \caption{Point set $P$ and (a) a perfect weak matching in $\G{\trids}{P}$, (b) a perfect strong matching in $\GUD(P)$, and (c) a perfect strong matching in $
\G{\sqr}{P}$.}
\label{strong-example}
\end{figure}

\subsection{Previous Work}
\label{previous-work}
The problem of computing a maximum matching in $\G{S}{P}$ is one of the fundamental problems in computational geometry and graph theory \cite{Abrego2004, Abrego2009, Babu2014, Bereg2009, Biniaz2014, Biniaz2015, Dillencourt1990}. 
Dillencourt~\cite{Dillencourt1990} and \'{A}brego et al. \cite{Abrego2004} considered the problem of matching points with disks. Let $S$ be a closed disk $\disc$ whose center is the origin, and let $P$ be a set of $n$ points in the plane which is in general position with respect to $\disc$. Then, $\G{\discs}{P}$ is the graph which has an edge between two points $p,q\in P$ if there exists a homothet of $\disc$ having $p$ and $q$ on its boundary and does not contain any point of $P\setminus\{p,q\}$. $\G{\discs}{P}$ is equal to the Delaunay triangulation on $P$, $DT(P)$. Dillencourt~\cite{Dillencourt1990} proved that $\G{\discs}{P}$ contains a perfect (weak) matching. \'{A}brego et al. \cite{Abrego2004} proved that $\G{\discs}{P}$ has a strong matching of size at least $\lceil(n-1)/8\rceil$. They also showed that there exists a set $P$ of $n$ points in the plane with arbitrarily large $n$, such that $\G{\discs}{P}$ does not contain a strong matching of size more than $\frac{36}{73}n$.

For two points $p$ and $q$, the disk which has the line segment $pq$ as its diameter is called the diametral-disk between $p$ and $q$. We denote a diametral-disk by $\ddisc$. Let $\G{\ddiscs}{P}$ be the graph which has an edge between two points $p,q\in P$ if the diametral-disk between $p$ and $q$ does not contain any point of $P\setminus\{p,q\}$. $\G{\ddiscs}{P}$ is equal to the Gabriel graph on $P$, $GG(P)$. Biniaz et al.~\cite{Biniaz2014} proved that $\G{\ddiscs}{P}$ has a matching of size at least $\lceil(n-1)/4\rceil$, and this bound is tight.

The problem of matching of points with equilateral triangles has been considered by Babu et al.~\cite{Babu2014}.
Let $S$ be a downward equilateral triangle $\trid$ whose barycenter is the origin and one of its vertices is on the negative $y$-axis. Let $P$ be a set of $n$ points in the plane which is in general position with respect to $\trid$. Let $\G{\trids}{P}$ be the graph which has an edge between two points $p,q\in P$ if there exists a homothet of $\trid$ having $p$ and $q$ on its boundary and does not contain any point of $P\setminus\{p,q\}$. $\G{\trids}{P}$ is equal to the triangular-distance Delaunay graph on $P$, which was introduced by Chew~\cite{Chew1989}. Bonichon et al.~\cite{Bonichon2010} showed that $\G{\trids}{P}$ is equal to the half-theta six graph on $P$, $\frac{1}{2}\Theta_6(P)$. Babu et al.~\cite{Babu2014} proved that $\G{\trids}{P}$ has a matching of size at least $\lceil(n-1)/3\rceil$, and this bound is tight. If we consider an upward triangle $\triu$, then $\G{\trius}{P}$ is defined similarly. Let $\GUD(P)$ be the graph on $P$ which is the union of $\G{\trids}{P}$ and $\G{\trius}{P}$. Bonichon et al.~\cite{Bonichon2010} showed that $\GUD(P)$ is equal to the theta six graph on $P$, $\Theta_6(P)$. Since $\G{\trids}{P}$ is a subgraph of $\GUD(P)$, the lower bound of $\lceil(n-1)/3\rceil$ on the size of maximum matching in $\G{\trids}{P}$ holds for $\GUD(P)$.  

The problem of strong matching of points with axis-aligned rectangles is trivial. An obvious algorithm is to repeatedly match the two leftmost points. The problem of matching points with axis-aligned squares was considered by \'{A}brego et al. \cite{Abrego2009}.
Let $S$ be an axis-aligned square $\sqr$ whose center is the origin. Let $P$ be a set of $n$ points in the plane which is in general position with respect to $\sqr$. Let $\G{\sqrs}{P}$ be the graph which has an edge between two points $p,q\in P$ if there exists a homothet of $\sqr$ having $p$ and $q$ on its boundary and does not contain any point of $P\setminus\{p,q\}$. $\G{\sqrs}{P}$ is equal to the $L_\infty$-Delaunay graph on $P$. \'{A}brego et al. \cite{Abrego2004, Abrego2009} proved that $\G{\sqrs}{P}$ has a perfect (weak) matching and a strong matching of size at least $\lceil n/5\rceil$. Further, they showed that there exists a set $P$ of $n$ points in the plane with arbitrarily large $n$, such that $\G{\sqrs}{P}$ does not contain a strong matching of size more than $\frac{5}{11}n$. Table~\ref{table1} summarizes the results.

Bereg et al.~\cite{Bereg2009} concentrated on matching points of $P$ with axis-aligned rectangles and squares, where $P$ is not necessarily in general position. 
They proved that any set of $n$ points in the plane has a strong rectangle matching of size at least $\lfloor\frac{n}{3}\rfloor$, and such a matching can be computed in $O(n \log n)$ time. As for squares, they presented a $\Theta(n\log n)$ time algorithm that decides whether a given matching has a weak square realization, 
and an $O(n^2\log n)$ time algorithm for the strong square matching realization. They also proved that it is NP-hard to decide whether a given point set has a perfect strong square-matching. 

\begin{table}
\centering
\begin{minipage}{13cm}\centering
\caption{Lower bounds on the size of weak and strong matchings in $\G{S}{P}$.}
\label{table1}
    \begin{tabular}{|c|c|c||c|c|}
         \hline
             $S$ 	& weak matching & reference& strong matching &reference \\  \hline  \hline
	      {$\disc$}& 	${\lfloor \frac{n}{2}\rfloor}$&\cite{Dillencourt1990}& $\lceil\frac{n-1}{8}\rceil$&\cite{Abrego2004}\\
	      {$\ominus$} &${\lceil \frac{n-1}{4}\rceil}$&\cite{Biniaz2014} &$\lceil\frac{n-1}{17}\rceil$&Theorem~\ref{Gabriel-thr}\\  
             {$\bigtriangledown$} &${\lceil \frac{n-1}{3}\rceil}$&\cite{Babu2014} &$\lceil\frac{n-1}{9}\rceil$&Theorem~\ref{half-theta-six-thr}\\  
	    {$\trid$ or $\triu$} &$\lceil \frac{n-1}{3}\rceil$& \cite{Babu2014} &$\lceil\frac{n-1}{4}\rceil$&Theorem~\ref{theta-six-thr}\\ \hline\hline
	    \multirow{2}{*}{$\Square$}& \multirow{2}{*}{$\lfloor \frac{n}{2}\rfloor$} &\multirow{2}{*}{\cite{Abrego2004, Abrego2009}}&
	    $\lceil\frac{n}{5}\rceil$ & \cite{Abrego2004, Abrego2009}       \\ 
		& &	& $\lceil\frac{n-1}{4}\rceil$& Theorem~\ref{infty-Delaunay-thr} \\ \hline
    \end{tabular}
    \end{minipage}
\end{table}

\subsection{Our results}
\label{our-results}
In this paper we consider the problem of computing a strong matching in $\G{S}{P}$, where $S\in \{\ddisc,\trid, \sqr\}$. In Section~\ref{preliminaries}, we provide some observations and prove necessary Lemmas. Given a point set $P$ in which is in general position with respect to a given shape $S$, in Section~\ref{algorithm-section}, we present an algorithm which computes a strong matching in $\G{S}{P}$. In Section~\ref{Gabriel-section}, we prove that if $S$ is a diametral-disk, then the algorithm of Section~\ref{algorithm-section} computes a strong matching of size at least $\lceil(n-1)/17\rceil$ in $\G{\ddisc}{P}$. In Section~\ref{half-theta-six-section}, we prove that if $S$ is an equilateral triangle, then the algorithm of Section~\ref{algorithm-section} computes a strong matching of size at least $\lceil(n-1)/9\rceil$ in $\G{\trids}{P}$. In Section~\ref{theta-six-section}, we compute a strong matching of size at least $\lceil (n-1)/4\rceil$ in $\GUD(P)$. In Section~\ref{infty-Delaunay-section}, we compute a strong matching of size at least $\lceil(n-1)/4\rceil$ in $\G{\sqrs}{P}$; this improves the previous lower bound of $\lceil n/5\rceil$. A summary of the results is given in Table~\ref{table1}. In Section~\ref{conjecture-section} we discuss a possible way to further improve upon the result obtained for diametral-disks in Section~\ref{Gabriel-section}. Concluding remarks and open problems are given in Section~\ref{conclusion}.

\section{Preliminaries}
\label{preliminaries}

Let $S\in\{\ddisc, \trid, \sqr\}$, and let $S_1$ and $S_2$ be two homothets of $S$. We say that $S_1$ is {\em smaller then} $S_2$ if the area of $S_1$ is smaller than the area of $S_2$. For two points $p,q\in P$, let $S(p,q)$ be a smallest homothet of $S$ having $p$ and $q$ on its boundary. If $S$ is a diametral-disk, a downward equilateral-triangle, or a square, then we denote $S(p,q)$ by $D(p,q)$, $t(p,q)$, or $Q(p,q)$, respectively. If $S$ is a diametral-disk, then $D(p,q)$ is uniquely defined by $p$ and $q$. If $S$ is an equilateral-triangle or a square, then $S$ has the {\em shrinkability} property: if there exists a homothet $S'$ of $S$ that contains two points $p$ and $q$, then there exists a homothet $S''$ of $S$ such that $S''\subseteq S'$, and $p$ and $q$ are on the boundary of $S''$. If $S$ is an equilateral-triangle, then we can
shrink $S''$ further, such that each side of $S''$ contains either $p$ or $q$. If $S$ is a square, then we can
shrink $S''$ further, such that $p$ and $q$ are on opposite sides of $S''$. Thus, we have the following observation:

\begin{observation}
\label{shrink-triangle-obs}
For two points $p,q\in P$,
\begin{itemize}
 \item $D(p,q)$ is uniquely defined by $p$ and $q$, and it has the line segment $pq$ as a diameter.
\item $t(p, q)$ is uniquely defined by $p$ and $q$, and it has one of $p$ and $q$ on a corner and the other point is
on the side opposite to that corner.
\item $Q(p,q)$ has $p$ and $q$ on opposite sides.
\end{itemize}
\end{observation}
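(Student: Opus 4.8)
The plan is to treat the three shapes separately, dispatching the disk by elementary geometry and unifying the triangle and square through their edge-normal support representations. For the diametral disk I would argue directly: any disk with $p$ and $q$ on its boundary has its center on the perpendicular bisector of the segment $pq$, and if that center lies at distance $d$ from the midpoint $m$ of $pq$, then the radius equals $\sqrt{(|pq|/2)^2+d^2}$. This is strictly increasing in $d$, so the area is minimized precisely when $d=0$, i.e.\ when the center is $m$ and the radius is $|pq|/2$. Hence $D(p,q)$ exists, is unique, and has $pq$ as a diameter.

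For the downward equilateral triangle I would encode a homothet by the three lines carrying its sides. Their outward normals $n_1,n_2,n_3$ are fixed (they point $120^\circ$ apart and satisfy $n_1+n_2+n_3=0$), so a homothet is exactly a triple of support values $(c_1,c_2,c_3)$ with the triangle equal to $\{x:\langle x,n_i\rangle\le c_i\}$, and because $\sum_i n_i=0$ the translation cancels and the size is proportional to $c_1+c_2+c_3$. Requiring the triangle to contain both points forces $c_i\ge m_i:=\max(\langle p,n_i\rangle,\langle q,n_i\rangle)$, and the unique minimizer of $\sum_i c_i$ sets $c_i=m_i$ for every $i$, which already yields existence and uniqueness of $t(p,q)$; alternatively this also follows from the shrinkability property stated above, under which each side of the shrunk triangle touches $p$ or $q$. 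I would then finish with a pigeonhole argument: since $\sum_i\langle p-q,n_i\rangle=0$ and general position forbids ties, each of $p,q$ attains the maximum $m_i$ (hence lies on the corresponding side) in a nonempty subset of the three directions, and the three directions split as $2$ and $1$; the point owning two directions sits at the corner where those two sides meet, and the other point lies on the remaining, opposite side.

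For the axis-aligned square I would reuse this framework, now with four normals forming two opposite pairs. The extra constraint that width equals height means that not every support tuple is a homothet, and minimizing the size reduces the problem to the $L_\infty$ ball, whose side length equals $\max(|p_x-q_x|,|p_y-q_y|)$ for $p=(p_x,p_y)$ and $q=(q_x,q_y)$. Under general position this maximum is attained by exactly one coordinate, say the horizontal one, so the two vertical sides of the smallest square pass through $p$ and $q$ respectively, placing them on opposite sides; the freedom to slide the square in the vertical direction is exactly why $Q(p,q)$ need not be unique, consistent with the statement claiming only the opposite-sides structure here.

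The hard part will be the triangle case: I expect the delicate point to be verifying that both points genuinely lie on the boundary of the minimal triangle (not merely inside it) and that the split of directions is forced to be $2$--$1$ rather than $3$--$0$. Both facts hinge on the identity $\sum_i\langle p-q,n_i\rangle=0$ together with the general-position assumption ruling out equalities in any edge-normal direction; this is also the precise point at which the square degenerates into a one-parameter sliding family instead of a unique minimizer, which is why uniqueness is asserted for $D(p,q)$ and $t(p,q)$ but not for $Q(p,q)$.
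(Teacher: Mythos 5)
Your proposal is correct, but it proves considerably more than the paper does: the paper offers no proof of this observation at all, deriving it in one sentence from an informally asserted ``shrinkability'' property (any homothet containing $p$ and $q$ can be shrunk until both lie on the boundary, and for the triangle until every side meets $\{p,q\}$, for the square until $p$ and $q$ lie on opposite sides). Your support-function argument supplies the missing justification: encoding a homothet of the triangle by its three support values $(c_1,c_2,c_3)$ with respect to outward normals satisfying $n_1+n_2+n_3=0$ makes the size a linear functional, so the minimizer over the containment constraints $c_i\ge m_i$ is unique and equals $(m_1,m_2,m_3)$; the identity $\sum_i\langle p-q,n_i\rangle=0$ then forces the $2$--$1$ split of boundary incidences (and, as a bonus, gives $\sum_i m_i>0$, so the minimal triangle is nondegenerate). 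This is a genuinely different and more rigorous route, and it cleanly explains why uniqueness holds for $D(p,q)$ and $t(p,q)$ but fails for $Q(p,q)$ (the square's two opposite normal pairs decouple, leaving a one-parameter sliding family). One small inaccuracy: the general-position assumption for squares only forbids equal $x$- or $y$-coordinates; it does not forbid $|p_x-q_x|=|p_y-q_y|$, so your claim that the maximum is attained by exactly one coordinate is not justified. This is harmless for the statement being proved, since in the tied case the smallest square has $p$ and $q$ at opposite corners, and a corner lies on two sides, so they are still on opposite sides; but you should either weaken that sentence or handle the tie explicitly.
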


\begin{figure}[htb]
  \centering
\setlength{\tabcolsep}{0in}
  $\begin{tabular}{ccc}
\multicolumn{1}{m{.33\columnwidth}}{\centering\includegraphics[width=.22\columnwidth]{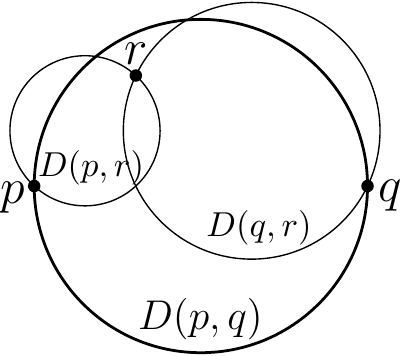}}
&\multicolumn{1}{m{.33\columnwidth}}{\centering\includegraphics[width=.2\columnwidth]{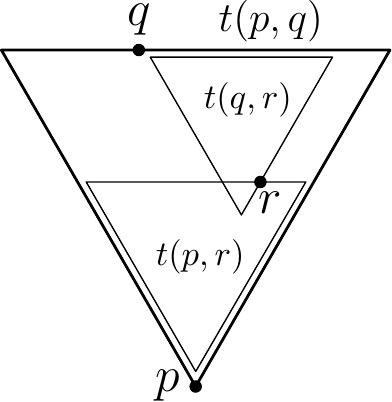}} &\multicolumn{1}{m{.33\columnwidth}}{\centering\includegraphics[width=.21\columnwidth]{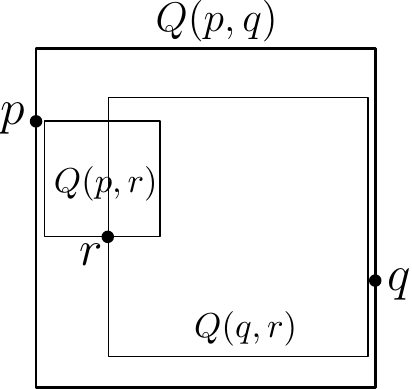}}
\end{tabular}$
  \caption{Illustration of Observation~\ref{obs1}.}
\label{mst-in-GS-fig}
\end{figure}
Given a shape $S\in\{\ddisc, \trid, \sqr\}$, we define an order on the homothets of $S$. Let $S_1$ and $S_2$ be two homothets of $S$. We say that $S_1\prec S_2$ if the area of $S_1$ is less than the area of $S_2$. Similarly, $S_1\preceq S_2$ if the area of $S_1$ is less than or equal to the area of $S_2$. We denote the homothet with the larger area by $\max\{S_1, S_2\}$. As illustrated in Figure~\ref{mst-in-GS-fig}, if $S(p,q)$ contains a point $r$, then both $S(p,r)$ and $S(q,r)$ have smaller area than $S(p,q)$. Thus, we have the following observation:
 
\begin{observation}
\label{obs1}
 If $S(p,q)$ contains a point $r$, then $\max\{S(p,r), \allowbreak S(q,r)\}\prec S(p,q)$.
\end{observation}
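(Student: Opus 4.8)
The plan is to prove the equivalent (and slightly more detailed) statement that \emph{both} $S(p,r)\prec S(p,q)$ and $S(q,r)\prec S(p,q)$ hold whenever $r$ lies in the interior of $S(p,q)$; the claim about the maximum is then immediate. I would treat the diametral-disk separately from the two shrinkable shapes, since the diametral disk is rigidly determined by its diameter and admits no genuine shrinking.

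First, the diametral-disk case. Here $D(p,q)$ is the disk with $pq$ as a diameter, so by Thales' theorem a point $r$ in its interior subtends an angle $\angle prq>\pi/2$. Consequently $pq$ is the strictly longest side of the triangle $prq$, whence $|pr|<|pq|$ and $|qr|<|pq|$. Since $D(p,r)$ and $D(q,r)$ are the disks with diameters $pr$ and $qr$, their radii $|pr|/2$ and $|qr|/2$ are strictly smaller than the radius $|pq|/2$ of $D(p,q)$, so $D(p,r)\prec D(p,q)$ and $D(q,r)\prec D(p,q)$. This argument in fact also works when $r$ lies on the boundary circle, where $\angle prq=\pi/2$.

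Next, the equilateral-triangle and square cases, which I would handle uniformly by the shrinkability property recorded in Observation~\ref{shrink-triangle-obs}. Since $r$ lies in $S(p,q)$, this homothet contains the two points $p$ and $r$; by shrinkability there is a homothet $S''\subseteq S(p,q)$ having $p$ and $r$ on its boundary. As $S(p,r)$ is by definition a smallest homothet with $p$ and $r$ on its boundary, we get $S(p,r)\preceq S''$. It then remains to show $S''\prec S(p,q)$: the point $r$ lies on $\partial S''$ but in the interior of $S(p,q)$, so $S''\neq S(p,q)$, and two distinct nested homothets of a convex shape must have different scale factors, forcing the inner one to have strictly smaller area. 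Hence $S(p,r)\preceq S''\prec S(p,q)$, and the symmetric argument applied to $q$ and $r$ gives $S(q,r)\prec S(p,q)$.

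The only delicate point, and the step I would be most careful about, is the strictness: I must ensure that the shrunk homothet $S''$ is \emph{properly} smaller than $S(p,q)$ rather than merely contained in it. This is exactly where the hypothesis that $r$ is interior to $S(p,q)$ is used — it makes $r\in\partial S''$ an interior point of $S(p,q)$, ruling out $S''=S(p,q)$ and thereby forcing a strict drop in area. For the diametral disk the corresponding strictness comes for free from the obtuse-angle inequality. Combining the two cases yields $\max\{S(p,r),S(q,r)\}\prec S(p,q)$, as claimed.
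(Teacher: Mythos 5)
Your proof is correct. Note, however, that the paper does not actually prove Observation~\ref{obs1}: it is asserted as self-evident with only a pointer to Figure~\ref{mst-in-GS-fig}, so there is no argument of the authors' to compare yours against. What you supply is a legitimate filling-in of that gap, and both halves hold up: the Thales/longest-side argument gives the disk case (including $r$ on the boundary circle), and the shrinkability route via Observation~\ref{shrink-triangle-obs} together with the fact that a proper translate of a compact convex set cannot be contained in itself gives the strict inequality for triangles and squares. The one point worth flagging is that your strictness argument for the shrinkable shapes genuinely needs $r$ to be \emph{interior} to $S(p,q)$; if $r$ were allowed on the boundary, $S''$ could coincide with $S(p,q)$ and the conclusion could fail (e.g.\ $r$ on the side of $t(p,q)$ opposite the corner at $p$). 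This is harmless here because the paper's general position assumptions and its use of the observation in Lemma~\ref{mst-in-GS} are consistent with reading ``contains'' as ``contains in the interior,'' and for equilateral triangles the general position condition on slopes in fact excludes a third point on the boundary altogether; still, it would be worth stating that reading explicitly at the outset.
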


\begin{definition}
 Given a point set $P$ and a shape $S\in\{\ddisc, \trid, \sqr\}$, we say that $P$ is in ``general position'' with respect to $S$ if
\begin{description}
 \item[$S=\ddisc$:] no four points of $P$ lie on the boundary of any diametral disk defined by any two points of $P$.
 \item[$S=\trid$:] the line passing through any two points of $P$ does not make angles $0^\circ$, $60^\circ$, or $120^\circ$ with the horizontal. This implies that no four points of $P$ are on the boundary of any homothet of $\trid$.
 \item[$S=\sqr$:] (i) no two points in $P$ have the same $x$-coordinate or the same $y$-coordinate, and (ii) no four points of $P$ lie on the boundary of any homothet of $\sqr$.
\end{description}
\end{definition}

Given a point set $P$ which is in general position with respect to a given shape $S\in\{\ddisc, \trid, \sqr\}$, let $K_S(P)$ be the complete edge-weighted geometric graph on $P$. For each edge $e=(p,q)$ in $K_S(P)$, we define $S(e)$ to be the shape $S(p,q)$, i.e., a smallest homothet of $S$ having $p$ and $q$ on its boundary. We say that $S(e)$ {\em represents} $e$, and vice versa. Furthermore, we assume that the weight $w(e)$ (resp. $w(p,q)$) of $e$ is equal to the area of $S(e)$. Thus,
$$w(p,q)<w(r,s) \quad\text{ if and only if }\quad S(p,q)\prec S(r,s).$$
Note that $\G{S}{P}$ is a subgraph of $K_S(P)$, and has an edge $(p,q)$ iff $S(p,q)$ does not contain any point of $P\setminus\{p,q\}$.

\begin{lemma}
\label{mst-in-GS}
Let $P$ be a set of $n$ points in the plane which is in general position with respect to a given shape $S\in\{\ddisc, \trid, \sqr\}$. Then, any minimum spanning tree of $K_S(P)$ is a subgraph of $\G{S}{P}$. 
\end{lemma}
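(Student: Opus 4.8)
The plan is to reduce the statement to the standard minimum-spanning-tree exchange (cut) argument, letting Observation~\ref{obs1} carry all of the geometric weight. First I would argue by contradiction: take any minimum spanning tree $T$ of $K_S(P)$ and suppose some edge $e=(p,q)\in T$ fails to lie in $\G{S}{P}$. By the characterization of $\G{S}{P}$ recalled just before the lemma, this failure means precisely that $S(p,q)$ contains a point $r\in P\setminus\{p,q\}$.

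The crucial step is to observe that the presence of $r$ makes $e$ replaceable by something strictly lighter. Applying Observation~\ref{obs1} to $p,q,r$ gives $\max\{S(p,r),S(q,r)\}\prec S(p,q)$, so that $w(p,r)<w(p,q)$ and $w(q,r)<w(p,q)$ at the same time; thus $r$ can be reattached to either endpoint of $e$ by a cheaper edge. I would then delete $e$ from $T$, which splits $T$ into two subtrees $T_p\ni p$ and $T_q\ni q$ whose vertex sets partition $P$, and locate $r$ in whichever subtree contains it. If $r\in T_p$, the edge $(r,q)$ crosses the cut, so $T-e+(r,q)$ is again a spanning tree, and it is strictly lighter than $T$ because $w(r,q)<w(p,q)$; symmetrically, if $r\in T_q$, then $T-e+(r,p)$ is a strictly lighter spanning tree. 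Either possibility contradicts the minimality of $T$, forcing every edge of $T$ into $\G{S}{P}$.

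Honestly, there is no serious obstacle here: all of the difficulty is packaged into Observation~\ref{obs1}, and the shape $S$ enters nowhere else, so the argument is uniform over $S\in\{\ddisc,\trid,\sqr\}$. The only two points that need a word of care are that the replacement edge genuinely reconnects the two components---which is immediate once $r$ has been placed in the correct subtree---and that the inequality from Observation~\ref{obs1} is strict, which is exactly what guarantees a true decrease in total weight and hence a genuine contradiction rather than merely another tree of equal weight.
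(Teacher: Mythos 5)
Your proposal is correct and follows essentially the same route as the paper: assume an MST edge $e=(p,q)$ is missing from $\G{S}{P}$, invoke Observation~\ref{obs1} to get a strictly lighter edge from the witness point $r$ to each endpoint, and perform an exchange to contradict minimality. The only difference is that you make explicit which of $(p,r)$ or $(q,r)$ reconnects the two subtrees, a detail the paper compresses into the word ``either''.
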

\begin{proof}
 The proof is by contradiction. Assume there exists an edge $e=(p,q)$ in a minimum spanning tree $T$ of $K_S(P)$ such that $e\notin \G{S}{P}$. Since $(p,q)$ is not an edge in $\G{S}{P}$, $S(p,q)$ contains a point $r$ such that $r\in P\setminus\{p,q\}$. By Observation~\ref{obs1}, $\max\{S(p,r),S(q,r)\}\prec S(p,q)$. Thus, $w(p,r)<w(p,q)$ and $w(q,r)<w(p,q)$. By replacing the edge $(p,q)$ in $T$ with either $(p,r)$ or $(q,r)$, we obtain a spanning tree in $K_S(P)$ which is smaller than $T$. This contradicts the minimality of $T$.
\end{proof}

\begin{lemma}
\label{cycle-lemma}
Let $G$ be an edge-weighted graph with edge set $E$ and edge-weight function $w:E\rightarrow\mathbb{R^+}$. For any cycle $C$ in $G$, if the maximum-weight edge in $C$ is unique, then that edge is not in any minimum spanning tree of $G$.
\end{lemma}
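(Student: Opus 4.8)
The plan is to prove this via the standard exchange argument underlying the \emph{cycle property} of minimum spanning trees. I would argue by contradiction: suppose the unique maximum-weight edge $e=(p,q)$ of the cycle $C$ belongs to some minimum spanning tree $T$ of $G$. The goal is to exhibit a spanning tree of strictly smaller total weight, contradicting the minimality of $T$.

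First I would delete $e$ from $T$. Since $T$ is a spanning tree, removing the single edge $e$ breaks it into exactly two connected components, which induce a partition of the vertex set into $V_1$ and $V_2$ with $p\in V_1$ and $q\in V_2$. Next I would return to the cycle $C$: apart from $e$ itself, the remaining edges of $C$ form a path joining $p$ to $q$. Because this path starts in $V_1$ and ends in $V_2$, it must cross the cut $(V_1,V_2)$ at least once, so there exists an edge $e'\in C$ with $e'\neq e$ having one endpoint in $V_1$ and the other in $V_2$. Now I would use the hypothesis that $e$ is the \emph{unique} maximum-weight edge of $C$: since $e'$ lies on $C$ and $e'\neq e$, this forces $w(e')<w(e)$. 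Finally, the graph $T'=(T\setminus\{e\})\cup\{e'\}$ reconnects the two components (as $e'$ crosses the cut), hence $T'$ is again a spanning tree, and $w(T')=w(T)-w(e)+w(e')<w(T)$, the desired contradiction. Therefore no minimum spanning tree contains $e$.

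The only step that requires care is establishing the existence of the second cut-crossing edge $e'$ on the cycle; this follows from the elementary fact that any path joining the two sides of a vertex-partition must traverse at least one edge crossing the cut. The uniqueness assumption in the statement is precisely what upgrades the comparison from $w(e')\le w(e)$ to the strict inequality $w(e')<w(e)$, which is what makes the swap a genuine improvement rather than a mere tie; without uniqueness one could only conclude that $e$ need not be in \emph{every} minimum spanning tree. I expect the whole argument to be short, with the choice of the cut from $T\setminus\{e\}$ and the transfer of the strict inequality being the two points worth stating explicitly.
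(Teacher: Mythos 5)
Your proof is correct and follows essentially the same exchange argument as the paper: remove $e$ from $T$, find another cycle edge $e'$ crossing the resulting cut, and swap it in to obtain a strictly lighter spanning tree. The only difference is that you explicitly justify the existence of the cut-crossing edge $e'$ via the path argument, a step the paper's proof asserts without elaboration.
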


\begin{proof}
The proof is by contradiction. Let $e=(u,v)$ be the unique maximum-weight edge in a cycle $C$ in $G$, such that $e$ is in a minimum spanning tree $T$ of $G$. Let $T_u$ and $T_v$ be the two trees obtained by removing $e$ from $T$. Let $e'=(x,y)$ be an edge in $C$ which connects a vertex $x\in T_u$ to a vertex $y\in T_v$. By assumption, $w(e')< w(e)$. Thus, in $T$, by replacing $e$ with $e'$, we obtain a tree $T'=T_u\cup T_v \cup\{(x,y)\}$ in $G$ such that $w(T')<w(T)$. This contradicts the minimality of $T$. 
\end{proof}

Recall that $t(p,q)$ is the smallest homothet of $\trid$ which has $p$ and $q$ on its boundary. Similarly, let $t'(p, q)$ denote the smallest upward equilateral-triangle $\triu$ having $p$ and $q$ on its boundary. Note that $t'(p, q)$ is uniquely defined by $p$ and $q$, and it has one of $p$ and $q$ on a corner and the other point is on the side opposite to that corner. In addition the area of $t'(p, q)$ is equal to the area of $t(p, q)$. 

\begin{wrapfigure}{r}{0.35\textwidth}
\vspace{-20pt}
 \begin{center}
\includegraphics[width=.3\textwidth]{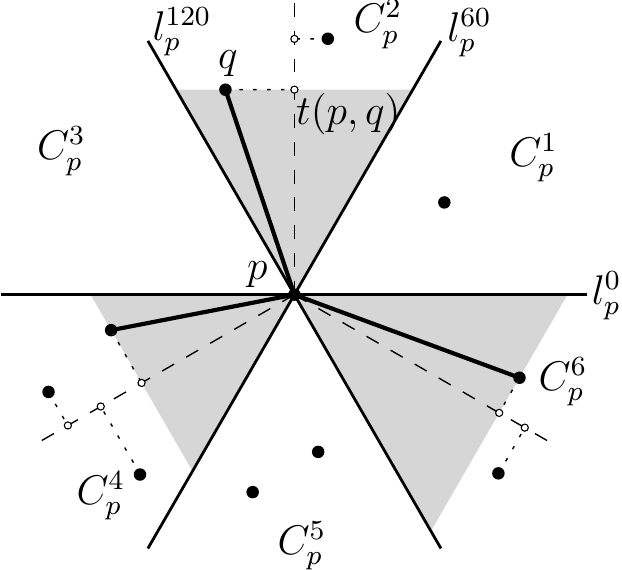}
  \end{center}
\vspace{-5pt}
  \caption{The construction of $\G{\trid}{P}$.}
\label{cones}
\vspace{-5pt}
\end{wrapfigure}

$\G{\trids}{P}$ is equal to the triangular-distance Delaunay graph $\text{\em TD-DG}(P)$, which is in turn equal to a half theta-six graph $\frac{1}{2}\Theta_6(P)$~\cite{Bonichon2010}. 
A half theta-six graph on $P$, and equivalently $\G{\trids}{P}$, can be constructed in the following way. For each point $p$ in $P$, let $l_p$ be the horizontal line through $p$. Define $l_p^{\gamma}$ as the line obtained by rotating $l_p$ by $\gamma$-degrees in counter-clockwise direction around $p$. Thus, $l_p^0=l_p$. Consider three lines $l_p^{0}$, $l_p^{60}$, and $l_p^{120}$ which partition the plane into six disjoint cones with apex $p$. Let $C_p^1, \dots, C_p^6$ be the cones in counter-clockwise order around $p$ as shown in Figure~\ref{cones}. $C_p^1,C_p^3,C_p^5$ will be referred to as {\em odd cones}, and $C_p^2,C_p^4,C_p^6$ will be referred to as {\em even cones}. For each even cone $C_p^i$, connect $p$ to the ``nearest'' point $q$ in $C_p^i$. The {\em distance} between $p$ and $q$, is defined as the Euclidean distance between $p$ and the orthogonal projection of $q$ onto the bisector of $C_p^i$. See Figure~\ref{cones}. In other words, the nearest point to $P$ in $\cone{i}{p}$ is a point $q$ in $\cone{i}{p}$ which minimizes the area of $t(p,q)$. The resulting graph is the half theta-six graph which is defined by even cones \cite{Bonichon2010}. Moreover, the resulting graph is $\G{\trids}{P}$ which is defined with respect to the homothets of $\trid$. By considering the odd cones, $\G{\trius}{P}$ is obtained. By considering the odd cones and the even cones, $\GUD(P)$\textemdash which is equal to $\Theta_6(P)$\textemdash is obtained. Note that $\GUD(P)$ is the union of $\G{\trids}{P}$ and $\G{\trius}{P}$. 

Let $X(p,q)$ be the regular hexagon centered at $p$ which has $q$ on its boundary, and its sides are parallel to $l_p^0$, $l_p^{60}$, and $l_p^{120}$. Then, we have the following observation:
\begin{observation}
\label{obs2}
If $X(p,q)$ contains a point $r$, then $t(p,r)\prec t(p,q)$.
\end{observation}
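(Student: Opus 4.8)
The plan is to show that the hexagon $X(p,q)$ is exactly a level set of the function $r\mapsto \mathrm{area}(t(p,r))$, so that its interior consists precisely of the points $r$ with $t(p,r)\prec t(p,q)$. Concretely, I would prove that for every point $r$ the size of the smallest downward triangle $t(p,r)$ is a strictly increasing function of the length of the orthogonal projection of $r-p$ onto the bisector of the cone $C_p^i$ that contains $r$, with the same proportionality constant in all six cones. Granting this, the observation is immediate: the six bounding segments of the regular hexagon $X(p,q)$ are, by construction, parallel to $l_p^0, l_p^{60}, l_p^{120}$, so that the segment bounding $C_p^i$ is perpendicular to the bisector of $C_p^i$; since $X(p,q)$ is regular and centered at $p$, all six sides lie at the common apothem distance $\rho$ from $p$, and $\rho$ is the projection length attained by $q$. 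A point $r$ in the interior of $X(p,q)$ therefore has cone-bisector projection strictly less than $\rho$, so $\mathrm{area}(t(p,r))<\mathrm{area}(t(p,q))$, i.e. $t(p,r)\prec t(p,q)$.

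The core step is the cone-by-cone computation of $t(p,r)$. By Observation~\ref{shrink-triangle-obs}, $t(p,r)$ always has one of $p,r$ at a corner and the other on the opposite side, so I would split on which cone $C_p^i$ contains $r$ and determine, in each case, which point sits at the corner. In the even cones (those already used in the half-theta-six construction recalled above) $p$ is the corner and $r$ lies on the opposite side, and the triangle's height equals the bisector extent of $r-p$, matching the ``distance to the nearest point'' defined earlier. In the odd cones the roles reverse: $r$ occupies the corner and $p$ lies on the opposite side; a short angle check (using that the interior wedge of a corner of $\trid$ spans exactly $60^\circ$) shows that $r-p$ then lies in the $60^\circ$ wedge opposite that side, and the perpendicular distance from the corner to the opposite side is again proportional to the projection of $r-p$ onto the cone bisector. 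Because $\trid$ is equilateral and the six cones are congruent rotations of one another, the proportionality constant is the same in every cone, and the level sets of $r\mapsto\mathrm{area}(t(p,r))$ are therefore concentric regular hexagons of the stated orientation.

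The step I expect to be the main obstacle is precisely the odd-cone case: verifying cleanly that when $r$ lies in an odd cone the smallest triangle is realized with $r$ at a corner and $p$ on the opposite side (rather than the other way around), and that its height is still governed by the projection onto the same bisector with the identical constant. The even-cone case is essentially the content of the preceding paragraph on the half-theta-six graph, so the real work is in showing that the odd cones behave symmetrically. I would also be careful about the meaning of ``contains'': the inequality $t(p,r)\prec t(p,q)$ is strict, so I interpret the hypothesis as $r$ lying in the open interior of $X(p,q)$, which is consistent with the general-position assumption that keeps $r$ off the boundary level $\rho$.
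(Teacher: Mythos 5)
Your argument is correct: in every one of the six cones the area of $t(p,r)$ equals a fixed constant times the square of the projection of $r-p$ onto that cone's bisector (both when $p$ sits at the corner, in the even cones, and when $r$ does, in the odd cones), so $X(p,q)$ is exactly the level set you describe and interior points give strictly smaller triangles. The paper states this observation without proof, relying on precisely this triangular-distance interpretation from its half-$\Theta_6$ discussion, so your write-up supplies the intended justification; your reading of ``contains'' as ``contains in the interior'' also matches how the paper invokes the observation later.
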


\section{Strong Matching in $\G{S}{P}$}
\label{algorithm-section}

Given a point set $P$ in the plane which is in general position with respect to a given shape $S\in\{\ddisc, \trid, \sqr\}$, in this section we present an algorithm which computes a strong matching in $\G{S}{P}$. Recall that $K_S(P)$ is the complete edge-weighted graph on $P$ with the weight of each edge $e$ is equal to the area of $S(e)$, where $S(e)$ is a smallest homothet of $S$ representing $e$. Let $T$ be a minimum spanning tree of $K_S(P)$. By Lemma~\ref{mst-in-GS}, $T$ is a subgraph of $\G{S}{P}$. For each edge $e\in T$ we denote by $T(e^+)$ the set of all edges in $T$ whose weight is at least $w(e)$. Moreover, we define the {\em influence set} of $e$, as the set of all edges in $T(e^+)$ whose representing shapes overlap with $S(e)$, i.e.,
$$\Inf{e}=\{e': e'\in T(e^+), S(e')\cap S(e)\neq \emptyset\}.$$

Note that $\Inf{e}$ is not empty, as $e\in \Inf{e}$. Consequently, we define the {\em influence number} of $T$ to be the maximum size of a set among the influence sets of edges in $T$, i.e.,
$$\Inf{T}=\max\{|\Inf{e}|: e\in T\}.$$

Algorithm~\ref{alg1} receives $\G{S}{P}$ as input and computes a strong matching in $\G{S}{P}$ as follows. The algorithm starts by computing a minimum spanning tree $T$ of $\G{S}{P}$, where the weight of each edge is equal to the area of its representing shape. Then it initializes a forest $F$ by $T$, and a matching $\mathcal{M}$ by an empty set. Afterwards, as long as $F$ is not empty, the algorithm adds to $\mathcal{M}$, the smallest edge $\emin$ in $F$, and removes the influence set of $e$ from $F$. Finally, it returns $\mathcal{M}$.
\begin{algorithm}                      
\caption{\SMGG$(\G{S}{P})$}          
\label{alg1} 
\begin{algorithmic}[1]
      \State $T\gets \MST(G_S(P))$
      \State $F\gets T$
      \State $\mathcal{M}\gets \emptyset$
      \While {$F\neq \emptyset$}
	  \State $\emin\gets $ smallest edge in $F$
	  \State $\mathcal{M}\gets \mathcal{M}\cup \{\emin\}$
	  \State $F\gets F - \Inf{\emin}$
	  \EndWhile
    \State \Return $\mathcal{M}$
\end{algorithmic}
\end{algorithm}
\begin{theorem}
\label{GS-thr}
Given a set $P$ of $n$ points in the plane and a shape $S\in\{\ddisc, \trid, \sqr\}$, Algorithm~\ref{alg1} computes a strong matching of size at least $\lceil\frac{n-1}{\emph{Inf}(T)}\rceil$ in $\G{S}{P}$, where $T$ is a minimum spanning tree of $\G{S}{P}$. 
\end{theorem}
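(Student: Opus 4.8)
The plan is to establish two things separately: first that the output $\mathcal{M}$ is a strong matching in $\G{S}{P}$, and second that $|\mathcal{M}|\ge \lceil (n-1)/\Inf{T}\rceil$. Throughout I will use that $T$ is a spanning tree of the $n$-point set $P$ and hence has exactly $n-1$ edges, and that by Lemma~\ref{mst-in-GS} every edge of $T$ — in particular every edge the algorithm selects — already lies in $\G{S}{P}$, so $\mathcal{M}\subseteq T\subseteq \G{S}{P}$ is a set of genuine edges of the graph.

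For the strong-matching part I would list the selected edges $e_1,e_2,\dots,e_m$ in the order in which the while-loop adds them to $\mathcal{M}$, and show that for any $i<j$ the representing shapes satisfy $S(e_i)\cap S(e_j)=\emptyset$. The argument rests on two monotonicity facts. Since $F$ only shrinks and $e_j$ is still present when it is selected in the later iteration $j$, the edge $e_j$ must have been in $F$ at the start of iteration $i$ and must have \emph{survived} the deletion $F\gets F-\Inf{e_i}$; hence $e_j\notin\Inf{e_i}$. On the other hand $e_i$ is the smallest edge of $F$ at iteration $i$, so $w(e_j)\ge w(e_i)$, which places $e_j$ in $T(e_i^+)$. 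Reading off the definition $\Inf{e_i}=\{e': e'\in T(e_i^+),\, S(e')\cap S(e_i)\neq\emptyset\}$, the only way an edge of $T(e_i^+)$ can fail to lie in $\Inf{e_i}$ is for its shape to be disjoint from $S(e_i)$; thus $S(e_j)\cap S(e_i)=\emptyset$. This shows the shapes $\{S(e):e\in\mathcal{M}\}$ are pairwise disjoint. I would then note that disjoint shapes force disjoint endpoints — each endpoint of an edge lies on the boundary of its representing shape — so $\mathcal{M}$ shares no vertices and is a legitimate matching, and by definition a strong one.

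For the size bound I would run a counting argument on the edges of $T$. In iteration $i$ the algorithm deletes the set $R_i=F\cap\Inf{e_i}$ from $F$; since $e_i\in\Inf{e_i}$ we have $e_i\in R_i$, so every iteration removes at least the selected edge, $F$ strictly decreases, and the loop terminates after $m=|\mathcal{M}|$ iterations having deleted all of $T$. Because $R_i\subseteq\Inf{e_i}$, we have $|R_i|\le|\Inf{e_i}|\le\Inf{T}$, and the $R_i$ partition the $n-1$ edges of $T$. Hence $n-1=\sum_{i=1}^{m}|R_i|\le m\cdot\Inf{T}$, which rearranges to $m\ge (n-1)/\Inf{T}$; as $m$ is an integer this gives $|\mathcal{M}|=m\ge\lceil (n-1)/\Inf{T}\rceil$.

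The proof is short, so the real work is bookkeeping, and the step I expect to need the most care is the disjointness argument — precisely because $\Inf{e}$ is defined relative to the \emph{full} tree $T$ (through $T(e^+)$) while the deletions happen in the \emph{shrinking} forest $F$. The crux is to reconcile these two viewpoints: one must argue that at the moment $e_i$ is chosen, every edge that will ever be selected afterward is simultaneously still in $F$ and of weight at least $w(e_i)$, hence a member of $T(e_i^+)$ against which the $\Inf$-test was applied. Once this alignment is pinned down, both strongness and the size bound follow immediately, and the estimate is uniform over the three shapes $S\in\{\ddisc,\trid,\sqr\}$ — the specific shape enters only later, through the bounds on $\Inf{T}$ derived in the subsequent sections.
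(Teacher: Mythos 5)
Your proof is correct and follows essentially the same route as the paper's: disjointness of the selected shapes comes from the fact that a later-selected edge survived the deletion of $\Inf{e_i}$ while also lying in $T(e_i^+)$ (being no lighter than the current minimum), and the size bound comes from charging at most $\Inf{T}$ deleted tree edges to each selected edge. Your explicit alignment of the shrinking forest $F$ with the static set $T(e_i^+)$, and the remark that disjoint closed shapes force distinct endpoints, only make explicit what the paper leaves implicit.
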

\begin{proof}
Let $\mathcal{M}$ be the matching returned by Algorithm~\ref{alg1}. First we show that $\mathcal{M}$ is a strong matching. If $\mathcal{M}$ contains one edge, then trivially, $\mathcal{M}$ is a strong matching. Consider any two edges $e_1$ and $e_2$ in $\mathcal{M}$. Without loss of generality assume that $e_1$ is considered before $e_2$ in the {\sf while} loop. At the time $e_1$ is added to $\mathcal{M}$, the algorithm removes from $F$, the edges in $\Inf{e_1}$, i.e., all the edges whose representing shapes intersect $S(e_1)$. Since $e_2$ remains in $F$ after the removal of $\Inf{e_1}$, $e_2\notin\Inf{e_1}$. This implies that $S(e_1)\cap S(e_2)=\emptyset$, and hence $\mathcal{M}$ is a strong matching.

In each iteration of the {\sf while} loop we select $\emin$ as the smallest edge in $F$, where $F$ is a subgraph of $T$. Then, all edges in $F$ have weight at least $w(e)$. Thus, $F\subseteq T(\emin^+)$; which implies that the set of edges in $F$ whose representing shapes intersect $S(\emin)$ is a subset of $\Inf{\emin}$. Therefore, in each iteration of the {\sf while} loop, out of at most $|\Inf{e}|$-many edges of $T$, we add one edge to $\mathcal{M}$. Since $|\Inf{\emin}|\le \Inf{T}$ and $T$ has $n-1$ edges, we conclude that $|\mathcal{M}|\ge\lceil\frac{n-1}{\Inf{T}}\rceil$.
\end{proof}

\paragraph{Remark}
Let $T$ be the minimum spanning tree computed by Algorithm~\ref{alg1}. Let $e=(u,v)$ be an edge in $T$. Recall that $T(e^+)$ contains all the edges of $T$ whose weight is at least $w(e)$. We define the {\em degree} of $e$ as $\dg{e}=\dg{u}+\dg{v}-1$, where $\dg{u}$ and $\dg{v}$ are the number of edges incident on $u$ and $v$ in $T(e^+)$, respectively. Note that all the edges incident on $u$ or $v$ in $T(e^+)$ are in the influence set of $e$. Thus, $|\Inf{e}|\ge \dg{e}$, and consequently $\Inf{T}\ge \dg{e}$.

\section{Strong Matching in $\G{\ddiscs}{P}$}
\label{Gabriel-section}
In this section we consider the case where $S$ is a diametral-disk $\ddisc$. Recall that $\G{\ddiscs}{P}$ is an edge-weighted geometric graph, where the weight of an edge $(p,q)$ is equal to the area of $D(p,q)$. $\G{\ddiscs}{P}$ is equal to the Gabriel graph, $GG(P)$. We prove that $\G{\ddiscs}{P}$, and consequently $GG(P)$, has a strong diametral-disk matching of size at least $\lceil\frac{n-1}{17}\rceil$. 

We run Algorithm~\ref{alg1} on $\G{\ddiscs}{P}$ to compute a matching $\mathcal{M}$. By Theorem~\ref{GS-thr}, $\mathcal{M}$ is a strong matching of size at least $\lceil\frac{n-1}{\Inf{T}}\rceil$, where $T$ is a minimum spanning tree in $\G{\ddiscs}{P}$. By Lemma~\ref{mst-in-GS}, $T$ is a minimum spanning tree of the complete graph $K_{\ddiscs}(P)$. Observe that $T$ is a Euclidean minimum spanning tree for $P$ as well. In order to prove the desired lower bound, we show that $\Inf{T}\le 17$. Since $\Inf{T}$ is the maximum size of a set among the
influence sets of edges in $T$, it suffices to show that for every edge $e$ in $T$, the influence set of $e$ contains at most 17 edges. 
\begin{lemma}
\label{disk-inf-lemma}
Let $T$ be a minimum spanning tree of $\G{\ddiscs}{P}$, and let $e$ be any edge in $T$. Then, $|\emph{Inf}(e)|\le 17$.
\end{lemma}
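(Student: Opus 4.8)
The plan is to exploit that $T$ is a Euclidean minimum spanning tree of $P$: the weight of an edge $(p,q)$ is the area $\pi(|pq|/2)^2$ of its diametral disk, which is monotone in the Euclidean length $|pq|$, so the tree minimizing total diametral-disk area is a Euclidean MST. I will use two facts. First, by Lemma~\ref{mst-in-GS}, $T$ is a subgraph of $\G{\ddiscs}{P}=GG(P)$, so the diametral disk of every edge of $T$ is empty of points of $P$. Second, the standard angle property of a Euclidean MST: any two edges sharing an endpoint make an angle greater than $60^\circ$, so every vertex has degree at most $6$ in $T$. Fix the edge $e=(u,v)$, write $\ell=|uv|$, let $m$ be the midpoint of $uv$ and $D=D(u,v)$ the disk of radius $\ell/2$ centered at $m$; after scaling assume $\ell=1$. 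By definition every edge of $\Inf{e}$ has length at least $\ell$ and its diametral disk meets $D$.

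First I would split $\Inf{e}$ into the edges sharing an endpoint with $e$ and those that do not. Each edge of the first group is incident to $u$ or to $v$ and has length at least $\ell$, so it is counted by $\dg{u}+\dg{v}-1$ in $T(e^+)$; by the degree bound this is at most $6+6-1=11$. This recovers, and matches, the estimate in the Remark. The remaining task is therefore to show that at most $6$ edges of $\Inf{e}$ are disjoint from $\{u,v\}$.

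For such a non-incident edge $(a,b)$ we have $|ab|\ge 1$, its empty disk $D(a,b)$ meets $D$, and, since $a,b,u,v$ are distinct and $D(a,b)$ is empty, both $u$ and $v$ lie outside $D(a,b)$. Hence $\partial D(a,b)$ crosses $\partial D$ and cuts off an arc of $\partial D$ that avoids the antipodal pair $u,v$ (recall $uv$ is a diameter of $D$). I would associate to $(a,b)$ this arc, equivalently a direction at $m$, and then partition the directions around $m$ into six $60^\circ$ sectors and argue that two non-incident influence edges cannot be charged to the same sector: if they were, their two empty disks would both reach into $D$ from the same side while keeping $u,v$ outside, and the emptiness of each disk (it must exclude the endpoints of the other edge) together with the $60^\circ$-angle condition at the shared neighbourhood would be violated.

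The delicate case, where I expect the real difficulty, is a non-incident edge $(a,b)$ much longer than $e$, whose large diametral disk reaches $D$ from far away with both endpoints $a,b$ far from $m$, so that one cannot simply localize an endpoint near $D$. Here the requirement that the antipodal pair $u,v$ both lie outside the large disk forces the disk to meet $D$ along an arc essentially opposite to $u$ and $v$, i.e. to approach from a direction nearly perpendicular to the segment $uv$, leaving only the two sides of the line through $u$ and $v$ available. I would then bound the number of these ``far'' edges by combining this directional restriction with emptiness, using that the endpoints of one far edge must avoid the empty disks of the others, so that their count stays within the budget of $6$. Putting the two groups together gives $|\Inf{e}|\le 11+6=17$, which is the assertion.
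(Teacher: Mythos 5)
Your reduction of the incident edges to the degree bound is fine: every edge of $T$ incident to $u$ or $v$ has its diametral disk meeting $D(u,v)$, and the $60^\circ$ angle property of a Euclidean MST caps their number at $\dg{u}+\dg{v}-1\le 11$ (this is exactly the Remark following Theorem~\ref{GS-thr}). The gap is the entire second half: the claim that at most $6$ edges of $\Inf{e}$ avoid $\{u,v\}$ is never proved. The charging scheme you describe is not even well defined: a non-incident edge $(a,b)$ with $|ab|$ large can have $D(a,b)$ meet $\partial D(u,v)$ in an arc of nearly $180^\circ$ while still excluding $u$, $v$ and the centre (take a very large disk whose boundary passes just above $u$ and $v$), so such an edge does not determine a single $60^\circ$ sector, and the statement that ``two non-incident influence edges cannot be charged to the same sector'' has neither a precise meaning nor any supporting argument beyond the assertion that emptiness ``would be violated''. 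Your own closing paragraph concedes that the far-away disks are the real difficulty and offers only an intention (``I would then bound the number of these far edges\dots'') rather than a proof. As it stands the proposal establishes $|\Inf{e}|\le 11+(\text{number of non-incident influence edges})$ and nothing more.

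For comparison, the paper avoids any incident/non-incident split and any sector decomposition. It normalizes $D(u,v)$ to radius $1$ centred at $o$, takes the centre $c_i$ of each disk in $\mathcal{I}(e^+)$, replaces it by its radial projection onto the circle of radius $2$ whenever $\|c_i\|\ge 2$, and shows (Lemma~\ref{distance-lemma}, using Observation~\ref{no-point-in-circle-obs} together with Lemma~\ref{center-in-lemma}, which says no disk of $\mathcal{D}$ contains the centre of another) that the resulting points together with $o$, $u$, $v$ have pairwise distances at least $1$ and all lie in the closed disk of radius $2$. The Bateman--Erd\H{o}s packing bound then caps this configuration at $19$ points, giving at most $16$ intersecting disks and hence $|\Inf{e}|\le 17$. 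If you want to salvage your route you would need a genuine packing or angular-separation lemma for the non-incident disks, including the far ones; the paper's projection-plus-packing argument is precisely the device that makes ``the centres must be far apart'' quantitative without case analysis.
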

We will prove this lemma in the rest of this section. Recall that, for each two points $p,q\in P$, $D(p,q)$ is the closed diametral-disk with diameter $pq$. Let $\mathcal{D}$ denote the set of diametral-disks representing the edges in $T$. Since $T$ is a subgraph of $\G{\ddiscs}{P}$, we have the following observation:

\begin{observation}
\label{no-point-in-circle-obs}
 Each disk in $\mathcal{D}$ does not contain any point of $P$ in its interior.
\end{observation}

Recall that, for each two points $p,q\in P$, $D(p,q)$ is the closed diametral-disk with diameter $pq$. Let $\mathcal{D}$ denote the set of diametral-disks representing the edges in $T$. Since $T$ is a subgraph of $\G{\ddiscs}{P}$, we have the following observation:

\begin{observation}
\label{no-point-in-circle-obs}
 Each disk in $\mathcal{D}$ does not contain any point of $P$ in its interior.
\end{observation}

\begin{lemma}
\label{center-in-lemma}
 For each pair $D_i$ and $D_j$ of disks in $\mathcal{D}$, $D_i$ (resp. $D_j$) does not contain the center of $D_j$ (resp $D_i$).
\end{lemma}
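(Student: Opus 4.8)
The plan is to prove Lemma~\ref{center-in-lemma} by contradiction, exploiting the fact that $T$ is (as noted in the text) a Euclidean minimum spanning tree of $P$, so Lemma~\ref{cycle-lemma} applies with edge weights being the areas of the diametral disks, equivalently the squared Euclidean lengths of the edges. Suppose, for contradiction, that $D_i = D(p,q)$ contains the center $c_j$ of $D_j = D(r,s)$, where $c_j$ is the midpoint of the segment $rs$.

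First I would record the two structural facts I may use: by Observation~\ref{no-point-in-circle-obs}, neither $D_i$ nor $D_j$ contains any point of $P$ in its interior, so in particular $r$ and $s$ lie on or outside $D_i$, and $p$ and $q$ lie on or outside $D_j$. The key elementary geometric fact is that the center $c_j = \frac{r+s}{2}$ lies inside the diametral disk $D(p,q)$ exactly when the angle $\angle p c_j q$ is obtuse (i.e.\ $c_j$ sees the diameter $pq$ at an angle greater than $90^\circ$), which is equivalent to $|pq|^2 > |p c_j|^2 + |c_j q|^2$. The strategy is to compare the length of the forbidden edge $(p,q)$ against the lengths of the four ``cross'' edges $(p,r),(p,s),(q,r),(q,s)$ that would form a cycle in $T \cup \{(p,q),(r,s)\}$, and to show that $(p,q)$ is strictly the longest, contradicting Lemma~\ref{cycle-lemma}.

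The main work is the following length estimate. Writing $a = |c_j r| = |c_j s| = \tfrac12 |rs|$ for the common radius of $D_j$, the fact that $p,q \notin \mathrm{int}(D_j)$ gives $|c_j p| \ge a$ and $|c_j q| \ge a$; I then intend to show that each cross edge, say $(p,r)$, is strictly shorter than $(p,q)$. By the triangle inequality and the obtuseness condition at $c_j$, one gets $|pr| \le |p c_j| + |c_j r| = |pc_j| + a$, and I would bound this against $|pq|$ using the obtuse-angle inequality $|pq|^2 > |pc_j|^2 + |qc_j|^2 \ge |pc_j|^2 + a^2$. A clean way to finish is to observe that the point of $D_j$ farthest from $p$ along any direction is at distance at most $|pc_j| + a$ from $p$, while the obtuse condition forces $|pq| > \sqrt{|pc_j|^2 + a^2}$; combining these and using symmetrically the analogous bounds for $s, q$ shows all four cross edges are dominated by $(p,q)$. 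Hence in the cycle formed by $(p,q)$, $(r,s)$ and the connecting tree-path, the edge $(p,q)$ is the unique maximum-weight edge, so by Lemma~\ref{cycle-lemma} it cannot belong to $T$ — contradiction.

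The step I expect to be the main obstacle is making the length comparison fully rigorous rather than merely plausible: the naive triangle-inequality bounds $|pr| \le |pc_j| + a$ and the obtuse inequality $|pq|^2 > |pc_j|^2 + a^2$ do not immediately chain into $|pr| < |pq|$ without care, since $(|pc_j|+a)^2$ can exceed $|pc_j|^2 + a^2$. The resolution will require using \emph{both} no-point-in-disk conditions simultaneously — that $r,s$ lie outside $D_i$ as well as $p,q$ outside $D_j$ — to pin down the relative positions of the two disks tightly enough. Concretely, I anticipate needing to set up coordinates with $c_j$ at the origin and $rs$ along an axis, then argue that the constraint $c_j \in D_i$ together with $r,s \notin \mathrm{int}(D_i)$ is geometrically so restrictive that it is in fact impossible, yielding the contradiction directly without even invoking the cycle lemma. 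This direct impossibility argument may turn out to be the cleaner route, and I would pursue it as the primary line, keeping the cycle-lemma comparison as a fallback.
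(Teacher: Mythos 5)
There is a genuine gap, and you have in fact located it yourself without closing it. Your whole argument hinges on showing that every cross edge, e.g.\ $(p,r)$, is strictly shorter than the longest of the two tree edges, but the only tools you bring to bear are $|pr|\le |pc_j|+a$ and $|pq|^2>|pc_j|^2+a^2$, and as you concede these do not combine to give $|pr|<|pq|$ because $(|pc_j|+a)^2>|pc_j|^2+a^2$. The missing estimate is precisely the content of the paper's proof: it introduces the two intersection points $x,y$ of the circles $C_i$ and $C_j$, shows that the defining points are confined to the arcs $\widehat{x_ix}$, $\widehat{y_iy}$, $\widehat{x_jx}$, $\widehat{y_jy}$ cut off by the diameters through $x$ and $y$, deduces $|a_ia_j|\le|x_ix_j|$ because each arc subtends less than $\pi$, and then compares the triangles $\bigtriangleup x_ix_jc_i$ and $\bigtriangleup c_ix_jy$ (using that $c_i$ lies inside $C_j$, so $\angle yc_ix_j>\pi/2$) to get $|x_ix_j|<|x_jy|=|a_jb_j|$. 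Nothing in your proposal substitutes for this step. There is also a smaller but real defect in the setup: you aim to show that $(p,q)$, the edge whose disk contains the other center, is the unique maximum of the $4$-cycle, but that can only be true if $D(p,q)$ is the larger disk; the paper first fixes the larger circle and observes (implicitly, via the reduction that a small disk containing the big disk's center forces the big disk to contain the small disk's center) that one may assume the larger disk contains the center of the smaller.

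Your proposed ``primary line'' --- a direct geometric impossibility argument that avoids Lemma~\ref{cycle-lemma} altogether --- cannot succeed. The configuration you would need to rule out is geometrically realizable: take two unit-diameter disks whose centers are at distance $0.4$, with each diameter perpendicular to the line of centers. Each disk contains the other's center, and none of the four defining points lies in the interior of the other disk, so Observation~\ref{no-point-in-circle-obs} alone (and even the full emptiness of both disks with respect to $P=\{p,q,r,s\}$) does not forbid it. What forbids it is that in such a configuration a cross edge (here of length $0.4<0.5$) would let you swap out the longer tree edge, i.e.\ exactly the minimum-spanning-tree argument via Lemma~\ref{cycle-lemma} that you were hoping to discard. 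So the cycle-lemma route is not a fallback but the only viable one, and within it the quantitative chord comparison above is indispensable.
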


\begin{proof}
 Let $(a_i,b_i)$ and $(a_j,b_j)$ respectively be the edges of $T$ which correspond to $D_i$ and $D_j$. Let $C_i$ and $C_j$ be the circles representing the boundary of $D_i$ and $D_j$. W.l.o.g. assume that $C_j$ is the bigger circle, i.e., $|a_ib_i|<|a_jb_j|$. By contradiction, suppose that $C_j$ contains the center $c_i$ of $C_i$. Let $x$ and $y$ denote the intersections of $C_i$ and $C_j$. Let $x_i$ (resp. $x_j$) be the intersection of $C_i$ (resp. $C_j$) with the line through $y$ and $c_i$ (resp. $c_j$). Similarly, let $y_i$ (resp. $y_j$) be the intersection of $C_i$ (resp. $C_j$) with the line through $x$ and $c_i$ (resp. $c_j$). 

\begin{figure}[htb]
  \centering
  \includegraphics[width=.6\columnwidth]{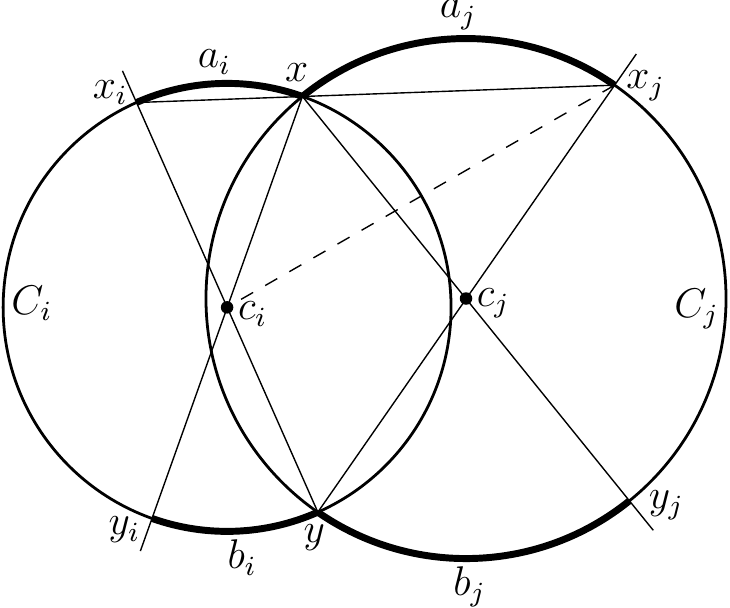}
 \caption{Illustration of Lemma~\ref{center-in-lemma}: $C_i$ and $C_j$ intersect, and $C_j$ contains the center of $C_i$.}
  \label{center-in-fig}
\end{figure}

As illustrated in Figure~\ref{center-in-fig}, the arcs $\widehat{x_ix}$, $\widehat{y_iy}$, $\widehat{x_jx}$, and $\widehat{y_jy}$ are the potential positions for the points $a_i$, $b_i$, $a_j$, and $b_j$, respectively. First we will show that the line segment $x_ix_j$ passes through $x$ and $|a_ia_j|\leq|x_ix_j|$. The angles $\angle x_ixy$ and $\angle x_jx_y$ are right angles, thus the line segment $x_ix_j$ goes through $x$. Since $\widehat{x_ix}<\pi$ (resp. $\widehat{x_jx}<\pi$), for any point $a_i\in \widehat{x_ix}, |a_ix|\leq|x_ix|$ (resp. $a_j\in \widehat{x_jx}, |a_jx|\leq|x_jx|$). Therefore, $$|a_ia_j|\leq|a_ix|+|xa_j|\leq|x_ix|+|xx_j|=|x_ix_j|.$$
Consider triangle $\bigtriangleup x_ix_jy$ which is partitioned by segment $c_ix_j$ into $t_1=\bigtriangleup x_ix_jc_i$ and $t_2=\bigtriangleup c_ix_jy$. It is easy to see that $|x_ic_i|$ in $t_1$ is equal to $|c_iy|$ in $t_2$, and the segment $c_ix_j$ is shared by $t_1$ and $t_2$. Since $c_i$ is inside $C_j$ and $\widehat{yx_j}=\pi$, the angle $\angle yc_ix_j>\frac{\pi}{2}$. Thus, $\angle x_ic_ix_j$ in $t_1$ is smaller than $\frac{\pi}{2}$ (and hence smaller than $\angle yc_ix_j$ in $t_2$). That is,  $|x_ix_j|$ in $t_1$ is smaller than $|x_jy|$ in $t_2$. Therefore,

$$|a_ia_j|\leq|x_ix_j|<|x_jy|=|a_jb_j|.$$

By symmetry $|b_ib_j|<|a_jb_j|$. Therefore $\max\{|a_ia_j|,|b_ib_j|\}<\max\{|a_ib_i|,|a_jb_j|\}$. Therefore, the cycle $a_i,a_j,b_j,b_i,a_i$ contradicts Lemma~\ref{cycle-lemma}.
\end{proof}

Let $e=(u,v)$ be an edge in $T$. Without loss of generality, we suppose that $D(u,v)$ has radius 1 and centered at the origin $o=(0,0)$ such that $u=(-1,0)$ and $v=(1,0)$. For any point $p$ in the plane, let $\|p\|$ denote the distance of $p$ from $o$. Let $\mathcal{D}(e^+)$ be the disks in $\mathcal{D}$ representing the edges of $T(e^+)$. Recall that $T(e^+)$ contains the edges of $T$ whose weight is at least $w(e)$, where $w(e)$ is equal to the area of $\cmin$. Since the area of any circle is directly related to its radius, we have the following observation:

\begin{observation}
 \label{radius-one}
The disks in $\mathcal{D}(e^+)$ have radius at least $1$.
\end{observation}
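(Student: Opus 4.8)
The plan is to unwind the definitions feeding into $\mathcal{D}(e^+)$ and to exploit the monotone relationship between the area of a diametral-disk and its radius. By the normalization fixed just before the observation, $\cmin$ has radius $1$, so its area is exactly $\pi$; since $w(e)$ is defined to be the area of the shape representing $e$, we have $w(e)=\pi$.

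Now I would take an arbitrary disk $D'$ in $\mathcal{D}(e^+)$. By definition $D'$ represents some edge $e'$ of $T(e^+)$, which means $w(e')\ge w(e)=\pi$. Because $w(e')$ is by definition the area of the diametral-disk $D'$, and a disk of radius $r$ has area $\pi r^2$, the area inequality $\pi r^2\ge \pi$ forces $r^2\ge 1$, i.e.\ $r\ge 1$. Since $D'$ was an arbitrary element of $\mathcal{D}(e^+)$, every disk in $\mathcal{D}(e^+)$ has radius at least $1$, which is exactly the claim.

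There is no genuine obstacle here: the statement is an immediate consequence of the definition of $T(e^+)$ (edges of weight at least $w(e)$) together with the fact that area is a strictly increasing function of radius. The only point needing care is to keep straight that $w(\cdot)$ denotes \emph{area} rather than radius or diameter, so that the comparison $w(e')\ge w(e)$ on weights can be converted, via $\pi r^2\ge \pi$, into the radius bound $r\ge 1$.
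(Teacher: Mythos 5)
Your proposal is correct and follows essentially the same route as the paper, which justifies the observation in one line by noting that $w(e)$ equals the area of $D(u,v)$ (normalized to radius $1$) and that a disk's area is monotone increasing in its radius; your explicit $\pi r^2\ge\pi$ computation is just that argument written out. No gaps.
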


Let $C(x,r)$ (resp. $D(x,r)$) be the circle (resp. closed disk) of radius $r$ which is centered at a point $x$ in the plane. 
Let $\mathcal{I}(e^+)=\{D_1,\dots, D_k\}$ be the set of disks in $\mathcal{D}(e^+)\setminus\{D(u,v)\}$ intersecting $D(u,v)$. We show that $\mathcal{I}(e^+)$ contains at most sixteen disks, i.e., $k\le 16$.

For $i\in\{1,\dots,k\}$, let $c_i$ denote the center of the disk $D_i$. 
In addition, let $c'_i$ be the intersection point between $C(o,2)$ and the ray with origin at $o$ which passing through $c_i$. Let the point $p_i$ be $c_i$, if $\|c_i\|< 2$, and $c'_i$, otherwise. See Figure~\ref{distance-fig}. Finally, let $P'=\{o, u, v, p_1,\dots,p_k\}$. 

\begin{observation}
\label{obs}
Let $c_j$ be the center of a disk $D_j$ in $\mathcal{I}(e^+)$, where $\|c_j\|\ge 2$. Then, the disk $D(c_j, \|c_j\|-1)$ is contained in the disk $D_j$. Moreover, the disk $D(p_j,1)$ is contained in the disk $D(c_j, \|c_j\|-1)$. See Figure~\ref{distance-fig}.
\end{observation}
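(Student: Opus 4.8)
The plan is to treat both statements as instances of the elementary inclusion criterion for nested disks: $D(a,r_1)\subseteq D(b,r_2)$ holds if and only if $\|a-b\|+r_1\le r_2$, where $\|a-b\|$ denotes the distance between the two centers. Write $r_j$ for the radius of $D_j$, so that $D_j=D(c_j,r_j)$. Under the normalization $D(u,v)=D(o,1)$, everything then reduces to two one-line distance comparisons.

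For the first inclusion I would observe that $D(c_j,\|c_j\|-1)$ and $D_j=D(c_j,r_j)$ are concentric, so the criterion collapses to the radius inequality $\|c_j\|-1\le r_j$. This is exactly the condition that $D_j$ meet $D(u,v)=D(o,1)$: two closed disks share a point precisely when the distance between their centers is at most the sum of their radii, i.e. $\|c_j\|\le 1+r_j$, and $D_j\in\mathcal{I}(e^+)$ was defined so as to intersect $D(u,v)$. Rearranging gives $\|c_j\|-1\le r_j$, which is the claim. (Since $\|c_j\|\ge 2$, the radius $\|c_j\|-1$ is positive, so the inner disk is genuinely defined.)

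For the second inclusion I would first pin down $p_j$. Because $\|c_j\|\ge 2$, by definition $p_j=c'_j$ is the point where $C(o,2)$ meets the ray from $o$ through $c_j$; this point lies on the segment $oc_j$ at distance exactly $2$ from $o$, so $o,p_j,c_j$ are collinear in that order and $\|c_j-p_j\|=\|c_j\|-2$. Applying the inclusion criterion to $D(p_j,1)\subseteq D(c_j,\|c_j\|-1)$ then amounts to checking $\|c_j-p_j\|+1\le\|c_j\|-1$, i.e. $(\|c_j\|-2)+1\le\|c_j\|-1$, which holds with equality. Hence $D(p_j,1)$ is contained in $D(c_j,\|c_j\|-1)$, internally tangent to its boundary.

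There is no substantive obstacle here; the only things to keep straight are the convention that ``intersecting'' means the two closed disks share at least one point (which is what justifies the sum-of-radii inequality rather than a strict version), and the location of $c'_j$ on the near side of $c_j$, so that $\|c_j-p_j\|=\|c_j\|-2$ rather than $\|c_j\|+2$. It is worth flagging that the second inclusion is tight: $D(p_j,1)$ exactly touches the boundary of $D(c_j,\|c_j\|-1)$. Chaining the two inclusions yields $D(p_j,1)\subseteq D_j$, which is presumably the form in which the observation will be used to bound $k=|\mathcal{I}(e^+)|$ in the sequel.
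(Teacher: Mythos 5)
Your proof is correct, and it fills in exactly the reasoning the paper leaves implicit: the observation is stated without proof (only a figure reference), and the intended justification is precisely the two triangle-inequality/radius comparisons you give, namely $\|c_j\|\le r_j+1$ from the closed disks $D_j$ and $D(o,1)$ intersecting, and $\|c_j-p_j\|=\|c_j\|-2$ from $p_j$ lying on the segment $oc_j$ at distance $2$ from $o$. Your added remarks that the second inclusion is tight and that closed-disk intersection is what licenses the non-strict inequality are accurate and consistent with how the observation is used later in the proof of Lemma~\ref{distance-lemma}.
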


\begin{figure}[htb]
  \centering
  \includegraphics[width=.5\columnwidth]{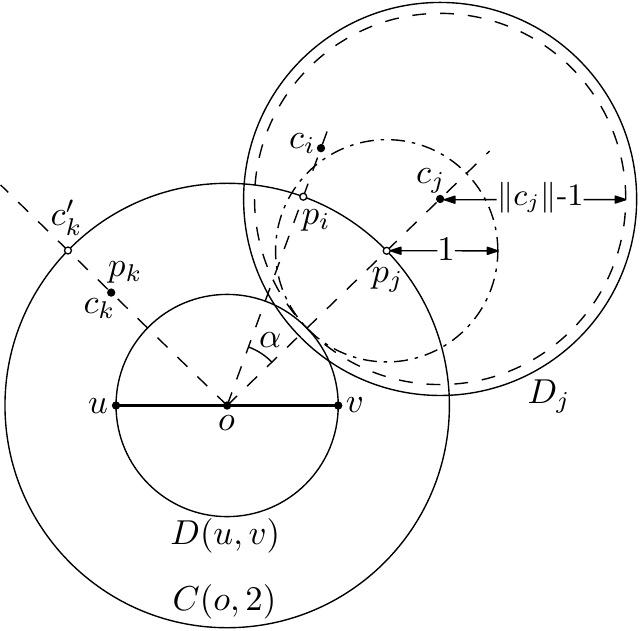}
 \caption{Proof of Lemma~\ref{distance-lemma}; $p_i=c'_i$, $p_j=c'_j$, and $p_k=c_k$.}
  \label{distance-fig}
\end{figure}

\begin{lemma}
\label{distance-lemma}
The distance between any pair of points in $P'$ is at least 1.
\end{lemma}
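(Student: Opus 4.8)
The plan is to verify the claimed bound on all $\binom{|P'|}{2}$ pairwise distances, organized by the types of the two points involved. The distances internal to $\{o,u,v\}$ are immediate, since $\|u\|=\|v\|=1$ and $\|u-v\|=2$. For a pair $o,p_i$ I would show $\|p_i\|\ge 1$: if $p_i=c'_i$ this is clear since $\|c'_i\|=2$, and if $p_i=c_i$ then Lemma~\ref{center-in-lemma} applied to $D(u,v)$ and $D_i$ shows that $D_i$ does not contain the center $o$ of $D(u,v)$, so $\|c_i\|\ge r_i$, where $r_i$ denotes the radius of $D_i$, and $r_i\ge 1$ by Observation~\ref{radius-one}. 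For a pair $u,p_i$ (and symmetrically $v,p_i$), if $p_i=c'_i$ then the triangle inequality gives $\|c'_i-u\|\ge \|c'_i\|-\|u\|=2-1=1$, while if $p_i=c_i$ then, since $u\in P$ lies outside the interior of $D_i$ by Observation~\ref{no-point-in-circle-obs}, we get $\|c_i-u\|\ge r_i\ge 1$.

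The substance is in bounding $\|p_i-p_j\|$ for $i\neq j$. First I would record the uniform fact that $D(p_i,1)\subseteq D_i$ for every $i$: when $p_i=c_i$ this holds because $r_i\ge 1$ (Observation~\ref{radius-one}), and when $p_i=c'_i$ it is exactly Observation~\ref{obs}. Now I split on whether the two points are retracted to the circle $C(o,2)$. If at least one of them, say $p_j$, equals its center $c_j$, then Lemma~\ref{center-in-lemma} tells us that $p_j=c_j$ lies outside $\mathrm{int}(D_i)$, and since $\mathrm{int}(D(p_i,1))\subseteq\mathrm{int}(D_i)$, it also lies outside $\mathrm{int}(D(p_i,1))$; hence $\|p_i-p_j\|\ge 1$. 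This disposes of every pair except the one in which both points are radially retracted onto $C(o,2)$.

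The main obstacle is exactly this last subcase, $p_i=c'_i$ and $p_j=c'_j$, because the retraction toward $o$ can pull the two points closer together, so the containment argument above no longer applies and a quantitative angular estimate is needed. Writing $a=\|c_i\|\ge 2$ and $b=\|c_j\|\ge 2$ and letting $\alpha=\angle c_i o c_j$, both retracted points lie on $C(o,2)$, so $\|p_i-p_j\|=4\sin(\alpha/2)$, and the target $\|p_i-p_j\|\ge 1$ is equivalent to $\cos\alpha\le 7/8$. To lower bound $\alpha$ I would feed two facts into the law of cosines for the triangle $c_i\,o\,c_j$: the separation $\|c_i-c_j\|\ge\max\{r_i,r_j\}$ coming from Lemma~\ref{center-in-lemma}, and the inequalities $r_i\ge\|c_i\|-1$ and $r_j\ge\|c_j\|-1$ that hold because each $D_i$ intersects $D(u,v)$. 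Assuming $a\ge b$, these combine to $\cos\alpha\le \frac{b^2+2a-1}{2ab}$, and an elementary check shows the right-hand side is at most $7/8$ for all $a\ge b\ge 2$, with equality precisely when $a=b=2$. This matches the tightness of the statement: two unit disks tangent to $D(u,v)$ and to each other, with centers at distance exactly $2$ from $o$, realize the extremal distance $1$.
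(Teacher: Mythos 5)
Your proof is correct and follows essentially the same route as the paper's: the same case split by point type, the same use of Lemma~\ref{center-in-lemma}, Observation~\ref{no-point-in-circle-obs}, Observation~\ref{radius-one} and Observation~\ref{obs}, and in the hard case the same law-of-cosines estimate with $\|c_i-c_j\|\ge(\text{larger norm})-1$; your inequality $\frac{b^2+2a-1}{2ab}\le\frac{7}{8}$ for $a\ge b\ge 2$ is exactly the paper's $\frac{6}{8}\|c_i\|^2-2\|c_i\|+1\ge 0$ rearranged into direct rather than contrapositive form. The only (harmless) reorganization is that you merge the ``both centers'' and ``one center, one retracted'' subcases into a single containment argument via $D(p_i,1)\subseteq D_i$.
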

\begin{proof}
Let $x$ and $y$ be two points in $P'$. We are going to prove that $|xy|\ge 1$. We distinguish between the following three cases. 
\begin{itemize}
 \item $x,y \in\{o,u,v\}$. In this case the claim is trivial.
\item $x\in \{o,u,v\}, y\in \{p_1,\dots, p_k\}$. If $\|y\|=2$, then $y$ is on $C(o,2)$, and hence $|xy|\ge 1$. If $\|y\|<2$, then $y$ is the center of a disk $D_i$ in $\mathcal{I}(e^+)$. By Observation~\ref{no-point-in-circle-obs}, $D_i$ does not contain $u$ and $v$, and by Lemma~\ref{center-in-lemma}, $D_i$ does not contain $o$. Since $D_i$ has radius at least 1, we conclude that $|xy|\ge 1$.

\item $x,y\in\{p_1,\dots,p_k\}$. Without loss of generality assume $x=p_i$ and $y= p_j$, where $1\le i<j\le k$. We differentiate between three cases:
\begin{itemize}
 \item $\|p_i\|< 2$ and $\|p_j\|<2$. In this case $p_i$ and $p_j$ are the centers of $D_i$ and $D_j$, respectively. By Lemma~\ref{center-in-lemma} and Observation~\ref{radius-one}, we conclude that $|p_ip_j|\ge 1$.

\item $\|p_i\|< 2$ and $\|p_j\|=2$. By Observation~\ref{obs} the disk $D(p_j, 1)$ is contained in the disk $D_j$. By Lemma~\ref{center-in-lemma}, $p_i$ is not in the interior of $D_j$, and consequently, it is not in the interior of $D(p_j,1)$. Therefore, $|p_ip_j|\ge 1$.
\item $\|p_i\|= 2$ and $\|p_j\|=2$. Recall that $c_i$ and $c_j$ are the centers of $D_i$ and $D_j$, such that $\|c_i\|\ge 2$ and $\|c_j\|\ge2$. Without loss of generality assume $\|c_i\|\le \|c_j\|$. For the sake of contradiction assume that $|p_ip_j|<1$. Then, for the angle $\alpha=\angle c_i o c_j$ we have $\sin(\alpha/2)< \frac{1}{4}$. Then, $\cos(\alpha)> 1-2\sin^2(\alpha/2)=\frac{7}{8}$. By the law of cosines in the triangle $\bigtriangleup c_ioc_j$, we have
\begin{equation}
\label{ineq1}
|c_ic_j|^2<\|c_i\|^2+\|c_j\|^2-\frac{14}{8}\|c_i\|\|c_j\|.
\end{equation}
By Observation~\ref{obs} the disk $D(c_j,\|c_j\|-1)$ is contained in $D_j$; see Figure~\ref{distance-fig}. By Lemma~\ref{center-in-lemma}, $c_i$ is not in the interior of $D_j$, and consequently, is not in the interior of $D(c_j,\|c_j\|-1)$. Thus, $|c_ic_j|\ge \|c_j\|-1$. In combination with Inequality~(\ref{ineq1}), this gives
\begin{equation}
 \label{ineq2}
\|c_j\|\left(\frac{14}{8}\|c_i\|-2\right) < \|c_i\|^2-1.
\end{equation}

In combination with the assumption that $\|c_i\| \le \|c_j\|$, Inequality~(\ref{ineq2}) gives
$$\frac{6}{8}\|c_i\|^2-2\|c_i\|+1<0.$$

To satisfy this inequality, we should have $\|c_i\|<2$, contradicting the fact
that $\|c_i\| \ge 2$. This completes the proof.
\end{itemize}
\end{itemize}
\end{proof}

By Lemma~\ref{distance-lemma}, the points in $P'$ has mutual distance 1. Moreover, the points in $P'$ lie in (including the boundary) $C(o,2)$.
Bateman and Erd\H{o}s~\cite{Bateman1951} proved that it is impossible to have 20 points in (including the boundary) a circle of radius 2 such that one of the points is at the center and all of the mutual distances are at least 1.
Therefore, $P'$ contains at most $19$ points, including $o$, $u$, and $v$. This implies that $k\le 16$, and hence $\mathcal{I}(e^+)$ contains at most sixteen edges. This completes the proof of Lemma~\ref{disk-inf-lemma}.

\begin{theorem}
 \label{Gabriel-thr}
Algorithm~\ref{alg1} computes a strong matching of size at least $\lceil\frac{n-1}{17}\rceil$ in $\G{\ominus}{P}$.
\end{theorem}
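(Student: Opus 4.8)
The plan is to read this theorem as the assembly of the two main ingredients already established in this section, rather than as a fresh argument. Algorithm~\ref{alg1} is generic over $S\in\{\ddisc,\trid,\sqr\}$, and Theorem~\ref{GS-thr} has already certified both that its output is a genuine strong matching and that its cardinality is at least $\lceil\frac{n-1}{\Inf{T}}\rceil$, where $T$ is a minimum spanning tree of $\G{S}{P}$. So for the case $S=\ddisc$ all that remains is to feed in a quantitative bound on $\Inf{T}$, and Lemma~\ref{disk-inf-lemma} supplies exactly this bound. Thus my first step is simply to instantiate Theorem~\ref{GS-thr} with $S=\ddisc$, noting that $\G{\ddiscs}{P}=GG(P)$, so the conclusion transfers verbatim to the Gabriel graph.

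The second step is to convert the per-edge bound of Lemma~\ref{disk-inf-lemma} into a bound on the influence number of the whole tree. By definition $\Inf{T}=\max\{|\Inf{e}|:e\in T\}$, and Lemma~\ref{disk-inf-lemma} guarantees $|\Inf{e}|\le 17$ for every edge $e\in T$; taking the maximum over all edges gives $\Inf{T}\le 17$. I would then invoke monotonicity of $\lceil\cdot\rceil$: since $\Inf{T}\le 17$ we have $\frac{n-1}{\Inf{T}}\ge\frac{n-1}{17}$, and therefore
\[
|\mathcal{M}|\ge\left\lceil\frac{n-1}{\Inf{T}}\right\rceil\ge\left\lceil\frac{n-1}{17}\right\rceil,
\]
which is precisely the claimed lower bound. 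The strongness of $\mathcal{M}$ is inherited directly from Theorem~\ref{GS-thr} and needs no re-proving.

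Since every substantive step has been discharged upstream, I do not anticipate any genuine obstacle in this final proof; it is a two-line chaining of prior results. If I were to locate the hard part, it is entirely inside Lemma~\ref{disk-inf-lemma} — in particular the packing argument built on Lemma~\ref{center-in-lemma} (no disk of $\mathcal{D}$ contains the center of another), the radius-normalization placing the centers into the annulus up to $C(o,2)$, and the appeal to the Bateman--Erd\H{o}s bound that at most $19$ mutually $1$-separated points (one at the center) fit in a disk of radius $2$. That yields $k\le 16$ intersecting disks and hence $|\Inf{e}|\le 17$. All of that being assumed here, the only care I would take in writing the theorem's proof is to state the ceiling-monotonicity step explicitly, so that the inequality chain from $\Inf{T}\le 17$ to $\lceil\frac{n-1}{17}\rceil$ is unambiguous.
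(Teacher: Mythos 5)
Your proposal is correct and matches the paper's own argument exactly: the paper likewise obtains Theorem~\ref{Gabriel-thr} by running Algorithm~\ref{alg1} on $\G{\ddiscs}{P}$, invoking Theorem~\ref{GS-thr} for strongness and the bound $\lceil\frac{n-1}{\Inf{T}}\rceil$, and then reducing to the per-edge bound $|\Inf{e}|\le 17$ of Lemma~\ref{disk-inf-lemma}. The hard content is indeed confined to that lemma, which both you and the paper treat as a separately established ingredient.
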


\section{Strong Matching in $\G{\trids}{P}$}
\label{half-theta-six-section}
In this section we consider the case where $S$ is a downward equilateral triangle $\trid$, whose barycenter is the origin and one of its vertices is on the negative $y$-axis. In this section we assume that $P$ is in general position, i.e., for each point $p\in P$, there is no point of $P\setminus \{p\}$ on $l_p^0$, $l_p^{60}$, and $l_p^{120}$. In combination with Observation~\ref{shrink-triangle-obs}, this implies that for two points $p,q\in P$, no point of $P\setminus\{p,q\}$ are on the boundary of $t(p,q)$ (resp. $t'(p,q)$). Recall that $t(p,q)$ is the smallest homothet of $\trid$ having of $p$ and $q$ on a corner and the other point on the side opposite to that corner. We prove that $\G{\trids}{P}$, and consequently $\frac{1}{2}\Theta_6(P)$, has a strong triangle matching of size at least $\lceil\frac{n-1}{9}\rceil$. 

We run Algorithm~\ref{alg1} on $\G{\trids}{P}$ to compute a matching $\mathcal{M}$. Recall that $\G{\trids}{P}$ is an edge-weighted graph with the weight of each edge $(p,q)$ is equal to the area of $t(p,q)$. By Theorem~\ref{GS-thr}, $\mathcal{M}$ is a strong matching of size at least $\lceil\frac{n-1}{\Inf{T}}\rceil$, where $T$ is a minimum spanning tree in $\G{\trids}{P}$. In order to prove the desired lower bound, we show that $\Inf{T}\le 9$. Since $\Inf{T}$ is the maximum size of a set among the
influence sets of edges in $T$, it suffices to show that for every edge $e$ in $T$, the influence set of $e$ has at most nine edges. 
\begin{lemma}
\label{triangle-inf-lemma}
Let $T$ be a minimum spanning tree of $\G{\trids}{P}$, and let $e$ be any edge in $T$. Then, $|\emph{Inf}(e)|\le 9$.
\end{lemma}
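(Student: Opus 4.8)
The plan is to mirror the diametral-disk argument of Lemma~\ref{disk-inf-lemma}, replacing the packing of unit disks in a radius-$2$ circle by a packing of the incircles of the relevant triangles. Fix an edge $e=(u,v)$ of $T$. Since $e\in\Inf{e}$, it suffices to bound by $8$ the number of \emph{other} edges $e'\in T(e^+)$ whose triangle $\tr{e'}$ meets $\tr{e}$. By applying a homothety I would normalize so that $\tr{e}=\tr{u,v}$ is a fixed downward equilateral triangle of inradius $r_0$ centered at the origin; every edge in the influence set then has area at least that of $\tr{e}$, i.e.\ side length at least that of $\tr{e}$, and all these triangles are homothets of $\trid$ (hence translates up to scaling, sharing the same three edge-directions). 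As $T$ is a subgraph of $\G{\trids}{P}$, the analog of Observation~\ref{no-point-in-circle-obs} holds: the interior of each such triangle contains no point of $P$; in particular it contains neither $u$ nor $v$ unless they are its own endpoints.

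The first key step is a non-containment lemma in the spirit of Lemma~\ref{center-in-lemma}: for two edges $e_i=(a_i,b_i)$ and $e_j=(a_j,b_j)$ of $T$ with $\tr{a_i,b_i}\preceq\tr{a_j,b_j}$, the larger triangle $\tr{a_j,b_j}$ does not contain the barycenter of $\tr{a_i,b_i}$ (for an equilateral triangle the barycenter is the incenter, so this is the natural analog of ``center-in''). I would prove this by contradiction using the cycle property (Lemma~\ref{cycle-lemma}): assuming the barycenter lies inside the larger triangle forces enough overlap that two of the cross-pairs among $a_i,b_i,a_j,b_j$ have triangular distance strictly smaller than the larger of the two original edges, so the $4$-cycle $a_i,a_j,b_j,b_i$ would have its unique maximum-weight edge inside the MST $T$, a contradiction. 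The comparisons of triangular distances here would be carried out with the help of Observation~\ref{obs2} (monotonicity of $\tr{\cdot,\cdot}$ with respect to the hexagons $\hex{\cdot}{\cdot}$), exploiting the equilateral symmetry and the straight sides in place of the circular arcs used in Lemma~\ref{center-in-lemma}.

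Granting the non-containment lemma, every triangle in the influence set contains, about its barycenter, an incircle of radius at least $r_0$; since no such barycenter lies inside another of these triangles, the barycenters---together with that of $\tr{e}$---are pairwise at distance more than $r_0$. Each triangle of the set also meets $\tr{e}$, so after projecting the barycenters of the very large triangles toward $\tr{e}$ onto the boundary of a fixed bounded region (a suitably scaled downward triangle or hexagon around $\tr{e}$), exactly as in Observation~\ref{obs} and the $\|c_i\|\ge 2$ case of Lemma~\ref{disk-inf-lemma}, I would confine all these points to that region while preserving the separation $>r_0$. A packing count of points that are pairwise more than $r_0$ apart inside the region then yields at most $9$ points in total, hence at most $8$ edges besides $e$, giving $|\Inf{e}|\le 9$ and, via Theorem~\ref{GS-thr}, the claimed matching bound.

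The hard part is the quantitative bookkeeping rather than the overall scheme. Establishing the non-containment lemma requires a careful triangle-distance estimate to identify the exact overlap threshold that triggers the cycle contradiction, and the concluding step requires a packing argument tuned to the three fixed edge-directions of $\trid$ that is tight enough to produce the constant $9$ rather than merely some $O(1)$ bound; unlike the disk case I cannot simply invoke the off-the-shelf Bateman--Erd\H{o}s estimate, so pinning down the region and the separation so that precisely nine points fit is where the real effort lies.
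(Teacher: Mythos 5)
Your overall scheme does not match the paper's, and more importantly it cannot produce the constant $9$. The paper does not bound $|\Inf{e}|$ by packing centers of incircles; it splits the triangles of $T(e^+)$ meeting $t(e)$ into $\mathcal{I}_1$ (those meeting $t(e)$ only in the defining points $p$ or $q$) and $\mathcal{I}_2$ (those meeting $t(e)$ through a side or a corner other than $p,q$). The set $\mathcal{I}_1$ is bounded by $4$ via the cone structure of the half-theta-six graph (Lemma~\ref{deg-six-half}: each point of $T$ has at most one neighbour per cone, and only four cones are relevant), and $\mathcal{I}_2$ is bounded by $4$ through a delicate case analysis (Lemmas~\ref{side-intersection}--\ref{vertex-side-intersection-1}) resting on the chain-configuration exclusion (Corollary~\ref{biniaz-cor}) and the fact that no four triangles of $\mathcal{T}$ have a common point (Lemma~\ref{intersection-lemma}), all driven by cycle arguments as in Lemma~\ref{cycle-lemma}. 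None of this structural information is visible to a metric packing of barycenters.

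The quantitative gap in your plan is not mere bookkeeping. With your normalization the separation you can hope to extract from a ``barycenter non-containment'' lemma is the inradius $r_0$ of $t(e)$, while $t(e)$ itself has diameter $2\sqrt{3}\,r_0\approx 3.46\,r_0$ and the confining region must additionally contain a collar of width about $2r_0$ (the circumradius of the smallest admissible intersecting triangle) around $t(e)$. A region of that size admits far more than nine points at pairwise distance greater than $r_0$ \textemdash{} a rough area count already allows several dozen \textemdash{} so no tuning of the projection region can bring the packing number down to $9$; this is unlike the disk case, where the Bateman--Erd\H{o}s bound of $19$ in $C(o,2)$ happens to deliver the (much weaker, and not believed tight) constant $17$. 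In addition, your preliminary non-containment lemma is itself unproven and doubtful: a very large downward homothet can drive a $60^\circ$ corner past the barycenter of a small triangle while its interior avoids both defining points of that triangle, so the claimed $4$-cycle contradiction does not obviously materialize. Finally, note that the paper exhibits a configuration (Figure~\ref{five-fig}) attaining $|\Inf{e}|=9$, so any correct proof must be tight in exactly the way the $\mathcal{I}_1/\mathcal{I}_2$ decomposition is; a packing argument with slack cannot be.
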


 \begin{figure}[htb]
  \centering
\setlength{\tabcolsep}{0in}
  $\begin{tabular}{ccc}
\multicolumn{1}{m{.33\columnwidth}}{\centering\includegraphics[width=.2\columnwidth]{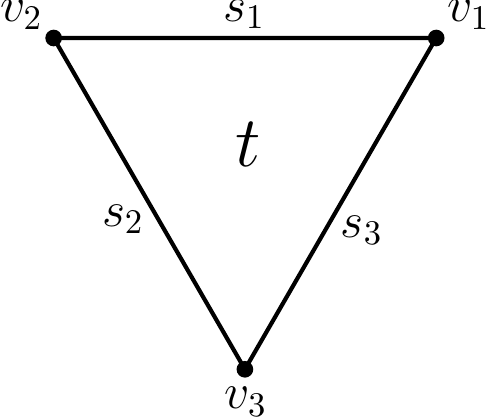}}
&\multicolumn{1}{m{.33\columnwidth}}{\centering\includegraphics[width=.2\columnwidth]{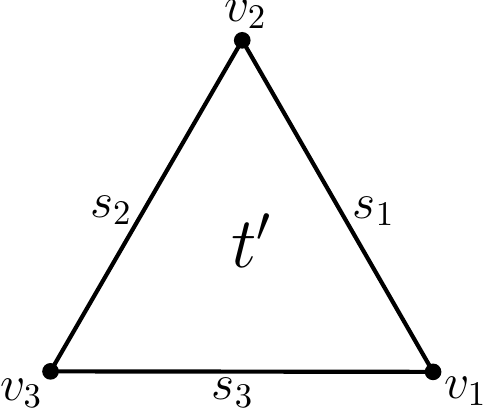}} 
&\multicolumn{1}{m{.33\columnwidth}}{\centering\includegraphics[width=.2\columnwidth]{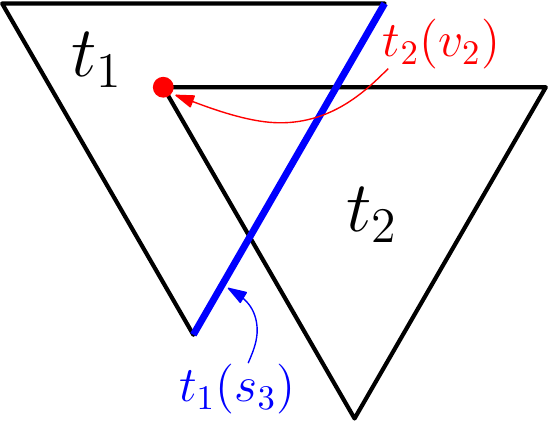}}\\
(a)&(b)&(c)
\end{tabular}$
  \caption{(a) Labeling the vertices and the sides of a downward triangle. (b) Labeling the vertices and the sides of an upward triangle. (c) Two intersecting triangles.}
  \label{triangle-fig}
\end{figure}

We will prove this lemma in the rest of this section. We label the vertices and the sides of a downward equilateral-triangle, $t$, and an upward equilateral-triangle, $t'$, as depicted in Figures~\ref{triangle-fig}(a) and ~\ref{triangle-fig}(b). We refer to a vertex $v_i$ and a side $s_i$ of a triangle $t$ by $t(v_i)$ and $t(s_i)$, respectively.

Recall that $F$ is a subgraph of the minimum spanning tree $T$ in $\G{\trids}{P}$. In each iteration of the {\sf while} loop in Algorithm~\ref{alg1}, let $\mathcal{T}$ denote the set of triangles representing the edges in $F$. By Lemma~\ref{mst-in-GS} and the general position assumption we have

\begin{observation}
\label{no-point-in-triangle-obs}
Each triangle $t(p,q)$ in $\mathcal{T}$ does not contain any point of $P\setminus \{p,q\}$ in its interior or on its boundary.\vspace{-5pt}
\end{observation}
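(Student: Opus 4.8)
The plan is to derive this observation directly from the characterization of $\G{\trids}{P}$ recorded in the Preliminaries, together with Lemma~\ref{mst-in-GS} and the general-position assumption stated at the start of this section; no new geometry is required. First I would note that, by Lemma~\ref{mst-in-GS}, the minimum spanning tree $T$ is a subgraph of $\G{\trids}{P}$, and since $F$ is a subgraph of $T$, every edge $(p,q)$ whose representing triangle $t(p,q)$ belongs to $\mathcal{T}$ is itself an edge of $\G{\trids}{P}$. Thus it suffices to prove the emptiness claim for an arbitrary edge $(p,q)$ of $\G{\trids}{P}$.

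Next I would invoke the defining property of $\G{S}{P}$ recalled in the Preliminaries: the graph contains the edge $(p,q)$ if and only if $S(p,q)$ contains no point of $P\setminus\{p,q\}$. Specializing $S$ to $\trid$, so that $S(p,q)=t(p,q)$, the fact that $(p,q)$ is an edge of $\G{\trids}{P}$ means exactly that the interior of $t(p,q)$ contains no point of $P\setminus\{p,q\}$. This already settles the interior part of the statement. It then remains to exclude points of $P\setminus\{p,q\}$ on the boundary of $t(p,q)$, and here I would appeal to the general-position assumption of this section: since the line through any two points of $P$ avoids the angles $0^\circ$, $60^\circ$, and $120^\circ$, no point of $P\setminus\{p,q\}$ can lie on any of the three sides of $t(p,q)$. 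Combining the two parts gives that $t(p,q)$ contains no point of $P\setminus\{p,q\}$ in its interior or on its boundary, which is precisely the assertion.

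The only genuinely delicate point is reconciling the two emptiness conventions in the setup: the edge definition of $\G{\trids}{P}$ asserts the existence of \emph{some} homothet of $\trid$ with $p$ and $q$ on its boundary and empty interior, whereas the argument above speaks through the canonical smallest homothet $t(p,q)$. I would make this bridge explicit by recalling from Observation~\ref{shrink-triangle-obs} that $t(p,q)$ is the \emph{unique} smallest homothet with $p$ and $q$ on its boundary, obtained by shrinking any such witness homothet; since shrinking preserves containment, the interior of $t(p,q)$ is contained in the interior of the empty witness, so "$(p,q)$ is an edge of $\G{\trids}{P}$" and "the interior of $t(p,q)$ is free of points of $P\setminus\{p,q\}$" are indeed equivalent. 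The general-position assumption then upgrades "empty interior" to "empty closed triangle," completing the proof. Apart from this bookkeeping, the observation follows immediately from results already established.
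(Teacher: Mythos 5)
Your proposal is correct and matches the paper's own (unproved, one-line) justification: the paper derives Observation~\ref{no-point-in-triangle-obs} from exactly the same two ingredients, namely Lemma~\ref{mst-in-GS} (so every triangle in $\mathcal{T}$ represents an edge of $\G{\trids}{P}$ and hence has empty interior) and the section's general-position assumption combined with Observation~\ref{shrink-triangle-obs} (each side of $t(p,q)$ contains $p$ or $q$, so a third point on the boundary would create a line at $0^\circ$, $60^\circ$, or $120^\circ$). Your extra bookkeeping step reconciling the ``some empty homothet'' edge definition with the canonical smallest homothet $t(p,q)$ is a valid elaboration of an equivalence the paper simply asserts in the Preliminaries (``$\G{S}{P}$ \dots has an edge $(p,q)$ iff $S(p,q)$ does not contain any point of $P\setminus\{p,q\}$''), not a departure from its approach.
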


Consider two intersecting triangles $t_1(p_1,q_1)$ and $t_2(p_2,q_2)$ in $\mathcal{T}$. By Observation~\ref{shrink-triangle-obs}, each side of $t_1$ contains either $p_1$ or $q_1$, and each side of $t_2$ contains either $p_2$ or $q_2$. Thus, by Observation~\ref{no-point-in-triangle-obs}, we argue that no side of $t_1$ is completely in the interior of $t_2$, and vice versa. Therefore, either exactly one vertex (corner) of $t_1$ is in the interior of $t_2$, or exactly one vertex of $t_2$ is in the interior of $t_1$. Without loss of generality assume that a corner of $t_2$ is in the interior of $t_1$, as shown in Figure~\ref{triangle-fig}(c). In this case we say that $t_1$ intersects $t_2$ through the vertex $t_2(v_2)$, or symmetrically, $t_2$ intersects $t_1$ through the side $t_1(s_3)$.

The following two lemmas have been proved by Biniaz et al.~\cite{Biniaz2015}:
\begin{lemma}[Biniaz et al.~\cite{Biniaz2015}]
\label{triangle3}
Let $t_1$ be a downward triangle which intersects a downward triangle $t_2$ through $\tra{t_2}{s_1}$, and let a horizontal line $\ell$ intersects both $t_1$ and $t_2$. Let $p_1$ and $q_1$ be two points on $t_1(s_2)$ and $t_1(s_3)$, respectively, which are above $t_2(s_1)$. Let $p_2$ and $q_2$ be two points on $t_2(s_2)$ and $t_2(s_3)$, respectively, which are above $\ell$. Then, $\max\{t(p_1,p_2), t(q_1,q_2)\} \prec \max\{t_1,t_2\}$. See Figure~\ref{triangle-intersection-fig}(b).
\end{lemma}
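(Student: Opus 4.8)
The plan is to reduce the area comparison to a short computation with the three support functionals of a downward triangle. Fix the outward unit normals $n_1,n_2,n_3$ of the sides $s_1,s_2,s_3$, where $s_1$ is the horizontal top side and $s_2,s_3$ are the two slanted sides; since $\trid$ is equilateral, $n_1+n_2+n_3=0$. Writing $g_i(x)=n_i\cdot x$, every point satisfies $g_1(x)+g_2(x)+g_3(x)=0$; a downward triangle is $\{x:g_i(x)\le G_i,\ i=1,2,3\}$, a point lies on its side $s_i$ iff $g_i(x)=G_i$, and its area is a strictly increasing function of $G_1+G_2+G_3$ (which is exactly its height). Hence the smallest downward triangle through $a$ and $b$ has offsets $G_i=\max\{g_i(a),g_i(b)\}$, and its size
\[
\mu(a,b)=\sum_{i=1}^{3}\max\{g_i(a),g_i(b)\}=\tfrac12\sum_{i=1}^{3}\bigl|g_i(a)-g_i(b)\bigr|
\]
is monotone in the area of $t(a,b)$. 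So it suffices to prove $\mu(p_1,p_2)<\max\{\mu(p_1,q_1),\mu(p_2,q_2)\}$ and, symmetrically, $\mu(q_1,q_2)<\max\{\mu(p_1,q_1),\mu(p_2,q_2)\}$.

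First I would turn every hypothesis into a linear inequality on the nine numbers $g_i(\cdot)$. Writing $G_i^{1},G_i^{2}$ for the offsets of $t_1,t_2$, the memberships $p_1\in t_1(s_2)$, $q_1\in t_1(s_3)$, $p_2\in t_2(s_2)$, $q_2\in t_2(s_3)$ give $g_2(p_1)=G_2^{1}$, $g_3(q_1)=G_3^{1}$, $g_2(p_2)=G_2^{2}$, $g_3(q_2)=G_3^{2}$. The intersection through $t_2(s_1)$ means the apex $t_1(s_2)\cap t_1(s_3)$ of $t_1$, whose coordinates are $(-G_2^{1}-G_3^{1},\,G_2^{1},\,G_3^{1})$, lies inside $t_2$; substituting into $g_i\le G_i^{2}$ yields $G_2^{1}\le G_2^{2}$ and $G_3^{1}\le G_3^{2}$. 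The condition that $p_1,q_1$ lie above $t_2(s_1)$ gives $g_1(p_1),g_1(q_1)>G_1^{2}\ge g_1(p_2),g_1(q_2)$. Finally, the existence of a horizontal line $\ell$ meeting $t_1$ with $p_2,q_2$ above it furnishes a value $\lambda$ with $-G_2^{1}-G_3^{1}\le\lambda<\min\{g_1(p_2),g_1(q_2)\}$, i.e.\ the two strict inequalities $G_2^{1}+G_3^{1}+g_1(p_2)>0$ and $G_2^{1}+G_3^{1}+g_1(q_2)>0$; these are what ultimately force strictness.

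To finish I would evaluate $\mu(p_1,p_2)$ one direction at a time: in direction $1$ the ``above'' inequality pins the maximum to $g_1(p_1)$, in direction $2$ the relation $g_2(p_1)\le g_2(p_2)$ pins it to $g_2(p_2)$, and the direction-$3$ term is rewritten via $\sum_i g_i\equiv0$ as $-\min\{g_1(p_1)+g_2(p_1),\,g_1(p_2)+g_2(p_2)\}$. A single case split on which of these two sums is smaller collapses $\mu(p_1,p_2)$ to either $g_2(p_2)-g_2(p_1)$ or $g_1(p_1)-g_1(p_2)$; in the first case $G_3^{1}\le G_3^{2}$ together with the $\ell$-inequality in $g_1(q_2)$ gives $\mu(p_1,p_2)<\mu(p_2,q_2)$, and in the second the $\ell$-inequality in $g_1(p_2)$ gives $\mu(p_1,p_2)<\mu(p_1,q_1)$. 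The estimate for $\mu(q_1,q_2)$ follows by the mirror-image argument that exchanges the two slanted directions. I expect the real difficulty to be the faithful translation of the configuration into these inequalities — recognizing that it is the apex of $t_1$ that enters $t_2$, keeping the inequality directions straight, and especially extracting the two strict inequalities from the mere existence of $\ell$ — after which the case analysis is routine.
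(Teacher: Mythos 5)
The paper does not actually prove this lemma --- it is imported verbatim from Biniaz et al.~\cite{Biniaz2015} --- so there is no in-paper argument to compare against; judged on its own, your proof is correct. You replace the synthetic style used elsewhere in the paper (e.g.\ the auxiliary upward triangle constructed in Lemma~\ref{triangle-intersection-lemma}) by an analytic one: the three support values $g_i$ linearize every hypothesis, the quantity $\mu(a,b)=\sum_i\max\{g_i(a),g_i(b)\}$ is exactly the height of $t(a,b)$ (so $\mu$-comparisons are $\prec$-comparisons), and the lemma collapses to a two-case check. I verified the details: the apex of $t_1$ lying in $t_2$ does give $G_2^1\le G_2^2$ and $G_3^1\le G_3^2$; your case split yields $\mu(p_1,p_2)=G_2^2-G_2^1$ or $\mu(p_1,p_2)=g_1(p_1)-g_1(p_2)$; and in each case the strict inequalities $g_1(p_2),g_1(q_2)>-(G_2^1+G_3^1)$ supplied by $\ell$ close the argument, with the mirror case handled by swapping the two slanted directions. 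Two points deserve to be made explicit in a write-up. First, the lemma's $t_1,t_2$ are arbitrary downward triangles, not necessarily $t(p_1,q_1)$ and $t(p_2,q_2)$; since $p_i,q_i\in t_i$ you have $\max\{\mu(p_1,q_1),\mu(p_2,q_2)\}\le\max\{\mathrm{height}(t_1),\mathrm{height}(t_2)\}$, so your conclusion is in fact slightly stronger than the stated one --- worth a sentence. Second, the claim that ``$t_1$ intersects $t_2$ through $\tra{t_2}{s_1}$'' forces the bottom apex of $t_1$ into $t_2$ relies on the paper's convention for side-intersections together with the fact that $\tra{t_2}{s_1}$ is horizontal, so no other corner of a downward homothet can cross it; this is the one step of the translation that uses the geometry of the configuration rather than pure bookkeeping, and it should be spelled out. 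The payoff of your approach is a short, checkable computation and a marginally stronger statement; the cost is that the geometric content is hidden inside the dictionary between the picture and the nine numbers $g_i(\cdot)$.
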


\begin{lemma}[Biniaz et al.~\cite{Biniaz2015}]
\label{intersection-lemma}
For every four triangles $t_1,t_2,t_3,t_4\in \mathcal{T}$, $t_1\cap t_2\cap t_3\cap t_4 =\emptyset$. 
\end{lemma}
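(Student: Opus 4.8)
The plan is to argue by contradiction: suppose four triangles $t_1,t_2,t_3,t_4\in\mathcal{T}$ have a common point, and manufacture a $4$-cycle in $K_{\trids}(P)$ whose unique maximum-weight edge belongs to the minimum spanning tree $T$, contradicting Lemma~\ref{cycle-lemma}. First I would fix a convenient witness point. Since an intersection of downward triangles is again a downward region (the three edge directions are fixed), the assumed common intersection is a nonempty downward triangle, so by general position I may pick a point $z$ in its interior. I would then draw the horizontal line $\ell$ through $z$. As $z$ is interior to every $t_i$, it lies strictly below all four (horizontal) top sides, so $\ell$ crosses each $t_i$ through its two slanted sides; the four chords of $\ell$ all contain $z$, and the four top sides sit at distinct heights above $\ell$ (distinct by general position).

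The heart of the argument is to extract from these four mutually overlapping triangles a pair in the exact configuration demanded by Lemma~\ref{triangle3}. Recall from the discussion preceding that lemma that any two intersecting triangles of $\mathcal{T}$ meet so that exactly one corner of one pokes through one side of the other. There are only three side-directions but four triangles, so by a pigeonhole argument on which side hosts the poking corner (equivalently, on the cyclic order in which the slanted sides cross $\ell$) two of the triangles, which I rename $t_1$ and $t_2$, can be taken to satisfy ``$t_1$ intersects $t_2$ through $t_2(s_1)$'', with $\ell$ serving as the common horizontal line required by Lemma~\ref{triangle3}.

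Next I would locate the defining points. By Observation~\ref{shrink-triangle-obs} each $t_i=t(p_i,q_i)$ has one of $p_i,q_i$ at a corner and the other on the opposite side, and by Observation~\ref{no-point-in-triangle-obs} no other point of $P$ lies in $t_i$. Using these facts I would place $p_1,q_1$ on the slanted sides $t_1(s_2),t_1(s_3)$ above $t_2(s_1)$, and $p_2,q_2$ on $t_2(s_2),t_2(s_3)$ above $\ell$, as Lemma~\ref{triangle3} requires; the lemma then yields $\max\{t(p_1,p_2),t(q_1,q_2)\}\prec\max\{t_1,t_2\}$. Now consider the $4$-cycle $p_1,q_1,q_2,p_2$ in $K_{\trids}(P)$: its edges $(p_1,q_1)$ and $(p_2,q_2)$ represent $t_1$ and $t_2$ and lie in $T$, while $(q_1,q_2)$ and $(p_1,p_2)$ are both strictly lighter than $\max\{t_1,t_2\}$. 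Hence the heaviest edge of this cycle is the larger of $t_1,t_2$, it is unique, and it lies in $T$ — contradicting Lemma~\ref{cycle-lemma} and proving $t_1\cap t_2\cap t_3\cap t_4=\emptyset$.

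The routine parts (the line $\ell$, the cycle, the tree exchange) are easy; the hard part is the middle step. Making the pigeonhole precise requires a case analysis over which corner of each triangle hosts its corner-defining point and over the orderings of the four chords on $\ell$, and in each case one must verify that the genuine defining points actually fall on the two slanted sides, inside the regions ``above $t_2(s_1)$'' and ``above $\ell$'' that Lemma~\ref{triangle3} needs; if, for instance, a triangle's corner-defining point is its bottom vertex, the two defining points sit on different sides and one must reselect the pair or pass to the symmetric configuration. One must also invoke general position to ensure that the four endpoints are distinct, so the $4$-cycle is genuine, and that $w(t_1)\neq w(t_2)$, so that the maximum-weight edge of the cycle is unique as Lemma~\ref{cycle-lemma} demands.
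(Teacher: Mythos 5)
First, a point of reference: the paper does not prove this lemma at all — it is imported verbatim from Biniaz et al.~\cite{Biniaz2015} — so there is no in-paper proof to compare against, and your argument must stand on its own. It does not, because of a genuine gap at its central step. You reduce the four-triangle hypothesis to finding a single pair $t_1,t_2$ with ``$t_1$ intersects $t_2$ through $t_2(s_1)$'' and then invoke Lemma~\ref{triangle3}. But that lemma's hypotheses are not about the two triangles alone; they require all four \emph{defining points} to sit in prescribed places: $p_1,q_1$ on the slanted sides of $t_1$ above $t_2(s_1)$, and $p_2,q_2$ on the slanted sides of $t_2$ above a horizontal line $\ell$ that meets both triangles. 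For a bare pair of intersecting triangles of $\mathcal{T}$ this need not hold: by Observation~\ref{shrink-triangle-obs} one defining point of $t_2$ sits at a corner, and if that corner is the bottom apex of $t_2$ (or the opposite-side point lies low on a slant side), then no admissible $\ell$ exists at all, since $\ell$ must also reach $t_1$, whose lowest point is precisely the apex that poked through $t_2(s_1)$. Choosing $\ell$ through the common point $z$ does nothing to force $p_2,q_2$ above it. Indeed, if your two-triangle reduction were sound it would show that \emph{no} two triangles of $\mathcal{T}$ can meet in this way — contradicting the paper's own Lemmas~\ref{side-intersection} and~\ref{vertex-intersection}, which explicitly permit one, respectively two, such intersections, and the tight example of Figure~\ref{five-fig}. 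This is exactly why Corollary~\ref{biniaz-cor} needs a \emph{chain of three} triangles before Lemma~\ref{triangle3} fires: the third triangle is what pins the defining points of the second one above the line $\ell$. Your pigeonhole only ever produces a pair, so the hypothesis ``four'' is never genuinely used, and the deferred ``case analysis'' cannot be completed.

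Two further defects: the common intersection of the four triangles may be a single point rather than a region with interior (the lemma is invoked in exactly that form in the proof of Lemma~\ref{vertex-intersection}, at the vertex $t(v_3)$ shared by four triangles, and Figure~\ref{five-fig} shows triangles meeting $t$ in one point), so you may not ``pick a point $z$ in its interior''; and tree edges may share endpoints, so your $4$-cycle can degenerate and the uniqueness of the maximum-weight edge required by Lemma~\ref{cycle-lemma} is not secured by the stated general-position assumption. A workable argument operates directly at the common point $z$: every downward homothet containing $z$ has its three corners falling one into each of three alternating cones around $z$, so each of the four defining pairs occupies one of only a few cone-pair types; two points of $P$ lying in the same cone of $z$ on the boundaries of two such triangles span a triangle smaller than the larger of the two, and a pigeonhole over the four edges then produces a cycle violating Lemma~\ref{cycle-lemma}. (The authors carry out precisely this scheme for the analogous four-squares statement.) That counting at $z$ is where having four, rather than two or three, triangles is essential, and it is the ingredient your proof is missing.
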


As a consequence of Lemma~\ref{triangle3}, we have the following corollary:
\begin{corollary}
\label{biniaz-cor}
 Let $t_1, t_2, t_3$ be three triangles in $\mathcal{T}$. Then $t_1$, $t_2$, and $t_3$ cannot make a chain configuration, such that $t_2$ intersects $t_3$ through $t_3(s_1)$ and $t_1$ intersects both $t_2$ and $t_3$ through $t_2(s_1)$ and $t_3(s_1)$. See Figure~\ref{triangle-intersection-fig}(b).
\end{corollary}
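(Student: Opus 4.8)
The plan is to argue by contradiction and to recognize that the forbidden chain is precisely the hypothesis of Lemma~\ref{triangle3}, so that a single application of that lemma followed by Lemma~\ref{cycle-lemma} exhibits a lighter spanning tree. Suppose $t_1,t_2,t_3\in\mathcal{T}$ form the chain, and let $e_1=(a_1,b_1)$, $e_2=(a_2,b_2)$, $e_3=(a_3,b_3)$ be the edges of the minimum spanning tree $T$ that these triangles represent. The role of the third triangle is to supply the horizontal line required by Lemma~\ref{triangle3}: I would set $\ell$ to be the line supporting $\tra{t_3}{s_1}$. Since by hypothesis both $t_1$ and $t_2$ intersect $t_3$ through $\tra{t_3}{s_1}$, a corner of each dips below $\ell$, so $\ell$ meets both $t_1$ and $t_2$, exactly as the lemma demands.

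Next I would verify that the defining endpoints of $e_1$ and $e_2$ can play the roles of $p_1,q_1$ and $p_2,q_2$. Because $t_1$ intersects $t_2$ through $\tra{t_2}{s_1}$, the only corner of $t_1$ lying inside $t_2$ is its dipping corner, which by Observation~\ref{no-point-in-triangle-obs} carries no point of $P$; hence by Observation~\ref{shrink-triangle-obs} the corner-endpoint of $e_1$ is one of the two remaining (upper) corners, so $a_1$ and $b_1$ lie on the two slanted sides $\tra{t_1}{s_2}$ and $\tra{t_1}{s_3}$ and above $\tra{t_2}{s_1}$. The same reasoning applied to the pair $(t_2,t_3)$, using that $t_2$ intersects $t_3$ through $\tra{t_3}{s_1}$, places $a_2$ and $b_2$ on $\tra{t_2}{s_2}$ and $\tra{t_2}{s_3}$ and above $\ell=\tra{t_3}{s_1}$. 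Thus all four positional hypotheses on $p_1,q_1,p_2,q_2$ in Lemma~\ref{triangle3} are met.

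Applying Lemma~\ref{triangle3} to the pair $(t_1,t_2)$ then yields $\max\{t(a_1,a_2),\,t(b_1,b_2)\}\prec\max\{t_1,t_2\}$, so in $K_{\trids}(P)$ both cross-edges $(a_1,a_2)$ and $(b_1,b_2)$ are strictly lighter than the heavier of $e_1$ and $e_2$. I would finish with the four-cycle $a_1,b_1,b_2,a_2$, whose edges are $e_1$, $(b_1,b_2)$, $e_2$, and $(a_1,a_2)$: its unique maximum-weight edge is the heavier of $e_1,e_2$ (areas are distinct by general position), and that edge belongs to $T$, contradicting Lemma~\ref{cycle-lemma}. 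Hence no such chain can occur.

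The step I expect to be delicate is the positioning in the second paragraph. One must be sure, from the definition of the chain and Figure~\ref{triangle-intersection-fig}(b), that $\ell=\tra{t_3}{s_1}$ genuinely crosses both $t_1$ and $t_2$, and, more subtly, that the lower endpoints of $e_1$ and $e_2$ do not slip below the reference horizontal lines by escaping sideways past a neighbouring triangle. This is exactly where the three intersection conditions of the chain are needed simultaneously, and where Observations~\ref{shrink-triangle-obs} and~\ref{no-point-in-triangle-obs} do the work; once the endpoints are certified to lie on the prescribed slanted sides above the prescribed lines, the remainder is the routine cycle-exchange argument.
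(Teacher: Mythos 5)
Your proof is correct and follows exactly the route the paper intends: the paper states this corollary as an immediate consequence of Lemma~\ref{triangle3} without writing out the details, and your argument (taking $\ell$ to be the line through $\tra{t_3}{s_1}$, certifying via Observations~\ref{shrink-triangle-obs} and~\ref{no-point-in-triangle-obs} that the defining points of $e_1$ and $e_2$ sit on the slanted sides above the required lines, then closing the four-cycle against Lemma~\ref{cycle-lemma}) is precisely that consequence spelled out. The "delicate" sideways-escape worry you flag does not arise because the triangles are homothets, so the portion of $t_1$ below the line of $\tra{t_2}{s_1}$ is trapped inside $t_2$ (and likewise for $t_2$ in $t_3$), which your appeal to the intersection conditions already covers.
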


\begin{figure}[htb]
  \centering
\setlength{\tabcolsep}{0in}
  $\begin{tabular}{cc}
\multicolumn{1}{m{.5\columnwidth}}{\centering\includegraphics[width=.28\columnwidth]{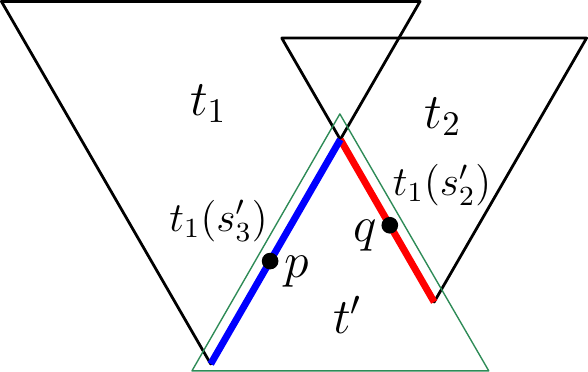}}
&\multicolumn{1}{m{.5\columnwidth}}{\centering\includegraphics[width=.25\columnwidth]{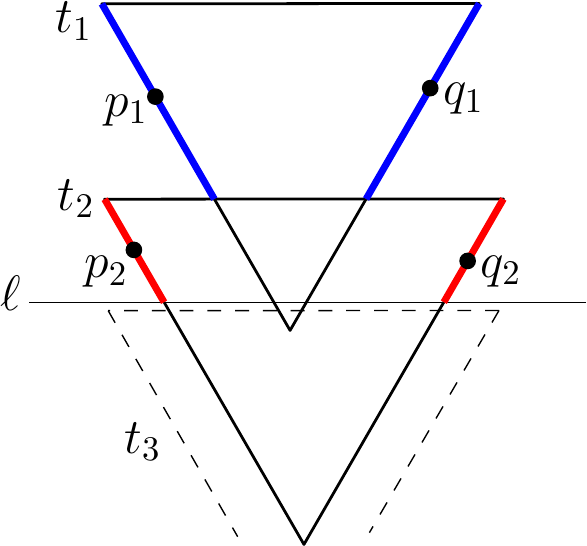}} 
\\(a)&(b)
\end{tabular}$
  \caption{(a) Illustration of Lemma~\ref{triangle-intersection-lemma}. (b) Illustration of Lemma~\ref{triangle3}.}
  \label{triangle-intersection-fig}
\end{figure}

\begin{lemma}
 \label{triangle-intersection-lemma}
Let $t_1$ be a downward triangle which intersects a downward triangle $t_2$ through $t_2(v_2)$. Let $p$ be a point on $t_1(s_3)$ and to the left of $t_2(s_2)$, and let $q$ be a point on $t_2(s_2)$ and to the right of $t_1(s_3)$. Then, $\tr{p,q}\prec\max\{t_1,t_2\}$.
\end{lemma}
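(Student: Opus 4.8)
The plan is to compute the size of $t(p,q)$ through the three supporting lines of a downward triangle and to show this size is strictly dominated by that of the larger of $t_1$ and $t_2$. Recall that a downward equilateral triangle is the intersection of three half-planes bounded by lines in the three fixed directions (those of its sides $s_1,s_2,s_3$), and that by Observation~\ref{shrink-triangle-obs} the smallest downward triangle $t(p,q)$ through $p$ and $q$ is obtained by taking, in each of the three directions, the supporting line of the pair $\{p,q\}$ (the line through whichever of $p,q$ is extreme in that direction). Since the three outward unit normals of a downward triangle sum to zero, its side length is proportional to the sum of the three (signed) supporting-line offsets, and hence the area is an increasing function of that sum. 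It therefore suffices to compare offsets direction by direction: I will show that in each of the three directions the supporting line of $t(p,q)$ lies no farther out than the corresponding side of $t_1$ (or of $t_2$), with strict inequality in at least one direction.

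First I pin down the two slanted directions, those of $s_3$ and $s_2$. Because $p$ lies on $t_1(s_3)$ while $q$ lies to the right of $t_1(s_3)$, the point $p$ is the extreme one of $\{p,q\}$ in the $s_3$-direction, so the $s_3$-side of $t(p,q)$ coincides with the line carrying $t_1(s_3)$; symmetrically, since $q$ lies on $t_2(s_2)$ and $p$ lies to the left of $t_2(s_2)$, the $s_2$-side of $t(p,q)$ coincides with the line carrying $t_2(s_2)$. Thus two of the three offsets of $t(p,q)$ are exactly the corresponding offsets of $t_1$ and of $t_2$, and only the third (``top'') direction remains free; there the offset of $t(p,q)$ equals the height of the higher of $p$ and $q$.

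Finally I invoke the corner hypothesis. Because $t_1$ intersects $t_2$ through the corner $t_2(v_2)$, that corner lies in the interior of $t_1$; evaluating the three supporting inequalities of $t_1$ at $v_2$ shows that an offset of $t_2$ in a slanted direction is strictly smaller than the corresponding offset of $t_1$. Substituting the two fixed offsets and this strict inequality into the offset-sum then yields $t(p,q)\prec t_1\preceq\max\{t_1,t_2\}$ in one case and $t(p,q)\prec t_2\preceq\max\{t_1,t_2\}$ in the other, which is exactly why the statement is phrased with the maximum. The main obstacle is the free (top) direction: the naive bound ``height of the higher point $\le$ the larger of the two tops'' is too weak, so one must argue, from the placement of $p$ to the left of $t_2(s_2)$, of $q$ to the right of $t_1(s_3)$, and of $v_2$ interior to $t_1$, that the higher of $p$ and $q$ cannot rise above the top side of the triangle we are comparing against. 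A short case distinction on which of $p,q$ is higher, governed by the position of $v_2$ inside $t_1$, is what closes this gap and selects which of $t_1,t_2$ dominates.
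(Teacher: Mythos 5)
Your route is genuinely different from the paper's, and its ingredients are all correct. The paper argues via an auxiliary upward triangle: it splits $t_1(s_3)$ and $t_2(s_2)$ at their crossing point, takes the longer of the two subsegments containing $p$ and $q$ (WLOG the one on $t_1(s_3)$), erects the upward equilateral triangle $t'$ on it, observes that $t'$ contains both $p$ and $q$ while $t'\prec t_1$, and concludes $t(p,q)\preceq t'$ by shrinkability. Your support-function decomposition replaces that construction by a direct computation: the side length of a downward homothet is an increasing linear function of the sum of its three supporting offsets, two of the offsets of $t(p,q)$ are pinned to the $s_3$-offset of $t_1$ and the $s_2$-offset of $t_2$ exactly as you say, and since $t_2(v_2)$ lies on the line carrying $t_2(s_2)$ and in the interior of $t_1$, the $s_2$-offset of $t_2$ is strictly smaller than that of $t_1$. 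What your version buys is a stronger, case-free conclusion (in fact $t(p,q)\prec t_1$, not merely $\prec\max\{t_1,t_2\}$); what the paper's version buys is brevity and reuse of the shrinkability machinery already set up in the preliminaries.

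The one place you must tighten is the ``free'' top direction, and the fix is simpler than you suggest. Your proposed case distinction, in which the comparison target alternates between $t_1$ and $t_2$ according to which of $p,q$ is higher, would not go through: when $p$ is the higher point, nothing forces $p$ to lie below the top side of $t_2$ (the point $p$ may lie entirely above $t_2$), so the branch aiming at $t(p,q)\prec t_2$ can fail. The correct closing is to compare against $t_1$ in all cases: $p\in t_1(s_3)\subseteq t_1$ bounds the height of $p$ by the top offset of $t_1$, and $q$ lies on $t_2(s_2)$, whose highest point is $t_2(v_2)$, a point interior to $t_1$; hence $q$ also lies strictly below the top side of $t_1$. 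Together with the strict decrease in the $s_2$-offset, the offset sum of $t(p,q)$ is strictly smaller than that of $t_1$, which gives $t(p,q)\prec t_1\preceq\max\{t_1,t_2\}$ with no case analysis at all.
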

\begin{proof}
 Refer to Figure~\ref{triangle-intersection-fig}(a). Let $t_1(s'_3)$ be the part of the line segment $t_1(s_3)$ which is to the left of $t_2(s_2)$, and let $t_2(s'_2)$ be the part of the line segment $t_2(s_2)$ which is to the right of $t_1(s_3)$. Without loss of generality assume that $t_1(s'_3)$ is larger than $t_2(s'_2)$. Let $t'$ be an upward triangle having $t_1(s'_3)$ as its left side. Then, $t'\prec t_1$, which implies that $t'\prec\max\{t_1,t_2\}$. Since $t'$ has both $p$ and $q$ on its boundary, the area of the downward triangle $t(p,q)$ is smaller than the area of $t'$. Therefore, $\tr{p,q}\preceq t'$; which completes the proof.
\end{proof}

Because of the symmetry, the statement of Lemma~\ref{triangle-intersection-lemma} holds even if $p$ is above $t_2(s_1)$ and $q$ is on $t_2(s_1)$.
Consider the six cones with apex at $p$, as shown in Figure~\ref{cones}.
\begin{lemma}
\label{deg-six-half}
Let $T$ be a minimum spanning tree in $\G{\trids}{P}$. Then, in $T$, every point $p$ is adjacent to at most one point in each cone $\cone{i}{p}$, where $1\le i\le 6$.
\end{lemma}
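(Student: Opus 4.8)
The plan is to argue by contradiction through the cycle-exchange property of minimum spanning trees (Lemma~\ref{cycle-lemma}). Suppose some point $p$ is adjacent in $T$ to two distinct points $q$ and $r$ that both lie in a single cone $\cone{i}{p}$. Since $T$ is a minimum spanning tree of $K_S(P)$ (Lemma~\ref{mst-in-GS}) and $P$ is in general position, the two tree edges have distinct weights; assume without loss of generality that $t(p,q)\prec t(p,r)$, so $(p,r)$ is the heavier of the two. The three points $p,q,r$ form a cycle in $K_S(P)$ with edges $(p,q)$, $(q,r)$, and $(p,r)$. If I can show that $t(q,r)\prec t(p,r)$, then $(p,r)$ is the unique maximum-weight edge of this cycle, so Lemma~\ref{cycle-lemma} forces $(p,r)$ out of every minimum spanning tree, contradicting $(p,r)\in T$. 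Thus the lemma reduces to the geometric claim: if $q$ and $r$ lie in a common cone of $p$ and $t(p,q)\prec t(p,r)$, then $t(q,r)\prec t(p,r)$.

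To prove this claim I would route through the hexagon of Observation~\ref{obs2}, viewed from $r$ rather than from $p$. It suffices to show that $q$ lies in the interior of the hexagon $\hex{r}{p}$, the regular hexagon centered at $r$ having $p$ on its boundary: once $q\in\operatorname{int}\hex{r}{p}$, Observation~\ref{obs2} applied at the center $r$ yields $t(r,q)\prec t(r,p)=t(p,r)$, which is exactly $t(q,r)\prec t(p,r)$. So the whole task becomes one of locating $q$ inside $\hex{r}{p}$.

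The reason $q$ falls inside $\hex{r}{p}$ is that $q$ and $r$ share a cone of $p$ whose two bounding rays are parallel to edges of the hexagon. Concretely, the cone $\cone{i}{p}$ meets $\hex{p}{r}$ in a single equilateral sub-triangle $\Delta$ with apex $p$ whose outer edge lies on a hexagon edge through $r$; since $t(p,q)\prec t(p,r)$, the point $q$ lies in the interior of $\Delta$. Because the regular hexagon is centrally symmetric and its boundary is a straight segment within each cone, comparing triangular sizes of points in one cone reduces to a linear inequality, and I expect to finish by a short case analysis according to which cone contains the difference vector $q-r$, reading off the size of $t(q,r)$ directly from the relevant hexagon edge in each case. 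General position guarantees that $q$ is strictly interior to $\hex{r}{p}$, so every inequality is strict, which is what the cycle argument needs.

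The main obstacle is precisely this last step, and the delicate situation is the ``odd'' cone, in which $r$ rather than $p$ is the corner of the smallest downward triangle $t(p,r)$. In that case $q$ need not lie inside $t(p,r)$ at all, so the naive reasoning ``both $q$ and $r$ lie in $t(p,r)$, hence shrink $t(p,r)$'' is unavailable, and one genuinely needs the full hexagon together with its central symmetry to bound $t(q,r)$. By contrast, in the ``even'' cone, where $p$ is the corner of $t(p,r)$, the point $q$ does lie inside $t(p,r)$ and the claim follows at once from the shrinkability of $\trid$ (Observation~\ref{shrink-triangle-obs}). Isolating the odd cone and establishing $q\in\operatorname{int}\hex{r}{p}$ there is where the real work lies.
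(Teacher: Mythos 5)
Your proposal is correct and follows essentially the same route as the paper: for the odd cones the paper likewise reduces the claim to showing that the nearer neighbour lies in the interior of the regular hexagon centred at the farther neighbour and passing through $p$, then applies Observation~\ref{obs2} and the cycle-exchange Lemma~\ref{cycle-lemma} (with the roles of $q$ and $r$ swapped relative to your labelling). The only cosmetic difference is that the paper dispatches the even cones directly from the construction of $\G{\trids}{P}$ (the graph itself has at most one edge per even cone) rather than via shrinkability, but the substance of the argument is identical.
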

\begin{proof}
If $i$ is even, then by the construction of $\G{\trids}{P}$, which is given in Section~\ref{preliminaries}, $p$ is adjacent to at most one point in $\cone{i}{p}$. Assume $i$ is odd. For the sake of contradiction, assume in $T$, the point $p$ is adjacent to two points $q$ and $r$ in a cone $\cone{i}{p}$. Then, $t(p,q)$ has $q$ on a corner, and $t(p,r)$ has $r$ on a corner. Without loss of generality assume $t(p,r)\prec t(q,r)$. Then, the hexagon $\hex{q}{p}$ has $r$ in its interior. Thus, $\tr{q,r}\prec \tr{p,q}$. Then the cycle $r,p,q,r$ contradicts Lemma~\ref{cycle-lemma}. Therefore, $p$ is adjacent to at most one point in each of the six cones.
\end{proof}

In Algorithm~\ref{alg1}, in each iteration of the {\sf while} loop, let $\mathcal{T}(e^+)$ be the triangles representing the edges of $F$. Recall that $e$ is the smallest edge in $F$, and hence, $t(e)$ is a smallest triangle in $\mathcal{T}(e^+)$.
Let $e=(p,q)$ and let $\mathcal{I}(e^+)$ be the set of triangles in $\mathcal{T}(e^+)$ (excluding $t(e)$) which intersect $t(e)$. We show that $\mathcal{I}(e^+)$ contains at most eight triangles.
We partition the triangles in $\mathcal{I}(e^+)$ into $\{\mathcal{I}_1,\mathcal{I}_2\}$, such that every triangle $\tau\in\mathcal{I}_1$ shares only $p$ or $q$ with $t=t(e)$, i.e., $\mathcal{I}_1=\{\tau: \tau\in\mathcal{I}(e^+),\tau\cap t\in \{p,q\}\}$, and every triangle $\tau\in\mathcal{I}_2$ intersects $t$ either through a side or through corner which is not $p$ nor $q$.

\begin{wrapfigure}{r}{0.4\textwidth}
\vspace{-20pt}
 \begin{center}
\includegraphics[width=.35\textwidth]{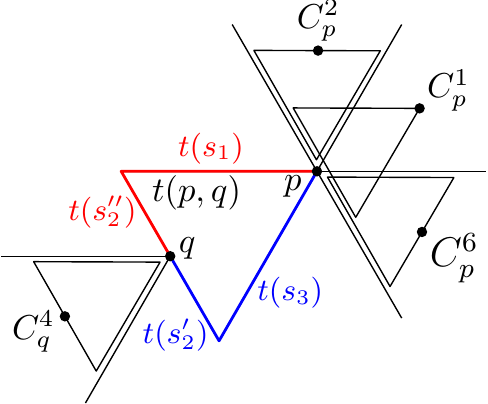}
  \end{center}
\vspace{-15pt}
  \caption{Illustration of the triangles in $\mathcal{I}_1$.}
\label{cones-pq}
\vspace{-8pt}
\end{wrapfigure}
By Observation~\ref{shrink-triangle-obs}, for each triangle $t(p,q)$, one of $p$ and $q$ is on a corner of $t(p,q)$ and the other one is on the side opposite to that corner. Without loss of generality assume that $p$ is on the corner $t(v_1)$, and hence, $q$ is on the side $t(s_2)$. See Figure~\ref{cones-pq}. Note that the other cases, where $p$ is on $t(v_2)$ or on $t(v_3)$ are similar.
Since the intersection of $t$ with any triangle $\tau\in\mathcal{I}_1$ is either $p$ or $q$, $\tau$ has either $p$ or $q$ on its boundary. In combination with Observation~\ref{no-point-in-triangle-obs}, this implies that $\tau$ represent an edge $e'$ in $T$, and hence, either $p$ or $q$ is an endpoint of $e'$. As illustrated in Figure~\ref{cones-pq}, the other endpoint of $e'$ can be either in $\cone{1}{p}$, $\cone{2}{p}$, $\cone{6}{p}$, or in $\cone{4}{q}$, because otherwise $\tau\cap t\notin \{p,q\}$. By Lemma~\ref{deg-six-half}, $p$ has at most one neighbor in each of $\cone{1}{p}$, $\cone{2}{p}$, $\cone{6}{p}$, and $q$ has at most one neighbor in $\cone{4}{q}$. Therefore, $\mathcal{I}_1$ contains at most four triangles. We are going to show that $\mathcal{I}_2$ also contains at most four triangles. 

The point $q$ divides $t(s_2)$ into two parts. Let $\tra{t}{s'_2}$ and $\tra{t}{s''_2}$ be the parts of $\tra{t}{s_2}$ which are below and above $q$, respectively; see Figure~\ref{cones-pq}. The triangles in $\mathcal{I}_2$ intersect $t$ either through $t(s_1)\cup t(s''_2)$ or through $t(s_3)\cup t(s'_2)$; which are shown by red and blue polylines in Figure~\ref{cones-pq}. We show that most two triangles in $\mathcal{I}_2$ intersect $t$ through each of $t(s_1)\cup t(s''_2)$ or $t(s_3)\cup t(s'_2)$. Because of symmetry, we only prove for $t(s_3)\cup t(s'_2)$. When a triangle $t'$ intersects $t$ through both $t(s_3)$ and $t(s'_2)$ we say $t'$ intersects $t$ through $t(v_3)$. In the next lemma, we prove that at most one triangle in $\mathcal{I}_2$ intersects $t$ through each of $\tra{t}{s_3}$, $\tra{t}{s'_2}$. Again, because of symmetry, we only prove for $\tra{t}{s_3}$. 

\begin{figure}[htb]
  \centering
\setlength{\tabcolsep}{0in}
  $\begin{tabular}{cc}
\multicolumn{1}{m{.5\columnwidth}}{\centering\includegraphics[width=.3\columnwidth]{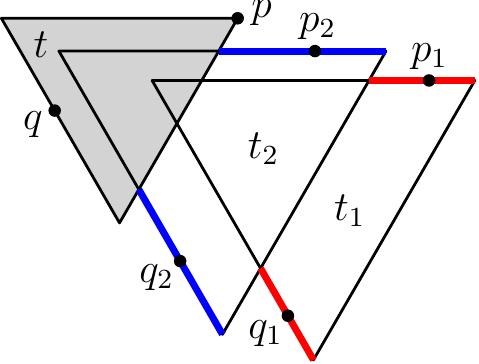}}
&\multicolumn{1}{m{.5\columnwidth}}{\centering\includegraphics[width=.3\columnwidth]{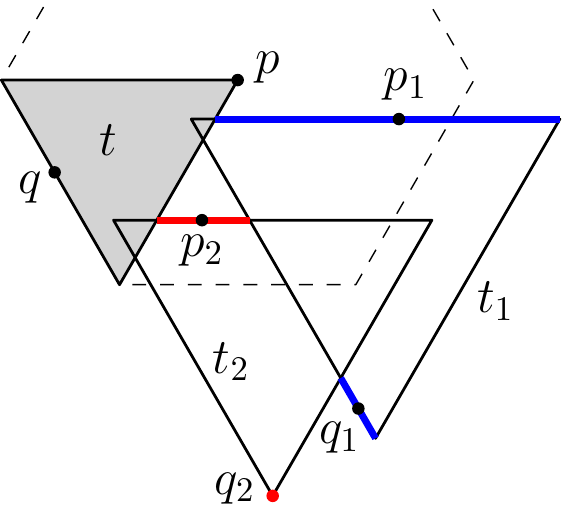}} \\
(a) & (b)
\end{tabular}$
  \caption{Illustration of Lemma~\ref{side-intersection}: (a) $t_1(v_2)\in t_2$. (b) $t_1(v_2)\notin t_2$ and $t_2(v_2)\notin t_1$.}
  \label{two-triangles-fig}
\end{figure}

\begin{lemma}
\label{side-intersection}
At most one triangle in $\mathcal{I}_2$ intersects $\trmin$ through $\tra{\trmin}{s_3}$.
\end{lemma}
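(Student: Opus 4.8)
The plan is to argue by contradiction: suppose two triangles $t_1,t_2\in\mathcal{I}_2$ both intersect $t$ through $\tra{t}{s_3}$, and derive a violation of the minimum spanning tree property. First I would record the local geometry. Since $t_1,t_2,t$ are all downward triangles and $s_3$ is the left slanted side of $t$, the only corner of a downward triangle that can enter $t$ across $s_3$ while keeping its body outside is the top corner opposite $s_3$, namely $t_i(v_2)$; hence both $t_1(v_2)$ and $t_2(v_2)$ lie in the interior of $t$. Moreover $t_1,t_2\in\mathcal{T}(e^+)$, so $t\preceq t_1$ and $t\preceq t_2$, and $t$ is the smallest of the three triangles. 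I would then split into the two cases suggested by Figure~\ref{two-triangles-fig}.

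In case~(a), $t_1(v_2)\in t_2$. Because $t_1(v_2)$ is again the top corner opposite a left slanted side, the same reasoning shows that $t_1$ pokes this corner into $t_2$ across $\tra{t_2}{s_3}$, i.e.\ $t_1$ intersects $t_2$ through $\tra{t_2}{s_3}$. Together with the fact that both $t_1$ and $t_2$ intersect $t$ through $\tra{t}{s_3}$, the triple $t_1,t_2,t$ realizes exactly the chain forbidden by Corollary~\ref{biniaz-cor}, read with the left slanted side in the role of $s_1$. This reading is legitimate because Lemma~\ref{triangle3} (and hence its corollary) is a statement about homothetic equilateral triangles and holds verbatim after re-orienting the frame so that $s_3$ is the distinguished side. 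This contradiction rules out case~(a).

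In case~(b), neither $t_1(v_2)\in t_2$ nor $t_2(v_2)\in t_1$. Here I would first argue that $t_1$ and $t_2$ must still overlap: both $v_2$ corners sit inside the bounded region $t$ while the two bodies fan out across $s_3$, and since each $t_i$ is at least as large as $t$, their interiors are forced to meet. As the $v_2$ corners are excluded, the overlap is realized by another corner; a position argument (illustrated in Figure~\ref{two-triangles-fig}(b)) shows that the bottom corner of one of them, say $t_1(v_3)$, lies in the other, so $t_1$ intersects $t_2$ through the horizontal side $\tra{t_2}{s_1}$ — precisely the configuration of Lemma~\ref{triangle3}. I would then fix a horizontal line $\ell$ meeting both triangles and take the two matching edges $e_1=(p_1,q_1)$ and $e_2=(p_2,q_2)$ represented by $t_1$ and $t_2$. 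By Observation~\ref{shrink-triangle-obs} their endpoints lie on the slanted sides $s_2,s_3$ of the respective triangles, and the excluded-corner hypothesis places them above $\tra{t_2}{s_1}$ and above $\ell$, so Lemma~\ref{triangle3} yields $\max\{t(p_1,p_2),t(q_1,q_2)\}\prec\max\{t_1,t_2\}$.

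To finish, observe that $t(p_1,p_2)$ and $t(q_1,q_2)$ represent edges $(p_1,p_2)$ and $(q_1,q_2)$ of $K_{\trids}(P)$ whose weights are both smaller than $\max\{w(e_1),w(e_2)\}$. Hence in the four-cycle $p_1,p_2,q_2,q_1$ the unique heaviest edge is the larger of $e_1,e_2$, which lies in $T$; by Lemma~\ref{cycle-lemma} that edge belongs to no minimum spanning tree, contradicting $e_1,e_2\in T$. Combining the two cases shows that at most one triangle of $\mathcal{I}_2$ intersects $t$ through $\tra{t}{s_3}$. I expect the main obstacle to be case~(b): one must verify rigorously that the two triangles are forced to meet through a horizontal side (ruling out a mere point contact or a meeting through $s_2$), and that each matching edge genuinely has its two endpoints on $s_2,s_3$ and above the relevant horizontal lines so that Lemma~\ref{triangle3} applies. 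Depending on which corner of $t_i$ carries the corner endpoint of $e_i$, this verification may split into a few sub-cases.
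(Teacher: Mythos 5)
Your case~(a) is exactly the paper's first step: when one of the two entering corners lies inside the other triangle, the three triangles form the forbidden chain and Corollary~\ref{biniaz-cor} (applied after the $120^\circ$ re-orientation you describe) yields the contradiction. The genuine gap is in case~(b), and it is precisely the obstacle you flagged. When neither entering corner lies in the other triangle, the two triangles do \emph{not} meet in the configuration of Lemma~\ref{triangle3}: the paper's analysis of this case shows that (after ruling out a sub-case) $\tra{t_2}{v_1}$ ends up inside $t_1$ through the \emph{slanted} side $\tra{t_1}{s_2}$, not through the horizontal side $\tra{t_2}{s_1}$. Moreover, the matched points are not where Lemma~\ref{triangle3} needs them: by Observation~\ref{shrink-triangle-obs} one endpoint of each edge sits at a corner, and in the paper's setup $p_i$ lies on the horizontal side $\tra{t_i}{s_1}$ rather than on $s_2$ or $s_3$. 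As a result one only gets $\tr{q_1,q_2}\prec\max\{t_1,t_2\}$; there is no reason for $\tr{p_1,p_2}$ to be small, so the four-cycle $p_1,p_2,q_2,q_1$ does not violate Lemma~\ref{cycle-lemma}.

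The paper closes this hole by routing the cycle through the endpoint $p$ of the central edge $e=(p,q)$, which your argument never uses. It first handles the sub-case $q_1\in\hex{p}{q}$ with the three-cycle $p,p_1,q_1,p$ (via Observation~\ref{obs2} and Lemma~\ref{triangle-intersection-lemma}), and in the remaining sub-case it pins down the positions of $p_2$ and $q_2$ (namely $p_2\in\hex{p}{q}$ and $q_2$ at a corner of $t_2$) to obtain $\tr{p,p_1}\prec t_1$, $\tr{p,p_2}\prec \trmin$ and $\tr{q_1,q_2}\prec\max\{t_1,t_2\}$, contradicting Lemma~\ref{cycle-lemma} with the five-cycle $p,p_1,q_1,q_2,p_2,p$. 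Some argument of this kind---exploiting that both intruding triangles are forced close to $p$ via the empty hexagon $\hex{p}{q}$---appears unavoidable, so your case~(b) needs to be replaced rather than merely tightened.
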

\begin{proof}
The proof is by contradiction. Assume two triangles $t_1(p_1,q_1)$ and $t_2(p_2,q_2)$ in $\mathcal{I}_2$ intersect $\trmin$ through $\tra{\trmin}{s_3}$. Without loss of generality assume that $p_i$ is on $t_i(s_1)$ and $q_i$ is on $t_i(s_2)$ for $i=1,2$. Recall that the area of $t_1$ and the area of $t_2$ are at least the area of $\trmin$. If $\tra{t_1}{v_2}$ is in the interior of $t_2$ (as shown in Figure~\ref{two-triangles-fig}(a)) or $\tra{t_2}{v_2}$ is in the interior of $t_1$, then we get a contradiction to Corollary~\ref{biniaz-cor}. Thus, assume that $\tra{t_1}{v_2}\notin t_2$ and $\tra{t_2}{v_2} \notin t_1$. 

Without loss of generality assume that $\tra{t_1}{s_1}$ is above $\tra{t_2}{s_1}$; see Figure~\ref{two-triangles-fig}(b). By Lemma~\ref{triangle-intersection-lemma}, we have $t(p,p_1)\prec \max\{t,t_1\}\preceq t_1$. If $q_1$ is in $\hex{p}{q}$, then by Observation~\ref{obs2}, $t(p,q_1)\prec t$. Then, the cycle $p,p_1,q_1,p$ contradicts Lemma~\ref{cycle-lemma}. Thus, assume that $q_1\notin \hex{p,q}$. In this case $\tra{t_2}{s_3}$ is to the left of $\tra{t_1}{s_3}$, because otherwise $q_1$ lies in $t_2$ which contradicts Observation~\ref{no-point-in-triangle-obs}.
Since both $t_1$ and $t_2$ are larger than $t$, $t_2$ intersects $t_1$ through $t_1(s_2)$, and hence $\tra{t_2}{v_1}$ is in the interior of $t_1$. This implies that $q_2$ is on $\tra{t_2}{v_3}$. In addition, $p_2$ is on the part of $\tra{t_2}{s_1}$ which lies in the interior of $\hex{p}{q}$. By Observation~\ref{obs2} and Lemma~\ref{triangle-intersection-lemma}, we have $t(p,p_2)\prec t$ and $t(q_1,q_2)\prec \max\{t_1,t_2\}$, respectively. Thus, the cycle $p,p_1,q_1,q_2,p_2,p$ contradicts Lemma~\ref{cycle-lemma}. 
\end{proof}

\begin{lemma}
\label{vertex-intersection}
 At most two triangles in $\mathcal{I}_2$ intersect $\trmin$ through $\tra{\trmin}{v_3}$.
\end{lemma}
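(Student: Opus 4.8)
The plan is to reduce the statement to the no-common-point property of Lemma~\ref{intersection-lemma}. Recall that $t=t(e)$ is a downward triangle in $\mathcal{T}$, and that every triangle of $\mathcal{I}_2$ is likewise a downward triangle in $\mathcal{T}$ whose area is at least that of $t$. The crucial geometric claim I would establish first is: if a triangle $t'\in\mathcal{I}_2$ intersects $t$ through $\tra{t}{v_3}$ --- that is, it crosses both $\tra{t}{s_3}$ and $\tra{t}{s'_2}$ --- then the vertex $\tra{t}{v_3}$ lies in $t'$. Granting this claim, the bound is immediate: if three \emph{distinct} triangles $t_1,t_2,t_3\in\mathcal{I}_2$ all intersected $t$ through $\tra{t}{v_3}$, then $\tra{t}{v_3}$ would lie in each of $t_1,t_2,t_3$ and also in $t$ (being a vertex of $t$), so the four triangles $t,t_1,t_2,t_3\in\mathcal{T}$ would share the common point $\tra{t}{v_3}$, contradicting Lemma~\ref{intersection-lemma}. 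Hence at most two triangles of $\mathcal{I}_2$ intersect $t$ through $\tra{t}{v_3}$.

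The whole argument therefore rests on the geometric claim, which is where I expect the real work. To prove it I would use that all triangles in $\mathcal{T}$ are homothets of $\trid$, so corresponding sides are parallel; in particular $\tra{t}{v_3}$ is the common endpoint of $\tra{t}{s_3}$ and $\tra{t}{s'_2}\subseteq\tra{t}{s_2}$, and these two sides are not parallel to each other. Consider the portion of $\partial t'$ lying inside $t$ that joins a crossing point on $\tra{t}{s_3}$ to a crossing point on $\tra{t}{s'_2}$. This portion is either a sub-segment of a single side of $t'$, or it bends at a corner of $t'$ contained in the interior of $t$. I would rule out the second possibility: the two sides of $t'$ incident to such a corner are parallel to two sides of $t$, and since a segment parallel to $\tra{t}{s_3}$ (resp. $\tra{t}{s_2}$) cannot cross $\tra{t}{s_3}$ (resp. $\tra{t}{s'_2}$), no corner of $t'$ can have both of its incident sides crossing $\tra{t}{s_3}$ and $\tra{t}{s'_2}$ simultaneously.

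Thus a single side of $t'$ must cross from $\tra{t}{s_3}$ to $\tra{t}{s'_2}$. By the same parallelism reasoning this side cannot be parallel to $\tra{t}{s_3}$ or to $\tra{t}{s_2}$, so it has to be the side of $t'$ parallel to the third side $\tra{t}{s_1}$ of $t$, namely the side of $t$ \emph{not} incident to $\tra{t}{v_3}$. A triangle lies entirely on the side of each of its edges that contains the opposite vertex; since $t$ and $t'$ are homothetic, the side of this $\tra{t}{s_1}$-parallel edge of $t'$ that contains the bulk of $t'$ is the same side on which $\tra{t}{v_3}$ (the vertex of $t$ opposite $\tra{t}{s_1}$) lies. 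As this edge of $t'$ cuts across $t$ just past the corner $\tra{t}{v_3}$, we conclude $\tra{t}{v_3}\in t'$, which establishes the claim.

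The main obstacle is precisely this case analysis: verifying that the only way a downward triangle can meet two adjacent sides of another downward triangle is to engulf their common vertex. Observation~\ref{shrink-triangle-obs} (placing $p,q$ and the analogous defining points on the sides of their triangles) together with the parallelism of corresponding sides is the decisive tool; the rest of the proof is then a clean application of Lemma~\ref{intersection-lemma} to the four triangles $t,t_1,t_2,t_3$.
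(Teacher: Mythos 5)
Your proof is correct and follows the same route as the paper: the paper's entire argument is that three such triangles would force $\tra{\trmin}{v_3}$ to lie in all four of $t,t_1,t_2,t_3$, contradicting Lemma~\ref{intersection-lemma}. The only difference is that you spell out the geometric claim that a triangle crossing both $\tra{\trmin}{s_3}$ and $\tra{\trmin}{s'_2}$ must contain $\tra{\trmin}{v_3}$, which the paper treats as immediate from its earlier discussion of how two homothetic downward triangles in $\mathcal{T}$ can intersect.
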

\begin{proof}
For the sake of contradiction assume three triangles $t_1,t_2, t_3\in \mathcal{I}_2$ intersect $\trmin$ through $\tra{\trmin}{v_3}$. This implies that $\tra{\trmin}{v_3}$ belongs to four triangles $t,t_1,t_2,t_3$, which contradicts Lemma~\ref{intersection-lemma}. \end{proof}

\begin{figure}[htb]
  \centering
\setlength{\tabcolsep}{0in}
  $\begin{tabular}{cc}
\multicolumn{1}{m{.5\columnwidth}}{\centering\includegraphics[width=.35\columnwidth]{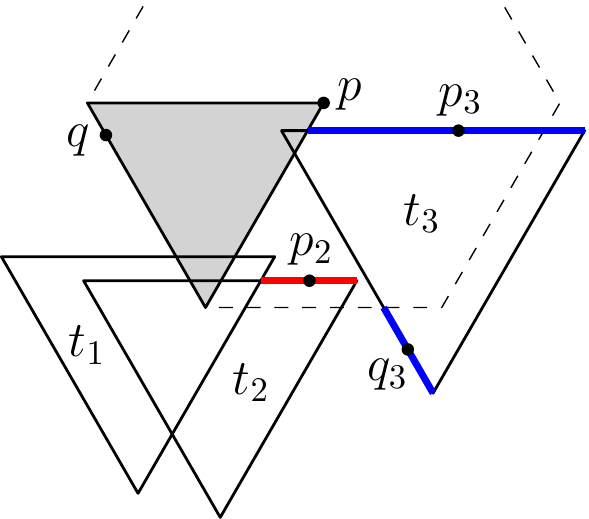}}
&\multicolumn{1}{m{.5\columnwidth}}{\centering\includegraphics[width=.35\columnwidth]{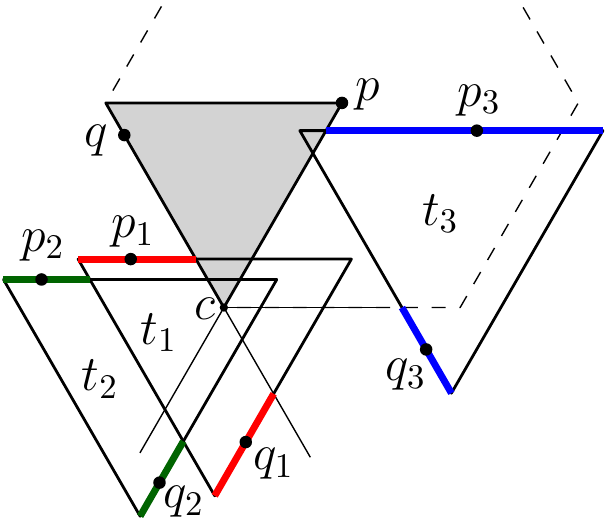}} \\
(a) & (b)\\\multicolumn{1}{m{.5\columnwidth}}{\centering\includegraphics[width=.35\columnwidth]{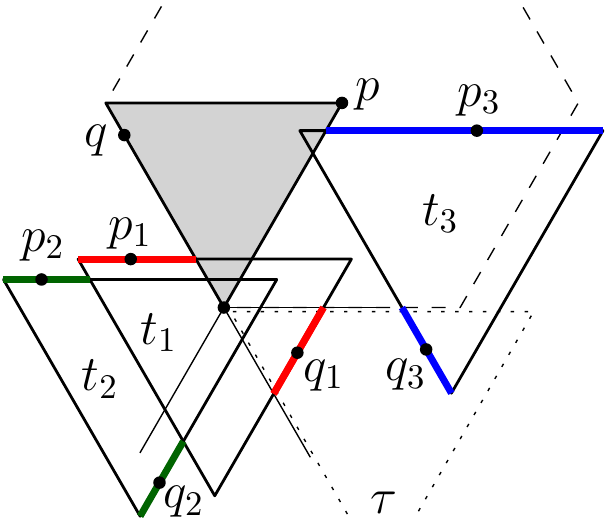}}
&\multicolumn{1}{m{.5\columnwidth}}{\centering\includegraphics[width=.35\columnwidth]{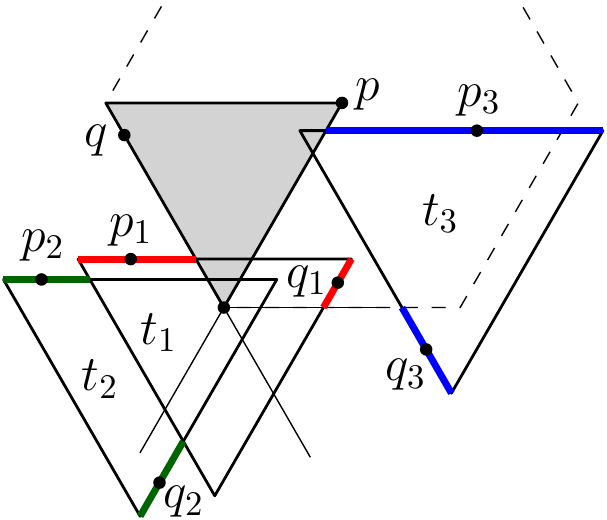}} \\
(c) & (d)
\end{tabular}$
  \caption{Illustration of Lemma~\ref{vertex-side-intersection-2}: (a) $p_2$ is to the right of $t_1(s_3)$, (b) $q_1\in C^5_{\tra{t}{v_3}}$, (c) $q_1\in C^6_{\tra{t}{v_3}}$, and (d) $q_1\in C^1_{\tra{t}{v_3}}$.}
\label{three-triangle-fig2}
\end{figure}
\begin{lemma}
\label{vertex-side-intersection-2}
If two triangles in $\mathcal{I}_2$ intersect $\trmin$ through $\tra{\trmin}{v_3}$, then no other triangle in $\mathcal{I}_2$ intersects $\trmin$ through $\tra{\trmin}{s_3}$ or through $\tra{\trmin}{s'_2}$. 
\end{lemma}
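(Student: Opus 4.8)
\noindent
The plan is to argue by contradiction, in the same spirit as Lemma~\ref{side-intersection}. Let $t_1=t_1(p_1,q_1)$ and $t_2=t_2(p_2,q_2)$ be the two triangles of $\mathcal{I}_2$ that intersect $t$ through $\tra{t}{v_3}$, so that the corner $\tra{t}{v_3}$ lies in the interior of both $t_1$ and $t_2$; by the symmetry between $\tra{t}{s_3}$ and $\tra{t}{s'_2}$ it suffices to exclude a further triangle $t_3=t_3(p_3,q_3)\in\mathcal{I}_2$ that intersects $t$ through $\tra{t}{s_3}$. All of $t_1,t_2,t_3$ belong to $\mathcal{T}(e^+)$, hence each has area at least that of $t$. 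The first step is to observe that, since $\tra{t}{v_3}$ already lies in the three triangles $t,t_1,t_2$, Lemma~\ref{intersection-lemma} forbids $t_3$ from containing $\tra{t}{v_3}$; therefore the corner of $t_3$ that pokes through $\tra{t}{s_3}$ does so strictly between $\tra{t}{v_1}$ and $\tra{t}{v_3}$, which pins down its location.

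Next I would fix the mutual configuration of the two $v_3$-triangles, with the convention that $p_i$ lies on $t_i(s_1)$ and $q_i$ on $t_i(s_2)$, and carry out the case distinction suggested by Figure~\ref{three-triangle-fig2}: it is driven by the position of $p_2$ relative to the side $t_1(s_3)$ and, when $p_2$ lies to the left of $t_1(s_3)$, by which of the cones $\cone{5}{\tra{t}{v_3}}$, $\cone{6}{\tra{t}{v_3}}$, $\cone{1}{\tra{t}{v_3}}$ with apex $\tra{t}{v_3}$ contains $q_1$. In the first case ($p_2$ to the right of $t_1(s_3)$) the relevant triangles fall into a chain configuration forbidden by Corollary~\ref{biniaz-cor}, which is an immediate contradiction.

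In each of the remaining (cone) cases I would manufacture a cycle in $K_S(P)$ whose unique maximum-weight edge is the edge represented by the largest of the triangles involved; since that edge lies in the minimum spanning tree $T$, this contradicts Lemma~\ref{cycle-lemma}. The two certificates available for this are Lemma~\ref{triangle-intersection-lemma} together with its stated symmetric form, which shows that a cross-edge such as $(p,p_i)$ or $(q_1,q_2)$ is represented by a triangle strictly smaller than $\max\{t_i,t_j\}$, and Observation~\ref{obs2}, which shows $t(p,q_i)\prec t$ whenever $q_i$ lies inside the hexagon $\hex{p}{q}$. The precise certificate depends on the cone of $q_1$: for $q_1\in\cone{5}{\tra{t}{v_3}}$ or $\cone{6}{\tra{t}{v_3}}$ the point $q_1$ is positioned so that $t(p,q_1)$ or $t(q_1,q_2)$ can be bounded, after which one assembles a short cycle on a subset of $\{p,q,p_1,q_1,p_2,q_2,p_3,q_3\}$ all of whose remaining edges are certified strictly lighter than that distinguished tree edge.

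The main obstacle is the exhaustive geometric bookkeeping of this last step. In every sub-case one must check that the chosen points genuinely lie on the sides required to invoke Lemma~\ref{triangle-intersection-lemma} and Observation~\ref{obs2} with the correct orientation, that the selected edges actually close up into a cycle, and—most delicately—that the maximum-weight edge along that cycle is \emph{unique}, so that Lemma~\ref{cycle-lemma} applies. Securing this uniqueness rather than a mere non-strict inequality is the crux, and it is exactly where the general-position assumption and the strict area comparisons of Observations~\ref{obs1} and~\ref{obs2} are needed.
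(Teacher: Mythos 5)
Your overall strategy is the same as the paper's (contradiction; reduce to the $\tra{\trmin}{s_3}$ case by symmetry; a configuration case analysis resolved by cycles violating Lemma~\ref{cycle-lemma}), but as written the proposal has a genuine gap on two counts. First, the case analysis is announced rather than executed, and that analysis \emph{is} the content of the lemma: you never establish where $q_3$ can lie (the paper first argues that $t_1(s_3)$ and $t_2(s_3)$ must be to the left of $t_3(s_3)$, using Observation~\ref{no-point-in-triangle-obs} and Observation~\ref{obs2} to kill the alternatives), you never produce the concrete cycles, and you concede that the ``exhaustive geometric bookkeeping'' remains to be done. Second, one case is dispatched incorrectly: you claim that when $p_2$ lies to the right of $t_1(s_3)$ the triangles form a chain forbidden by Corollary~\ref{biniaz-cor}. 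That corollary only applies when a corner of one $v_3$-triangle lies \emph{inside} the other (i.e.\ $\tra{t_1}{v_3}\in t_2$ or $\tra{t_2}{v_3}\in t_1$); two triangles can both contain $\tra{\trmin}{v_3}$ with $t_2(s_3)$ to the right of $t_1(s_3)$ and neither corner nested. In that configuration the paper instead shows $p_2$ falls inside $\hex{p}{q}$, applies Observation~\ref{obs2} to get $t(p,p_2)\prec \trmin$ and Lemma~\ref{triangle-intersection-lemma} to get $t(p_2,q_3)\prec t_3$, and closes the cycle $p,p_3,q_3,p_2,p$. Your appeal to Corollary~\ref{biniaz-cor} does not cover this.

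Separately, the toolkit you declare is not sufficient for all the subcases. For $q_1\in \cone{5}{\tra{\trmin}{v_3}}$ the paper needs Lemma~\ref{triangle3} (the simultaneous bound $\max\{t(p_1,p_2),t(q_1,q_2)\}\prec\max\{t_1,t_2\}$), which is not a ``symmetric form'' of Lemma~\ref{triangle-intersection-lemma}; and for $q_1\in \cone{6}{\tra{\trmin}{v_3}}$ the bound $t(q_1,q_3)\prec t_3$ is obtained by a bespoke argument (remapping $t_3$ to an equal-area downward triangle anchored at $\tra{\trmin}{v_3}$ that contains both $q_1$ and $q_3$), which neither Lemma~\ref{triangle-intersection-lemma} nor Observation~\ref{obs2} supplies. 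Until these certificates are identified and the cycles written down, the proof is not complete.
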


\begin{proof}
The proof is by contradiction. Assume two triangles $t_1(p_1,q_1)$ and $t_2(p_2,q_2)$ in $\mathcal{I}_2$ intersect $\trmin$ through $\tra{\trmin}{v_3}$, and a triangle $t_3(p_3,q_3)$ in $\mathcal{I}_2$ intersects $\trmin$ through $\tra{\trmin}{s_3}$ or $\tra{\trmin}{s'_2}$. Let $p_i$ be the point which lies on $t_i(s_1)$ for $i=1,2,3$. By Lemma~\ref{vertex-intersection}, $t_3$ cannot intersect both $\tra{\trmin}{s_3}$ and $\tra{\trmin}{s'_2}$. Thus, $t_3$ intersects $t$ either through $\tra{\trmin}{s_3}$ or through $\tra{\trmin}{s'_2}$. We prove the former case; the proof for the latter case is similar. Assume that $t_3$ intersects $t$ through $\tra{\trmin}{s_3}$. By Lemma~\ref{triangle-intersection-lemma}, $t(p,p_3)\prec t_3$. See Figure~\ref{three-triangle-fig2}. In addition, both $t_1(s_3)$ and $t_2(s_3)$ are to the left of $t_3(s_3)$, because otherwise $q_3$ lies in $t_1\cup t_2\cup \hex{p}{q}$. If $q_3\in t_1\cup t_2$ we get a contradiction to Observation~\ref{no-point-in-triangle-obs}. If $q_3\in \hex{p}{q}$ then by Observation~\ref{obs2}, we have $t(p,q_3)\prec t$, and hence, the cycle $p,p_3,q_3,p$ contradicts Lemma~\ref{cycle-lemma}.

Without loss of generality assume that $\tra{t_1}{s_1}$ is above $\tra{t_2}{s_1}$; see Figure~\ref{three-triangle-fig2}. If $\tra{t_1}{v_3}$ is in $t_2$ or $\tra{t_2}{v_3}$ is in $t_1$, then we get a contradiction to Corollary~\ref{biniaz-cor}. Thus, assume that $\tra{t_1}{v_3}\notin t_2$ and $\tra{t_2}{v_3} \notin t_1$. This implies that either (i) $\tra{t_2}{s_3}$ is to the right of $\tra{t_1}{s_3}$ or (ii) $\tra{t_2}{s_2}$ is to the left of $\tra{t_1}{s_2}$. We show that both cases lead to a contradiction.

In case (i), $p_2$ lies in the interior of $\hex{p,q}$, and then by Observation~\ref{obs2}, we have $t(p,p_2)\prec t$; see Figure~\ref{three-triangle-fig2}(a). In addition, Lemma~\ref{triangle-intersection-lemma} implies that $t(p_2,q_3)\prec \max\{t,t_3\}\preceq t_3$. Thus, the cycle $p,p_3,q_3,p_2,p$ contradicts Lemma~\ref{cycle-lemma}.

Now consider case (ii) where $\tra{t_1}{s_1}$ is above $\tra{t_2}{s_1}$ and $\tra{t_2}{s_2}$ is to the left of $\tra{t_1}{s_2}$. If $p_1$ is to the right of $t$, then as in case (i), the cycle $p,p_3,q_3,p_1,p$ contradicts Lemma~\ref{cycle-lemma}. Thus, assume that $p_1$ is to the left of $t$, as shown in Figure~\ref{three-triangle-fig2}(b). By Lemma~\ref{triangle-intersection-lemma}, we have $t(q,p_1)\prec \max\{t,t_1\}\preceq t_1$. Each side of $t_1$ contains either $p_1$ or $q_1$, while $p_1$ is on the part of $t_1(s_1)$ which is to the left of $t$, thus, $q_1$ is on $\tra{t_1}{s_3}$. Consider the six cones around $\tra{t}{v_3}$; see Figure~\ref{three-triangle-fig2}(b). We have three cases: (a) $q_1\in C^5_{\tra{t}{v_3}}$, (b) $q_1\in C^6_{\tra{t}{v_3}}$ or (c) $q_1\in C^1_{\tra{t}{v_3}}$. 

In case (a), which is shown in Figure~\ref{three-triangle-fig2}(b), by Lemma~\ref{triangle3}, we have $\max\{t(p_1,p_2),t(q_1,q_2)\}\prec\max\{t_1,t_2\}$. Thus, the cycle $p_1,p_2,q_2,q_1,p_1$ contradicts Lemma~\ref{cycle-lemma}. In Case (b), which is shown in Figure~\ref{three-triangle-fig2}(c), we have $t(q_1,q_3)\prec t_3$, because if we map $t_3$ to a downward triangle $\tau$\textemdash of area equal to the area of $t_3$\textemdash which has $\tau(v_2)$ on $t(v_3)$, then $\tau$ contains both $q_1$ and $q_3$. Therefore, the cycle $p,p_3,q_3,q_1,p_1,q,p$ contradicts Lemma~\ref{cycle-lemma}. In Case (c), which is shown in Figure~\ref{three-triangle-fig2}(d), by Observation~\ref{obs2}, $t(p,q_1)\prec t$, and then, the cycle $p,q_1,p_1,q,p$ contradicts Lemma~\ref{cycle-lemma}.
\end{proof}

\begin{figure}[htb]
  \centering
\setlength{\tabcolsep}{0in}
  $\begin{tabular}{cc}
\multicolumn{1}{m{.5\columnwidth}}{\centering\includegraphics[width=.38\columnwidth]{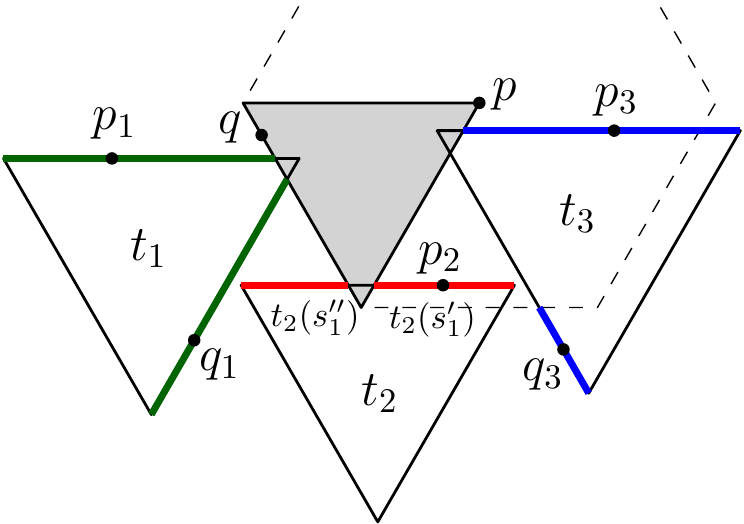}}
&\multicolumn{1}{m{.5\columnwidth}}{\centering\includegraphics[width=.38\columnwidth]{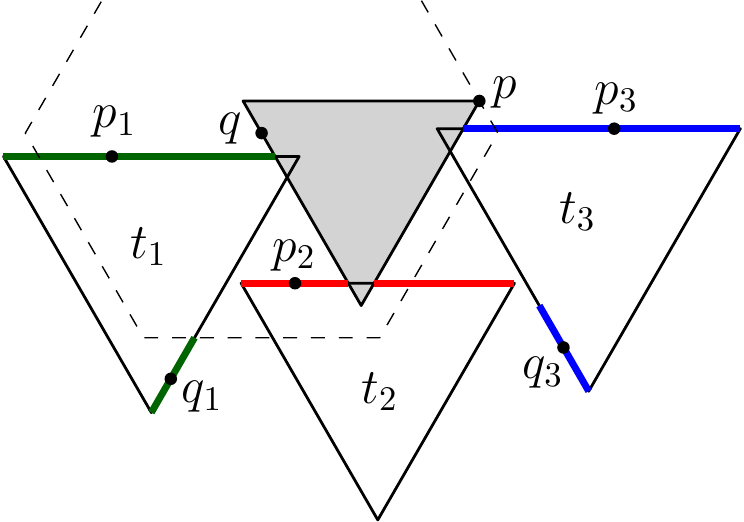}} \\
(a) & (b)
\end{tabular}$
  \caption{Illustration of Lemma~\ref{vertex-side-intersection-1}: (a) $p_2\in t_2(s'_1)$, and (b) $p_2\in t_2(s''_1)$.}
\label{three-triangle-fig}
\end{figure}

\begin{lemma}
\label{vertex-side-intersection-1}
If three triangles intersect $\trmin$ through $\tra{\trmin}{s'_2}, \tra{\trmin}{v_3}$ and $\tra{\trmin}{s_3}$. Then, at least one of the three triangles is not in $\mathcal{I}_2$. 
\end{lemma}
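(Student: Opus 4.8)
The plan is to argue by contradiction in exactly the style of Lemmas~\ref{side-intersection}, \ref{vertex-intersection}, and \ref{vertex-side-intersection-2}. Suppose all three triangles lie in $\mathcal{I}_2$, and name them $t_1$, $t_2$, $t_3$ according to whether they intersect $\trmin$ through $\tra{\trmin}{s'_2}$, $\tra{\trmin}{v_3}$, or $\tra{\trmin}{s_3}$, respectively; write $t_i=t(p_i,q_i)$ with $p_i$ on $t_i(s_1)$ and $q_i$ on $t_i(s_2)$, as in the preceding lemmas. Since $\mathcal{I}_2\subseteq\mathcal{T}(e^+)$, each of $t_1,t_2,t_3$ represents an edge of the minimum spanning tree $T$, and each has area at least that of $\trmin$. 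My goal is to exhibit a cycle in $K_{\trids}(P)$ whose \emph{unique} maximum-weight edge is one of $t_1,t_2,t_3$; by Lemma~\ref{cycle-lemma} that edge cannot lie in $T$, contradicting the assumption and forcing at least one of the three triangles out of $\mathcal{I}_2$.

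First I would pin down the locations of the six defining points. By Observation~\ref{no-point-in-triangle-obs} none of $p_i,q_i$ may lie in the interior of $\trmin$ or of any $t_j$, and by Observation~\ref{obs2} none may lie in the hexagon $\hex{p}{q}$, since such a point $r$ would satisfy $\tr{p,r}\prec\trmin$ and immediately yield a two- or three-edge cycle violating Lemma~\ref{cycle-lemma}. Because $t_2$ intersects through $\tra{\trmin}{v_3}$, it cuts off the corner $\tra{\trmin}{v_3}$, crossing both $\tra{\trmin}{s_3}$ and $\tra{\trmin}{s'_2}$, so $t_2$ straddles the two sides that carry $t_1$ (on the $\tra{\trmin}{s'_2}$ side) and $t_3$ (on the $\tra{\trmin}{s_3}$ side). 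Here Corollary~\ref{biniaz-cor} rules out the nested chain configurations in which $\tra{t_i}{v_2}$ lies inside $t_j$, so the three triangles must be stacked around $\tra{\trmin}{v_3}$, which determines on which side of $\tra{\trmin}{s_3}$, of $\tra{\trmin}{s'_2}$, and of $\hex{p}{q}$ each defining point lies.

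The argument then splits on where the top side $t_2(s_1)$ of the middle triangle meets $\trmin$: either $p_2\in t_2(s'_1)$ (Figure~\ref{three-triangle-fig}(a)) or $p_2\in t_2(s''_1)$ (Figure~\ref{three-triangle-fig}(b)). In each subcase I would bound the ``cross'' edges using the tools already established: Lemma~\ref{triangle-intersection-lemma} gives bounds such as $\tr{p,p_3}\prec\max\{\trmin,t_3\}\preceq t_3$ for edges joining $p$ or $q$ to a defining point of an adjacent triangle, Lemma~\ref{triangle3} gives $\max\{\tr{p_1,p_2},\tr{q_1,q_2}\}\prec\max\{t_1,t_2\}$ (and the analogous bound between $t_2$ and $t_3$) for edges joining defining points of consecutive triangles, and any point landing in $\hex{p}{q}$ is handled directly by Observation~\ref{obs2}. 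Chaining these bounds along a single cycle that walks from $p$ through the defining points of $t_3$, then $t_2$, then $t_1$, and closes back through $q$, every edge except one of $t_1,t_2,t_3$ is strictly lighter than $\max\{t_1,t_2,t_3\}$, so that heaviest triangle is the unique maximum-weight edge of the cycle, giving the desired contradiction with Lemma~\ref{cycle-lemma}.

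The main obstacle is the placement bookkeeping of the second and third paragraphs: in each of the two subcases one must determine exactly which of $\tra{\trmin}{s_3}$, $\tra{\trmin}{s'_2}$, and $\hex{p}{q}$ each of the six points lies on or outside of, and then select one cycle whose remaining edges are \emph{all} provably smaller than $\max\{t_1,t_2,t_3\}$. The area-comparison lemmas supply every inequality needed, so once the point locations are fixed the contradiction is mechanical; the real work is the exhaustive case placement, which is why a separate figure for each subcase is indispensable.
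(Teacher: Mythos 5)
Your overall strategy (contradiction, then a cycle whose unique heaviest edge is a tree edge, killed by Lemma~\ref{cycle-lemma}) is the right one, and you even identify the correct case split on whether $p_2$ lies on $\tra{t_2}{s'_1}$ or $\tra{t_2}{s''_1}$. But there is a genuine gap, and it sits exactly where you claim the argument is ``immediate.'' You assert up front that, by Observation~\ref{obs2}, none of the six defining points may lie in the hexagon $\hex{p}{q}$, because such a point $r$ would give $\tr{p,r}\prec\trmin$ and ``immediately yield a two- or three-edge cycle.'' This is false as a blanket statement, and it is the opposite of what actually drives the proof. For $q_3$ (or, with $\hex{q}{p}$, for $q_1$) the short cycle does exist: $p,p_3,q_3,p$ closes because Lemma~\ref{triangle-intersection-lemma} independently bounds $\tr{p,p_3}\prec t_3$. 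But for $p_2$ no such short cycle is available, since nothing controls $\tr{p,q_2}$; and indeed the correct argument shows that $p_2$ \emph{must} lie in $\hex{p}{q}\cup\hex{q}{p}$, because the side $\tra{t_2}{s_1}$ carrying $p_2$ crosses the corner $\tra{\trmin}{v_3}$, so $p_2$ is either to the right of $\tra{\trmin}{s_3}$ (hence in $\hex{p}{q}$) or to the left of $\tra{\trmin}{s_2}$ (hence in $\hex{q}{p}$). That forced containment is the engine of the contradiction, not something to be excluded at the outset.

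The second gap is that the decisive cycles are never written down. The paper's proof does not walk one long cycle through the defining points of all three triangles (which would also risk a non-unique maximum edge); in the case $p_2\in\tra{t_2}{s'_1}$ it uses the four-cycle $p,p_3,q_3,p_2,p$, whose non-tree edges are bounded by $\tr{p,p_3}\prec t_3$, $\tr{q_3,p_2}\prec t_3$ (Lemma~\ref{triangle-intersection-lemma}, using $q_3\notin\hex{p}{q}$), and $\tr{p_2,p}\prec\trmin$ (Observation~\ref{obs2}); in the case $p_2\in\tra{t_2}{s''_1}$ it uses $q,p_2,q_1,p_1,q$ symmetrically, with $\tr{p_2,q_1}\prec t_2$. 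Note that neither cycle contains the edge represented by $t_2$ at all: $p_2$ serves only as a waypoint. You acknowledge that ``the real work is the exhaustive case placement'' and defer it, but without the observation that $p_2$ is trapped in one of the two hexagons and without the two specific short cycles above, the bookkeeping you describe does not close. Also, Corollary~\ref{biniaz-cor} and Lemma~\ref{triangle3}, on which you lean, are not used in this lemma and do not by themselves pin down the point placements you need.
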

\begin{proof}
The proof is by contradiction. Assume that three triangles $t_1(p_1,q_1), t_2(p_2,q_2),t_3(p_3,q_3)$ in $\mathcal{I}_2$ intersect $\trmin$ through $\tra{\trmin}{s'_2}, \tra{\trmin}{v_3}, \tra{\trmin}{s_3}$, respectively. Let $p_i$ be the point which lies on $t_i(s_1)$ for $i=1,2,3$. See Figure~\ref{three-triangle-fig}(a). By Lemma~\ref{triangle-intersection-lemma}, we have $t(p,p_3)\prec t_3$ and $t(q,p_1)\prec t_1$. If $q_3$ is in the interior of $\hex{p}{q}$, then by Observation~\ref{obs2}, $t(p,q_3)\prec t$, and hence, the cycle $p,p_3,q_3,p$ contradicts Lemma~\ref{cycle-lemma}. If $q_1$ is in $\hex{q}{p}$, then by Observation~\ref{obs2}, $t(q,q_1)\prec t$, and hence, the cycle $q,q_1,p_1,q$ contradicts Lemma~\ref{cycle-lemma}; see Figure~\ref{three-triangle-fig}(b). Thus, assume that $q_3\notin \hex{p}{q}$ and $q_1\notin \hex{q}{p}$. Let $\tra{t_2}{s'_1}$ and $\tra{t_2}{s''_1}$ be the parts of $\tra{t_2}{s_1}$ which are to the right of $t(s_3)$ and to the left of $t(s_2)$, respectively. Consider the point $p_2$ which lies on $\tra{t_2}{s_1}$. 
If $p_2\in\tra{t_2}{s'_1}$, then $p_2\in \hex{p}{q}$ and by Observation~\ref{obs2}, $t(p,p_2)\prec t$. In addition, Lemma ~\ref{triangle-intersection-lemma} implies that $t(p_2,q_3)\prec t_3$. Thus, the cycle $p,p_3,q_3,p_2,p$ contradicts Lemma~\ref{cycle-lemma}; see Figure~\ref{three-triangle-fig}(a).
If $p_2\in\tra{t_2}{s''_1}$, then $p_2\in \hex{q}{p}$ and by Observation~\ref{obs2}, $t(q,p_2)\prec t$. In addition, Lemma ~\ref{triangle-intersection-lemma} implies that $t(p_2,q_1)\prec t_2$. Thus, the cycle $q,p_2,q_1,p_1,q$ contradicts Lemma~\ref{cycle-lemma}; see Figure~\ref{three-triangle-fig}(b).
\end{proof}

Putting Lemmas~\ref{side-intersection}, \ref{vertex-intersection}, \ref{vertex-side-intersection-2}, and \ref{vertex-side-intersection-1} together, implies that at most two triangles in $\mathcal{I}_2$ intersect $t$ through $t(s_3)\cup t(s'_2)$,  and consequently, at most two triangles in $\mathcal{I}_2$ intersect $t$ through $t(s_1)\cup t(s''_2)$. Thus, $\mathcal{I}_2$ contains at most four triangles. Recall that $\mathcal{I}_1$ contains at most four triangles. Then, $\mathcal{I}(e^+)$ has at most eight triangles. Therefore, the influence set of $e$, contains at most 9 edges (including $e$ itself). This completes the proof of Lemma~\ref{triangle-inf-lemma}. 

\begin{theorem}
\label{half-theta-six-thr}
Algorithm~\ref{alg1} computes a strong matching of size at least $\lceil\frac{n-1}{9}\rceil$ in $\G{\trids}{P}$.
\end{theorem}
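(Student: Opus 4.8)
The plan is to derive the bound as an immediate consequence of the generic guarantee of Algorithm~\ref{alg1} combined with a bound on the influence number of the minimum spanning tree. First I would run Algorithm~\ref{alg1} on $\G{\trids}{P}$, where the weight of an edge $(p,q)$ is the area of $\tr{p,q}$. By Theorem~\ref{GS-thr}, the returned matching $\mathcal{M}$ is strong and has size at least $\lceil\frac{n-1}{\Inf{T}}\rceil$, where $T$ is a minimum spanning tree of $\G{\trids}{P}$. Hence the whole theorem reduces to showing $\Inf{T}\le 9$, i.e., that $|\Inf{e}|\le 9$ for every edge $e$ of $T$; this is exactly Lemma~\ref{triangle-inf-lemma}. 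Granting that, $\Inf{T}=\max\{|\Inf{e}|:e\in T\}\le 9$ forces $|\mathcal{M}|\ge\lceil\frac{n-1}{9}\rceil$, which is the claim. So all the substance is in the influence bound.

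To bound $|\Inf{e}|$, fix $e=(p,q)$ and let $t=\tr{p,q}$ be the smallest triangle in $\mathcal{T}(e^+)$. I would split the intersecting triangles $\mathcal{I}(e^+)$ into $\mathcal{I}_1$ (those meeting $t$ only at $p$ or $q$) and $\mathcal{I}_2$ (those meeting $t$ through a side, or through a corner other than $p,q$). For $\mathcal{I}_1$ the minimum-spanning-tree structure suffices: by Observation~\ref{shrink-triangle-obs} one endpoint, say $p$, sits on a corner $t(v_1)$ and $q$ on the opposite side $t(s_2)$, so the outer endpoint of any triangle sharing $p$ or $q$ must lie in $\cone{1}{p}$, $\cone{2}{p}$, $\cone{6}{p}$, or $\cone{4}{q}$; Lemma~\ref{deg-six-half} permits at most one neighbour per cone, giving $|\mathcal{I}_1|\le 4$.

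The main obstacle is the bound $|\mathcal{I}_2|\le 4$. Here I would partition $\partial t$ into the two polylines $t(s_1)\cup t(s''_2)$ and $t(s_3)\cup t(s'_2)$ and, by symmetry, prove that at most two triangles of $\mathcal{I}_2$ cross each polyline. This is where the geometry of downward triangles and the cycle property of Lemma~\ref{cycle-lemma} interact most delicately, since one must exclude all configurations in which three large triangles simultaneously clip a side or the corner $t(v_3)$. The argument breaks into: at most one triangle crosses each of $t(s_3)$ and $t(s'_2)$ (Lemma~\ref{side-intersection}); at most two cross the corner $t(v_3)$ (Lemma~\ref{vertex-intersection}, using the common-intersection bound of Lemma~\ref{intersection-lemma}); and these cases cannot combine to exceed two (Lemmas~\ref{vertex-side-intersection-2} and~\ref{vertex-side-intersection-1}). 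Each impossibility is certified by producing a cycle whose unique maximum-weight edge lies in $T$, contradicting Lemma~\ref{cycle-lemma}; the needed area comparisons come from Observation~\ref{obs2}, Lemma~\ref{triangle-intersection-lemma}, Lemma~\ref{triangle3}, and Corollary~\ref{biniaz-cor}. Combining $|\mathcal{I}_1|\le 4$, $|\mathcal{I}_2|\le 4$, and $e$ itself yields $|\Inf{e}|\le 9$, hence $\Inf{T}\le 9$ and the bound $\lceil\frac{n-1}{9}\rceil$.
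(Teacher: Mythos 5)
Your proposal is correct and follows essentially the same route as the paper: reduce to $\Inf{T}\le 9$ via Theorem~\ref{GS-thr}, then bound $|\mathcal{I}_1|\le 4$ by the cone argument of Lemma~\ref{deg-six-half} and $|\mathcal{I}_2|\le 4$ by splitting the boundary of $t$ into the two polylines and invoking Lemmas~\ref{side-intersection}, \ref{vertex-intersection}, \ref{vertex-side-intersection-2}, and \ref{vertex-side-intersection-1}. This matches the paper's proof of Lemma~\ref{triangle-inf-lemma} step for step, including the supporting area comparisons from Observation~\ref{obs2}, Lemma~\ref{triangle-intersection-lemma}, Lemma~\ref{triangle3}, and Corollary~\ref{biniaz-cor}.
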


\begin{figure}[htb]
  \centering
\includegraphics[width=.4\columnwidth]{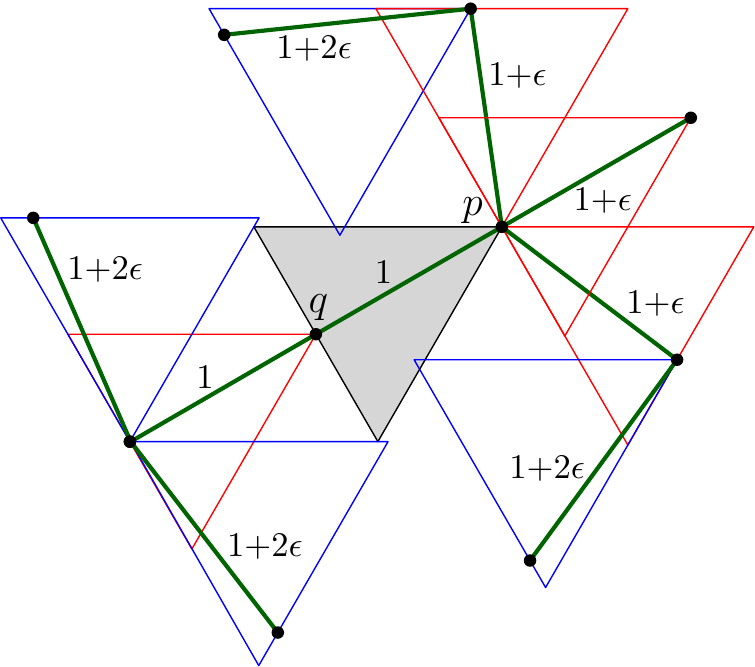}
  \caption{Four triangles in $\mathcal{I}_1$ (in red) and four triangles in $\mathcal{I}_2$ (in blue) intersect with $t(p,q)$.}
\label{five-fig}
\end{figure}

The bound obtained by Lemma~\ref{triangle-inf-lemma} is tight. Figure~\ref{five-fig} shows a configuration of 10 points in general position such that the influence set of a minimal edge is 9. In Figure~\ref{five-fig}, $t=t(p,q)$ represents a smallest edge of weight 1; the minimum spanning tree is shown in bold-green line segments. The weight of all edges\textemdash the area of the triangles representing these edges\textemdash is at least 1. The red triangles are in $\mathcal{I}_1$ and share either $p$ or $q$ with $t$. The blue triangles are in $\mathcal{I}_2$ and intersect $t$ through $t(s_1)\cup t(s''_2)$ or through $t(s_3)\cup t(s'_2)$; as show in Figure~\ref{five-fig}, two of them share only the points $t(v_2)$ and $t(v_3)$.

\section{Strong Matching in $G_{\bigtriangledown \hspace*{-8.7pt} \bigtriangleup}(P)$}

In this section we consider the problem of computing a strong matching in $\GUD(P)$. Recall that $\GUD(P)$ is the union of $\G{\trids}{P}$ and $\G{\trius}{P}$, and is equal to the graph $\Theta_6(P)$. We assume that $P$ is in general position, i.e., for each point $p\in P$, there is no point of $P\setminus \{p\}$ on $l_p^0$, $l_p^{60}$, and $l_p^{120}$. A matching $\mathcal{M}$ in $\GUD(P)$ is a strong matching if for each edge $e$ in $\mathcal{M}$ there is a homothet of $\trid$ or a homothet of $\triu$ representing $e$, such that these homothets are pairwise disjoint. See Figure~\ref{strong-example}(b). Using a similar approach as in~\cite{Abrego2009}, we prove the following theorem:

\label{theta-six-section}
\begin{theorem}
\label{theta-six-thr}
Let $P$ be a set of $n$ points in general position in the plane. Let $S$ be an upward or a downward equilateral-triangle that contains $P$. Then, it is possible to find a strong matching of size at least $\lceil\frac{n-1}{4}\rceil$ for $\GUD(P)$ in $S$.
\end{theorem}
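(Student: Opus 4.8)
The plan is to mimic the divide-and-conquer strategy of \'{A}brego et al.~\cite{Abrego2009} for axis-aligned squares, exploiting the fact that $\GUD(P)$ allows us to represent each matching edge by \emph{either} a downward or an upward triangle, whichever is convenient. The engine of the argument will be a ham-sandwich-type recursive partition of the containing triangle $S$ into smaller regions, each region being a homothet of $\trid$ or $\triu$. First I would set up the recursion so that at each step we find a line that splits the current triangular region into two triangular subregions, each again a homothet of an equilateral triangle of one of the two orientations, while distributing the points as evenly as possible between the two parts. The key structural fact to establish is that if all points of $P$ lie inside a homothet $\tau$ of $\trid$ (or $\triu$) and the two points realizing an edge sit on the boundary of a small empty sub-homothet, then that sub-homothet stays inside $\tau$; this lets the matched edges produced in disjoint subregions be witnessed by pairwise-disjoint triangles, giving a \emph{strong} matching.

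Concretely, I would argue as follows. Suppose $|P|=n$ and $P$ is contained in a triangle $S$ of one orientation. Place a line that cuts $S$ into a smaller similar triangle $S_1$ (same orientation) in one corner and a trapezoid, then cut the trapezoid by a second line parallel to another side so that it becomes a triangle $S_2$ of the \emph{opposite} orientation together with leftover pieces; the point is that three lines parallel to the three sides decompose any equilateral triangle into four equilateral sub-triangles (the standard midpoint subdivision), three of them of the same orientation as $S$ and the central one of the opposite orientation. I would choose the three cut positions — not necessarily through the midpoints — so that the four sub-triangles receive point counts as balanced as possible, roughly $n/4$ each. Because $\GUD(P)$ contains both orientations of empty triangles, within each sub-triangle I can recurse using whichever orientation matches that sub-triangle. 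At the bottom of the recursion, any region containing exactly two points yields one edge of the matching, and the containing sub-triangle (shrunk to have the two points on its boundary, via the shrinkability of Observation~\ref{shrink-triangle-obs}) serves as its disjoint witness.

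The counting then gives the bound: a triangle holding $m$ points is split into four sub-triangles, and I match the two points forming, say, the cheapest realizable pair in a suitable corner, discarding at most one point per level so that from $n$ points we extract at least $\lceil (n-1)/4\rceil$ edges. More carefully, I would phrase it as: each application of the four-way split lets us guarantee one matched pair while ``wasting'' at most three of the four regions' worth of a single unmatched point, and unrolling the recursion shows the number of edges is at least $\lceil (n-1)/4\rceil$, matching the claim and the figure for $\G{\trids}{P}$'s lower bound carried over to $\GUD(P)$.

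The hard part will be guaranteeing simultaneously that (i) the four sub-triangles can be chosen to balance the points, (ii) the empty sub-homothets witnessing the chosen edges are genuinely pairwise disjoint across different branches of the recursion, and (iii) every chosen pair is actually an edge of $\GUD(P)$, i.e.\ its witnessing triangle is empty of other points of $P$. Point (ii) is where the geometry must be handled delicately: disjointness is clear for triangles living in interior-disjoint regions, but I must ensure that edges selected near a shared cutting line do not produce overlapping witnesses. I expect to resolve this by insisting that the two points matched inside a region be confined to a corner sub-triangle strictly interior to the region, so that shrinking to the boundary keeps the witness away from the cut boundaries, exactly as the corresponding argument is managed in the square case of~\cite{Abrego2009}; verifying that this confinement is always achievable while preserving the balanced split is the crux of the proof.
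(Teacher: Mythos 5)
Your overall frame---induction on $n$, the midpoint subdivision of $S$ into four equilateral sub-triangles (three of the orientation of $S$, the central one reversed), and disjointness of the witnessing triangles obtained by confining each recursive call to its own region---is exactly the frame of the paper's proof, and your base cases and the reduction to the case $n\equiv 2\pmod 4$ would go through. But two concrete problems sit in the middle of your argument. The ``balanced split'' is not geometrically available: three lines parallel to the three sides tile $S$ into four equilateral triangles \emph{only} when they pass through the midpoints; sliding any of them to rebalance the point counts destroys the partition (the corner triangles overlap or a hexagonal hole appears in the middle), so you lose the interior-disjoint triangular regions on which the strong-disjointness argument rests. The paper therefore keeps the fixed equal-area subdivision and accepts whatever counts $n_1,\dots,n_4$ it induces. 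Your worry about witnesses colliding near the cut lines, on the other hand, is a non-issue once the induction hypothesis is phrased as ``a strong matching \emph{inside} $S'$.''

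The more serious gap is the counting, which is where all the work is; your version (``wasting at most three \dots\ unrolling the recursion'') does not close. For $n=4m+2$ the induction over the four fixed sub-triangles only yields $A=\sum_{i}\lceil (n_i-1)/4\rceil\ge m$, while the target is $m+1$, and equality $A=m$ genuinely occurs---the paper's Claim~2 shows it happens exactly when the residues $r_i=n_i\bmod 4$ form the multiset $\{3,1,1,1\}$ or $\{0,0,1,1\}$. Nothing in your proposal produces the missing edge in that situation. The paper's mechanism is a family of corner-anchored shrunk and expanded triangles (the smallest anchored triangle inside $S_i$ omitting $x$ points, and the smallest anchored triangle containing $S_i$ plus $x$ further points of $P$): by shedding a few points from some sub-triangles and absorbing them into an enlarged neighbor---verifying via two auxiliary half-lines and a residual region that the modified regions remain pairwise disjoint---it shifts some $n_i$ into the next residue class so that the inductive sum rises to $m+1$. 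This case analysis, split on which anchored triangle is largest and where the stray points lie, is the actual content of the proof and is absent from your proposal.
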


\begin{proof}
The proof is by induction. Assume that any point set of size $n'\le n-1$ in a triangle $S'$, has a strong matching of size $\lceil \frac{n'-1}{4}\rceil$ in $S'$. Without loss of generality, assume $S$ is an upward equilateral-triangle. If $n$ is $0$ or $1$, then there is no matching in $S$, and if $n\in\{2, 3, 4, 5\}$, then by shrinking $S$, it is possible to find a strongly matched pair; the statement of the theorem holds. Suppose that $n\ge 6$, and $n=4m+r$, where $r\in\{0,1,2,3\}$. If $r\in\{0, 1,3\}$, then 
$\lceil \frac{n-1}{4}\rceil = \lceil \frac{(n-1)-1}{4}\rceil$, and by
induction we are done. Suppose that $n=4m+2$, for some $m\ge 1$. We prove that there are $\lceil\frac{n-1}{4}\rceil=m+1$ disjoint equilateral-triangles (upward or downward) in $S$,
each of them matches a pair of points in $P$. Partition $S$ into four equal area equilateral triangles $S_1, S_2, S_3, S_4$ containing $n_1, n_2,n_3, n_4$ points, respectively; see Figure~\ref{Theta-six-fig}(a). Let $n_i=4m_i+r_i$, where $r_i\in\{0,1,2,3\}$. 
By induction, in $S_1\cup S_2\cup S_3\cup S_4$, we have a strong matching of size at least
\begin{equation}
\label{A-eq}
A=\left\lceil\frac{n_1-1}{4}\right\rceil + \left\lceil\frac{n_2-1}{4}\right\rceil +\left\lceil\frac{n_3-1}{4}\right\rceil+\left\lceil \frac{n_4-1}{4}\right\rceil.
\end{equation}

{\bf Claim 1:} $A\ge m$.
\begin{proof}
By Equation~(\ref{A-eq}), we have
\begin{align*}
A&= \sum_{i=1}^{4}{\left\lceil\frac{n_i-1}{4}\right\rceil} \ge 
\sum_{i=1}^{4}\frac{n_i-1}{4} =\frac{n}{4} -1=\frac{4m+2}{4} -1=m-\frac{1}{2}.
\end{align*}
Since $A$ is an integer, we argue that $A\ge m$.
\end{proof}

If $A> m$, then we are done. Assume that $A=m$; in fact, by the induction hypothesis we have an strong matching of size $m$ for $P$. In order to complete the proof, we have to get one more strongly matched pair. Let $R$ be the multiset $\{r_1,r_2,r_3,r_4\}$.

{\bf Claim 2:} {\em If $A=m$, then either (i) one element in $R$ is equal to $3$ and the other elements are equal to $1$, or (ii) two elements in $R$ are equal to $0$ and the other elements are equal to $1$.}
\begin{proof}
Let $\alpha=r_1+r_2+r_3+r_4$, where $0\le r_i\le 3$. Then $n=4(m_1+m_2+m_3+m_4)+\alpha$. Since $n=4m+2$, $\alpha=4k+2$, for some $0\le k\le 2$. Thus, $n=4(m_1+m_2+m_3+m_4+k)+2$, where $m=m_1+m_2+m_3+m_4+k$.

By induction, in $S_i$, we get a matching of size at least $\lceil \frac{(4m_i+r_i)-1}{4}\rceil=m_i+\lceil \frac{r_i-1}{4}\rceil$. Hence, in $S_1\cup S_2\cup S_3\cup S_4$, we get a matching of size at least

\begin{equation*}
A=m_1+m_2+m_3+m_4+\left\lceil\frac{r_1-1}{4}\right\rceil + \left\lceil\frac{r_2-1}{4}\right\rceil +\left\lceil\frac{r_3-1}{4}\right\rceil+\left\lceil \frac{r_4-1}{4}\right\rceil.
\end{equation*}

Since $A=m$ and $m=m_1+m_2+m_3+m_4+k$, we have 

\begin{equation}
\label{k-eq}
k=\left\lceil\frac{r_1-1}{4}\right\rceil + \left\lceil\frac{r_2-1}{4}\right\rceil +\left\lceil\frac{r_3-1}{4}\right\rceil+\left\lceil \frac{r_4-1}{4}\right\rceil.
\end{equation}

Note that $0\le k\le 2$.
We go through some case analysis: (i) $k=0$, (ii) $k=1$, (iii) $k=2$. In case (i), we have $\alpha =4k+2=r_1+r_2+r_3+r_4=2$. In order to have $k$ equal to 0 in Equation~(\ref{k-eq}), no element in $R$ should be more than 1; this happens only if two elements in $R$ are equal to 0 and the other two elements are equal to 1. In case (ii), we have $\alpha =r_1+r_2+r_3+r_4=6$. In order to have $k$ equal to 1 in Equation~(\ref{k-eq}), at most one element in $R$ should be greater than 1; this happens only if three elements in $R$ are equal to 1 and the other element is equal to 3 (note that all elements in $R$ are smaller than 4). In case (iii), we have $\alpha =r_1+r_2+r_3+r_4=10$. In order to have $k$ equal to 2 in Equation~(\ref{k-eq}), at most two elements in $R$ should be greater than 1; which is not possible.
\end{proof} 
We show how to find one more matched pair in each case of Claim 2.

We define $\SM{$x$}{1}$ as the smallest upward equilateral-triangle contained in $S_1$ and anchored at the top corner of $S_1$, which contains all the points in $S_1$ except $x$ points. If $S_1$ contains less than $x$ points, then the area of $\SM{$x$}{1}$ is zero. We also define $\SP{$x$}{1}$ as the smallest upward equilateral-triangle that contains $S_1$ and anchored at the top corner of $S_1$, which has all the points in $S_1$ plus $x$ other points of $P$. Similarly we define upward triangles $\SM{$x$}{2}$ and $\SP{$x$}{2}$ which are anchored at the left corner of $S_2$. Moreover, we define upward triangles $\SM{$x$}{4}$ and $\SP{$x$}{4}$ which are anchored at the right corner of $S_4$. We define downward triangles $\SM{$x$}{3l}$, $\SM{$x$}{3r}$, $\SM{$x$}{3b}$ which are anchored at the top-left corner, top-right corner, and bottom corner of $S_3$, respectively. See Figure~\ref{Theta-six-fig}(a). 

{\bf Case 1:} {\em One element in $R$ is equal to 3 and the other elements are equal to 1.}

In this case, we have $m=m_1+m_2+m_3+m_4+1$. Because of the symmetry, we have two cases: (i) $r_3=3$, (ii) $r_j=3$ for some $j\in\{1,2,4\}$.

\begin{itemize}
 
\begin{figure}[h!]
  \centering
\setlength{\tabcolsep}{0in}
  $\begin{tabular}{cc}
\multicolumn{1}{m{.5\columnwidth}}{\centering\includegraphics[width=.38\columnwidth]{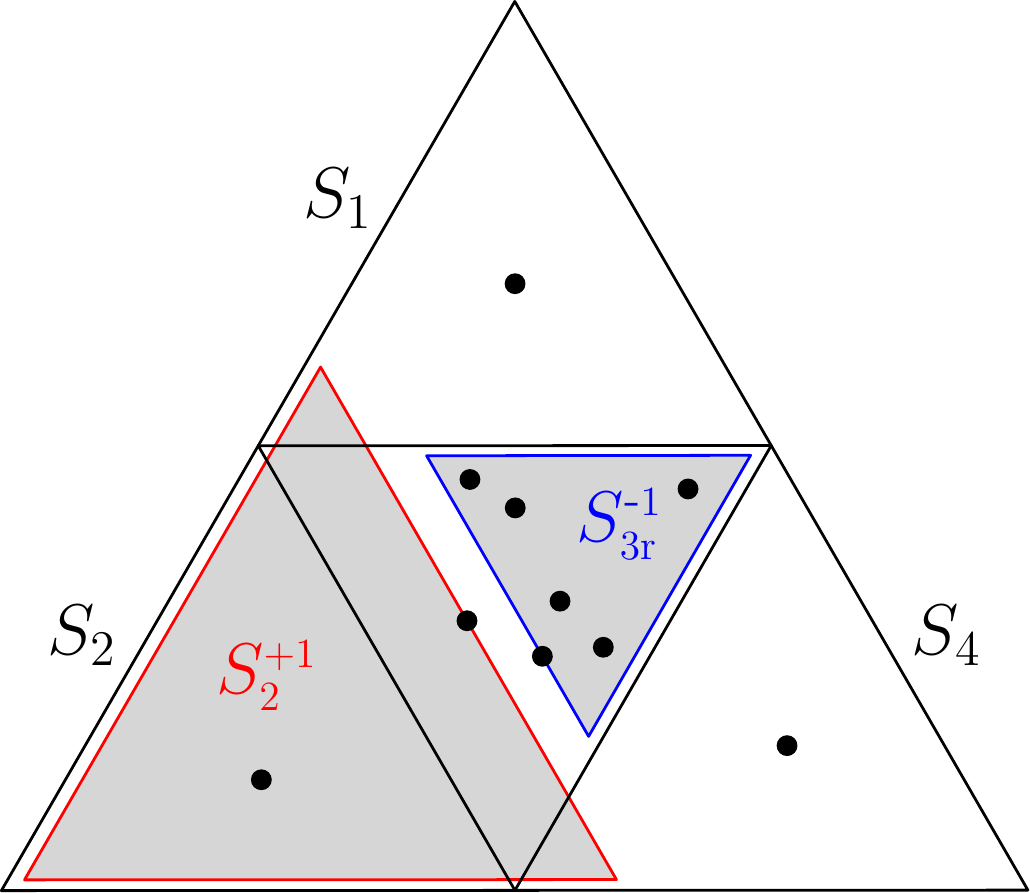}}
&\multicolumn{1}{m{.5\columnwidth}}{\centering\includegraphics[width=.38\columnwidth]{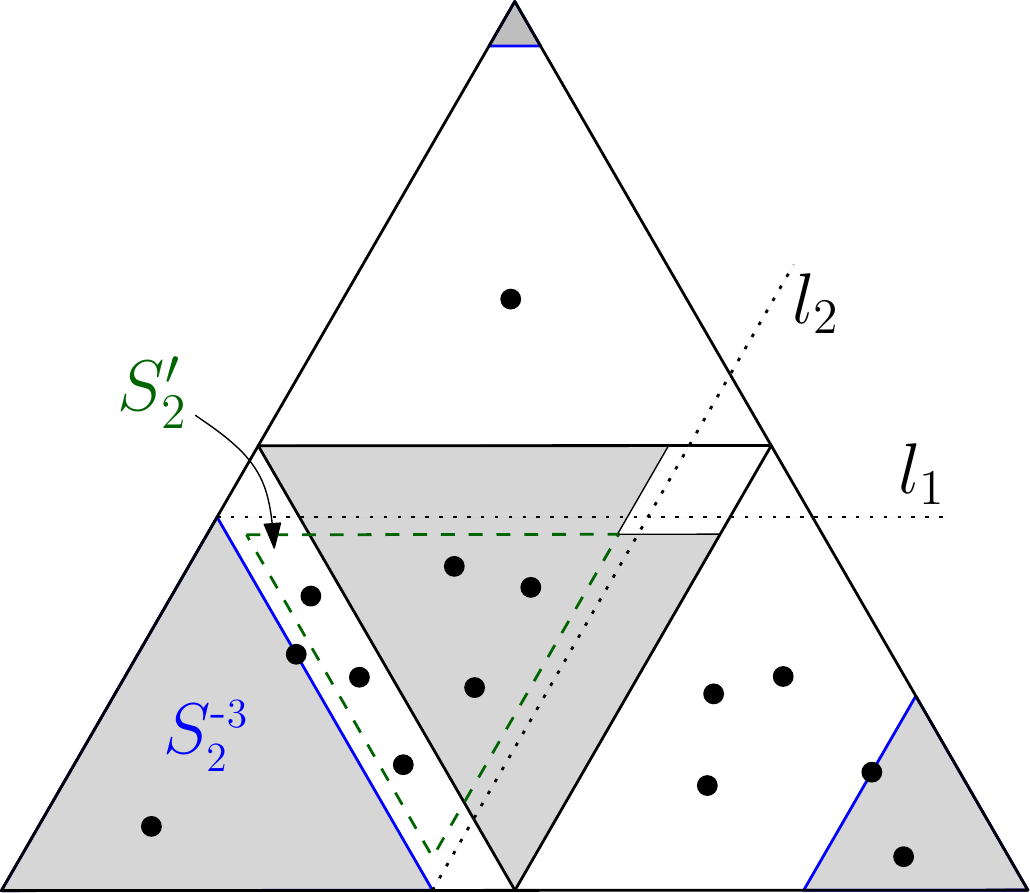}} \\
(a) & (b)
\end{tabular}$
  \caption{(a) Split $S$ into four equal area triangles. (b) $\SM{3}{2}$ is larger than $\SM{3}{1}$ and $\SM{3}{4}$.}
\label{Theta-six-fig}
\end{figure}

 \item {$r_3=3$.}

In this case $n_3=4m_3+3$. We differentiate between two cases, where all the elements of the multiset $\{m_1, m_2, m_4\}$ are equal to zero, or some of them are greater than zero.

\begin{itemize}
 \item {\em All elements of $\{m_1, m_2, m_4\}$ are equal zero.} In this case, we have $m=m_3+1$. Consider the triangles $\SP{1}{2}$ and $\SM{1}{3r}$. See Figure~\ref{Theta-six-fig}(a). Note that $\SP{1}{2}$ and $\SM{1}{3r}$ are disjoint, $\SP{1}{2}$ contains two points, and $\SM{1}{3r}$ contains $4m_3+2$ points. By induction, we get a matched pair in $\SP{1}{2}$ and a matching of size at least $m_3+1$ in $\SM{1}{3r}$. Thus, in total, we get a matching of size at least $1+(m_3+1)=m+1$ in $S$.

 \item {\em Some elements of $\{m_1, m_2, m_4\}$ are greater than zero.} Consider the triangles $\SM{3}{1}$, $\SM{3}{2}$, and $\SM{3}{4}$. Note that the area of some of these triangles\textemdash but not all\textemdash may be equal to zero. See Figure~\ref{Theta-six-fig}(b). By induction, we get matchings of size $m_1$, $m_2$, and $m_4$ in $\SM{3}{1}$, $\SM{3}{2}$, and $\SM{3}{4}$, respectively. Without loss of generality, assume $\SM{3}{2}$, is larger than $\SM{3}{1}$ and $\SM{3}{4}$. Consider the half-lines $l_1$ and $l_2$ which are parallel to $l^0$ and $l^{60}$ axis, and have their endpoints on the top corner and right corner of $\SM{3}{2}$, respectively. We define $S'_2$ as the downward equilateral-triangle which is bounded by $l_1$, $l_2$, and the right side of $\SM{3}{2}$; the dashed triangle in Figure~\ref{Theta-six-fig}(b). Note that $l_1$ and $l_2$ do not intersect $\SM{3}{1}$ and $\SM{3}{4}$. In addition, $\SM{3}{1}$, $\SM{3}{2}$, $\SM{3}{4}$, and $S'_2$ are pairwise disjoint. If any point of $S_1\cup S_2\cup S_3$ is to the right of $l_2$, then consider $\SP{1}{4}$ and $\SM{1}{3l}$. By induction, we get a matching of size $m_1+m_2+(m_3+1)+(m_4+1)$ in $\SM{3}{1}\cup \SM{3}{2}\cup\SM{1}{3l}\cup \SP{1}{4}$, and hence a matching of size $m+1$ in $S$. If any point of $S_2\cup S_3\cup S_4$ is above $l_1$, then consider $\SP{1}{1}$ and $\SM{1}{3b}$. By induction, we get a matching of size $(m_1+1)+m_2+(m_3+1)+ m_4$ in $\SP{1}{1}\cup \SM{3}{2}\cup\SM{1}{3b}\cup \SM{3}{4}$, and hence a matching of size $m+1$ in $S$. Otherwise, $S'_2$ contains $n_3+3=4(m_3+1)+2$ points. Thus, by induction, we get a matching of size $m_1+m_2+(m_3+2)+ m_4$ in $S_1\cup \SM{3}{2}\cup S'_2\cup S_4$, and hence a matching of size $m+1$ in $S$.
\end{itemize}

\item {\em $r_j=3$, for some $j\in\{1,2,4\}$.}

Without loss of generality, assume that $r_j=r_2$. Then, $n_2=4m_2+3$. Consider the triangles $\SM{3}{1}$, $\SM{1}{2}$, and $\SM{3}{4}$. See Figure~\ref{Theta-six-fig2}(a). By induction, we get matchings of size $m_1$, $m_2+1$, and $m_4$ in $\SM{3}{1}$, $\SM{1}{2}$, and $\SM{3}{4}$, respectively. 
Now we consider the largest triangle among $\SM{3}{1}$, $\SM{1}{2}$, and $\SM{3}{4}$. Because of the symmetry, we have two cases: (i) $\SM{1}{2}$ is the largest, or (ii) $\SM{3}{4}$ is the largest.
\begin{figure}[h!]
  \centering
\setlength{\tabcolsep}{0in}
  $\begin{tabular}{cc}
\multicolumn{1}{m{.5\columnwidth}}{\centering\includegraphics[width=.38\columnwidth]{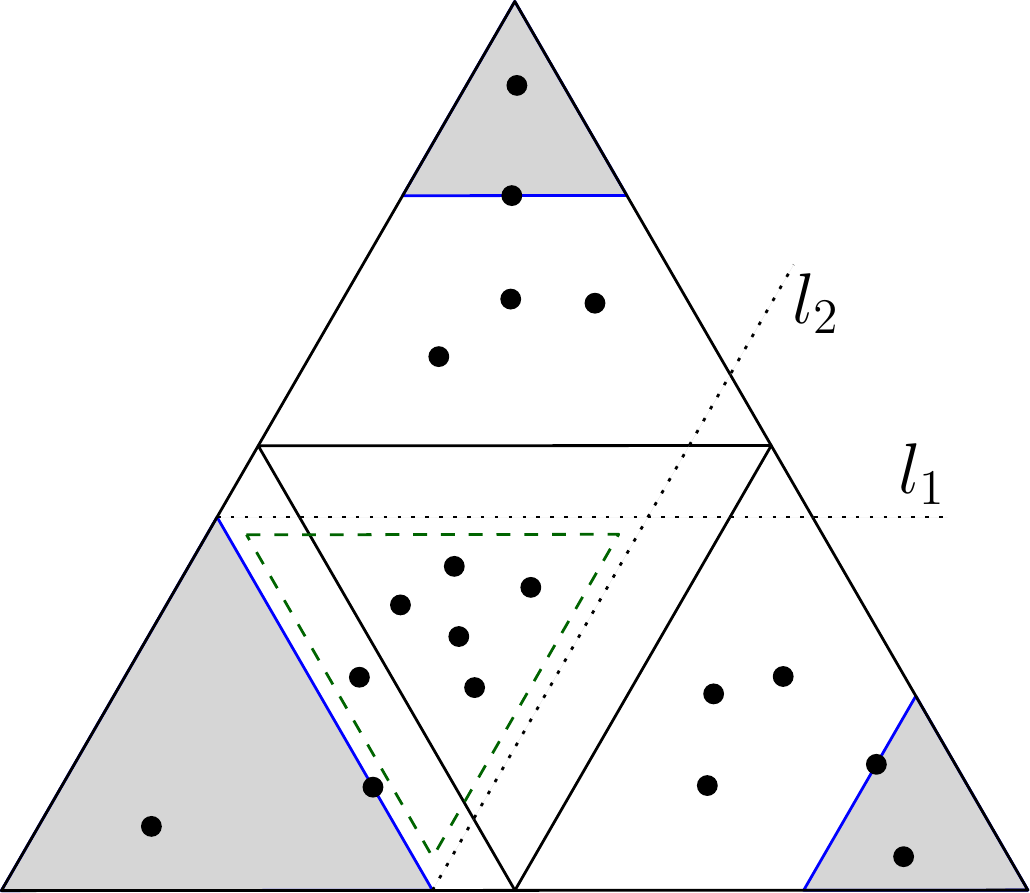}}
&\multicolumn{1}{m{.5\columnwidth}}{\centering\includegraphics[width=.38\columnwidth]{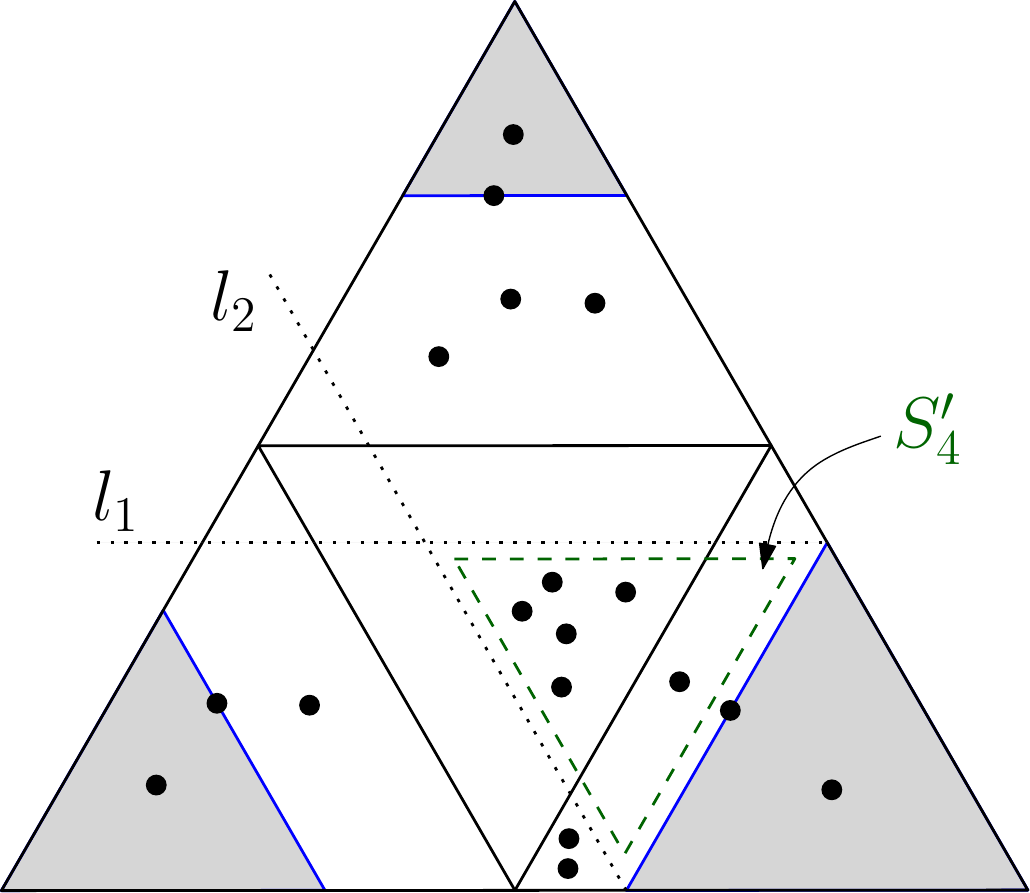}}
\\
(a) & (b)
\end{tabular}$
  \caption{(a) $\SM{1}{2}$ is larger than $\SM{3}{1}$ and $\SM{3}{4}$. (b) $\SM{3}{4}$ is larger than $\SM{3}{1}$ and $\SM{1}{2}$.}
\label{Theta-six-fig2}
\end{figure}
\begin{itemize}
 \item {\em $\SM{1}{2}$ is larger than $\SM{3}{1}$ and $\SM{3}{4}$.}
Define the half-lines $l_1$, $l_2$, and the triangle $S'_2$ as in the previous case. See Figure~\ref{Theta-six-fig2}(a). If any point of $S_1\cup S_2\cup S_3$ is to the right of $l_2$, then consider $\SP{1}{4}$ and $\SM{1}{3l}$. By induction, we get a matching of size $m_1+(m_2+1)+m_3+(m_4+1)$ in $\SM{3}{1}\cup \SM{1}{2}\cup\SM{1}{3l}\cup \SP{1}{4}$. If any point of $S_2\cup S_3\cup S_4$ is above $l_1$, then consider $\SP{1}{1}$ and $\SM{1}{3b}$. By induction, we get a matching of size $(m_1+1)+(m_2+1)+m_3+m_4$ in $\SP{1}{1}\cup \SM{1}{2}\cup\SM{1}{3b}\cup \SM{3}{4}$. Otherwise, $S'_2$ contains $n_3+1=4m_3+2$ points. Thus, by induction, we get a matching of size $m_1+(m_2+1)+(m_3+1)+ m_4$ in $S_1\cup \SM{1}{2}\cup S'_2\cup S_4$. As a result, in all cases we get a matching of size $m+1$ in $S$.

\item {\em $\SM{3}{4}$ is larger than $\SM{3}{1}$ and $\SM{1}{2}$.}
Define the half-lines $l_1$, $l_2$, and the triangle $S'_4$ as in Figure~\ref{Theta-six-fig2}(b). If any point of $S_1\cup S_3\cup S_4$ is above $l_1$, then by induction, we get a matching of size $(m_1+1)+(m_2+1)+m_3+m_4$ in $\SP{1}{1}\cup \SM{1}{2}\cup\SM{1}{3b}\cup \SP{3}{4}$. If at least three points of $S_1\cup S_3\cup S_4$ are to the left of $l_2$, then consider $\SP{3}{2}$ and $\SM{3}{3r}$. Note that $\SP{3}{2}$ contains $n_2+3=4(m_2+1)+2$ points. By induction, we get a matching of size $m_1+(m_2+2)+m_3+m_4$ in $\SM{3}{1}\cup \SP{3}{2}\cup\SM{3}{3r}\cup \SM{3}{4}$. Otherwise, $S'_4$ contains at least $n_3+1=4m_3+2$ points. Thus, by induction, we get a matching of size $m_1+(m_2+1)+(m_3+1)+ m_4$ in $S_1\cup S_2\cup S'_4\cup \SM{3}{4}$. As a result, in all cases we get a matching of size $m+1$ in $S$.
\end{itemize}
\end{itemize}

{\bf Case 2:} {\em Two elements in $R$ are equal to 0 and the other elements are equal to 1.}

In this case, we have $m=m_1+m_2+m_3+m_4$. Again, because of the symmetry, we have two cases: (i) $r_3=0$, (ii) $r_3\neq 0$.

\begin{itemize}
 \item $r_3=0.$

Without loss of generality assume that $r_2=0$ and $r_1=r_4=1$. Thus, $n_1=4m_1+1$, $n_2=4m_2$, $n_3=4m_3$, and $n_4=4m_4+1$. If all elements of $\{m_1,m_2,m_4\}$ are equal to zero, then we have $m=m_3$, where $m_3\ge 1$. Consider the triangles $\SP{1}{4}$ and $\SM{1}{3l}$, which are disjoint. By induction, we get a matched pair in $\SP{1}{4}$ and a matching of size at least $m_3$ in $\SM{1}{3l}$. Thus, in total, we get a matching of size at least $1+m_3=m+1$ in $S$. Assume some elements in $\{m_1,m_2,m_4\}$ are greater than zero. Consider the triangles $\SM{3}{1}$, $\SM{2}{2}$, and $\SM{3}{4}$. See Figure~\ref{Theta-six-fig3}(a). By induction, we get a matching of size $m_1$, $m_2$, and $m_4$ in $\SM{3}{1}$, $\SM{2}{2}$, and $\SM{3}{4}$, respectively. 
Now we consider the largest triangle among $\SM{3}{1}$, $\SM{2}{2}$, and $\SM{3}{4}$. Because of the symmetry, we have two cases: (i) $\SM{2}{2}$ is the largest, or (ii) $\SM{3}{4}$ is the largest.

\begin{figure}[h!]
  \centering
\setlength{\tabcolsep}{0in}
  $\begin{tabular}{cc}
\multicolumn{1}{m{.5\columnwidth}}{\centering\includegraphics[width=.38\columnwidth]{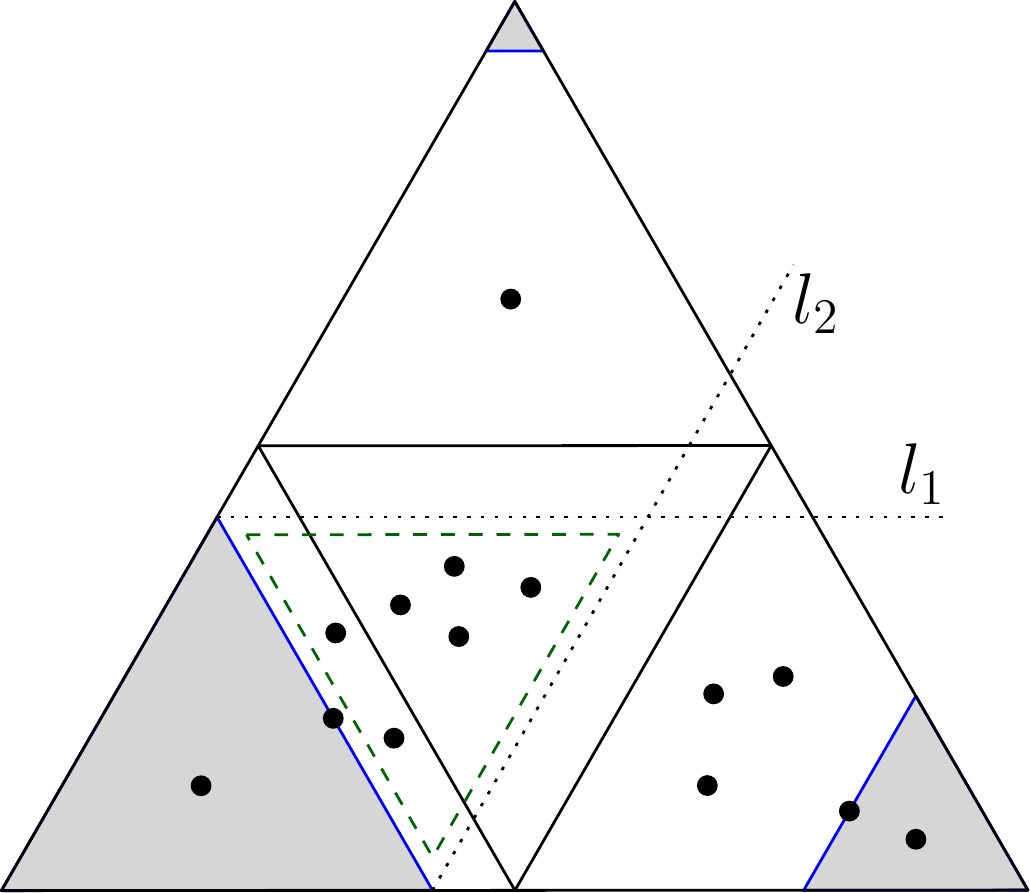}}
&\multicolumn{1}{m{.5\columnwidth}}{\centering\includegraphics[width=.38\columnwidth]{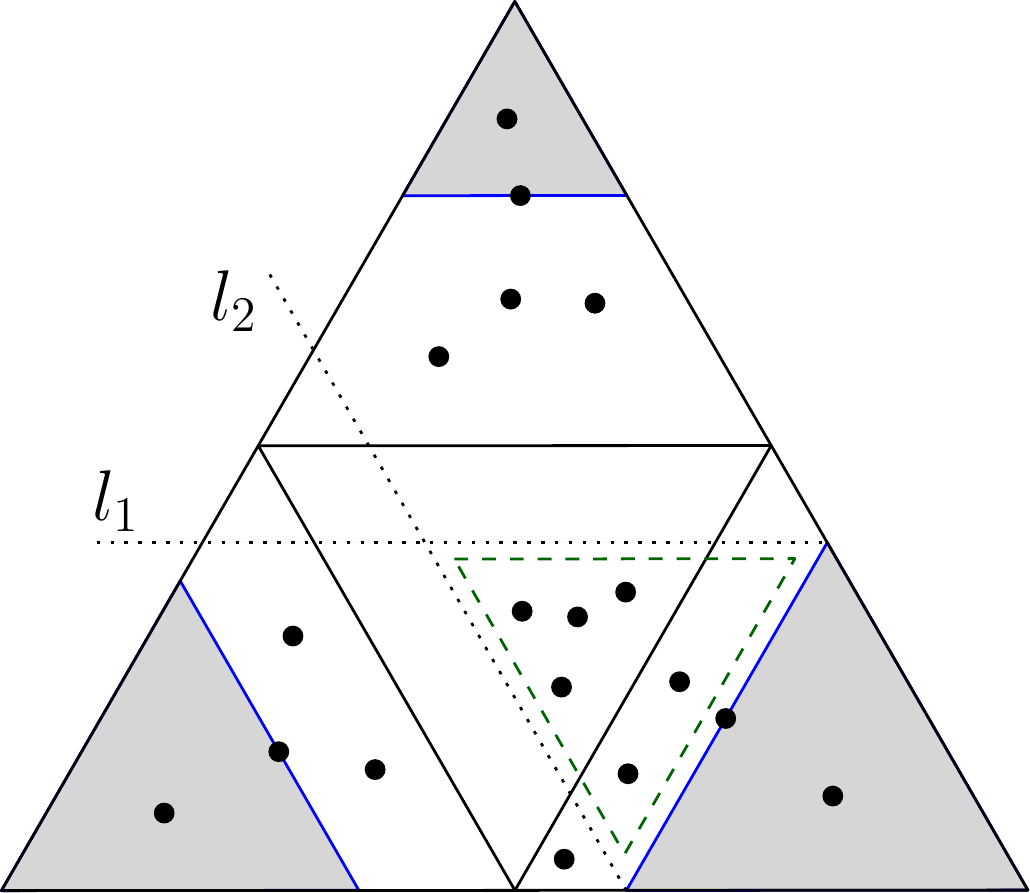}}
\\
(a) & (b)
\end{tabular}$
  \caption{(a) $\SM{2}{2}$ is larger than $\SM{3}{1}$ and $\SM{3}{4}$. (b) $\SM{3}{4}$ is larger than $\SM{3}{1}$ and $\SM{2}{2}$.}
\label{Theta-six-fig3}
\end{figure}
\begin{itemize}
 \item {\em $\SM{2}{2}$ is larger than $\SM{3}{1}$ and $\SM{3}{4}$.}
Define $l_1$, $l_2$, $S'_2$ as in Figure~\ref{Theta-six-fig3}(a). If any point of $S_1\cup S_2\cup S_3$ is to the right of $l_2$, then by induction, we get a matching of size $m_1+m_2+m_3+(m_4+1)$ in $\SM{3}{1}\cup \SM{2}{2}\cup\SM{1}{3l}\cup \SP{1}{4}$. If any point of $S_2\cup S_3\cup S_4$ is above $l_1$, then by induction, we get a matching of size $(m_1+1)+m_2+m_3+m_4$ in $\SP{1}{1}\cup \SM{2}{2}\cup\SM{1}{3b}\cup \SM{3}{4}$. Otherwise, $S'_2$ contains $n_3+2=4m_3+2$ points. Thus, by induction, we get a matching of size $m_1+m_2+(m_3+1)+ m_4$ in $S_1\cup \SM{2}{2}\cup S'_2\cup S_4$. In all cases we get a matching of size $m+1$ in $S$.

\item {\em $\SM{3}{4}$ is larger than $\SM{3}{1}$ and $\SM{2}{2}$.}
Define $l_1$, $l_2$, $S'_4$ as in Figure~\ref{Theta-six-fig3}(b). If any point of $S_1\cup S_3\cup S_4$ is above $l_1$, then by induction, we get a matching of size $(m_1+1)+m_2+m_3+m_4$ in $\SP{1}{1}\cup \SM{2}{2}\cup\SM{1}{3b}\cup \SP{3}{4}$. If at least two points of $S_1\cup S_3\cup S_4$ are to the left of $l_2$, then by induction, we get a matching of size $m_1+(m_2+1)+m_3+m_4$ in $\SM{3}{1}\cup \SP{2}{2}\cup\SM{2}{3r}\cup \SM{3}{4}$. Otherwise, $S'_4$ contains at least $n_3+2=4m_3+2$ points. Thus, by induction, we get a matching of size $m_1+m_2+(m_3+1)+ m_4$ in $S_1\cup S_2\cup S'_4\cup \SM{3}{4}$. In all cases we get a matching of size $m+1$ in $S$.
\end{itemize}
  \item $r_3\neq 0.$

In this case $r_3=1$, and without loss of generality, assume that $r_2=1$; which means $r_1=r_4=0$. Thus, $n_1=4m_1$, $n_2=4m_2+1$, $n_3=4m_3+1$, and $n_4=4m_4$. If all elements of $\{m_1,m_2,m_4\}$ are equal to zero, then we have $m=m_3$, where $m_3\ge 1$. Consider the triangles $\SP{1}{2}$ and $\SM{1}{3r}$, which are disjoint. By induction, we get a matched pair in $\SP{1}{2}$ and a matching of size at least $m_3$ in $\SM{1}{3r}$. Thus, in total, we get a matching of size at least $1+m_3=m+1$ in $S$. Assume some elements in $\{m_1,m_2,m_4\}$ are greater than zero. Consider the triangles $\SM{2}{1}$, $\SM{3}{2}$, and $\SM{2}{4}$. See Figure~\ref{Theta-six-fig4}(a). By induction, we get matchings of size $m_1$, $m_2$, and $m_4$ in $\SM{2}{1}$, $\SM{3}{2}$, and $\SM{2}{4}$, respectively. 
Now we consider the largest triangle among $\SM{2}{1}$, $\SM{3}{2}$, and $\SM{2}{4}$. Because of symmetry, we have two cases: (i) $\SM{3}{2}$ is the largest, or (ii) $\SM{2}{4}$ is the largest.

\begin{figure}[h!]
  \centering
\setlength{\tabcolsep}{0in}
  $\begin{tabular}{cc}
\multicolumn{1}{m{.5\columnwidth}}{\centering\includegraphics[width=.38\columnwidth]{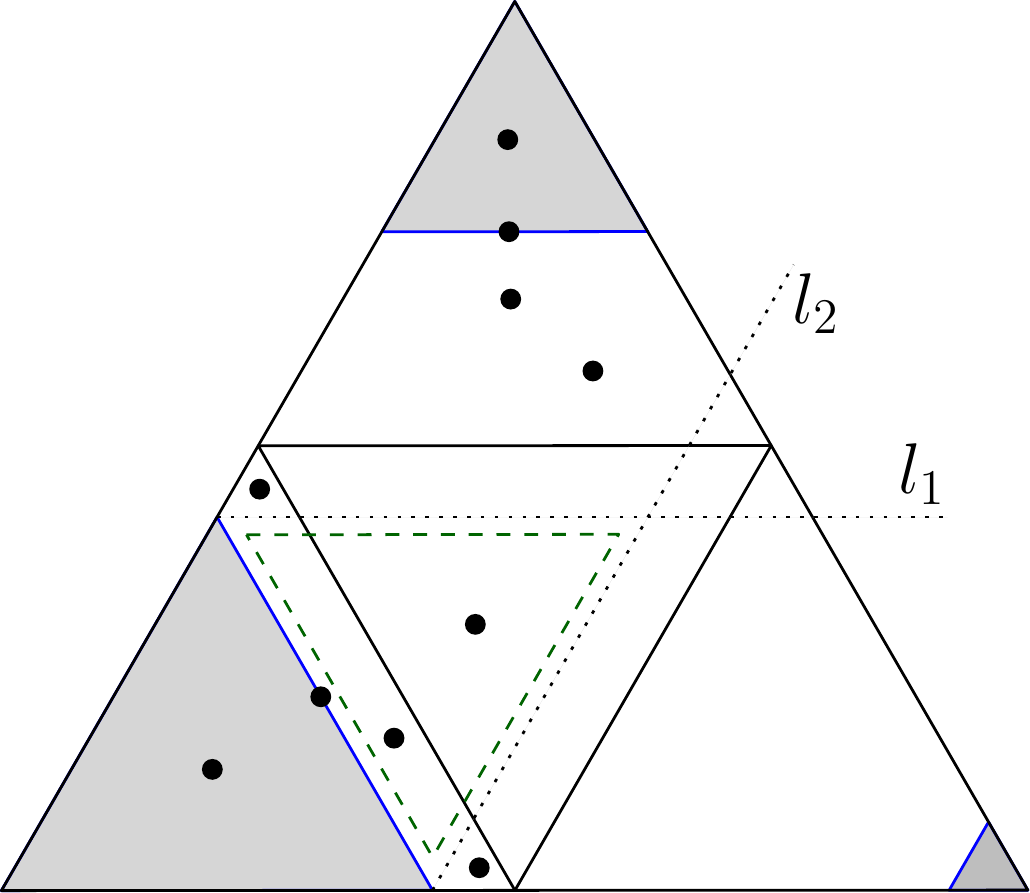}}
&\multicolumn{1}{m{.5\columnwidth}}{\centering\includegraphics[width=.38\columnwidth]{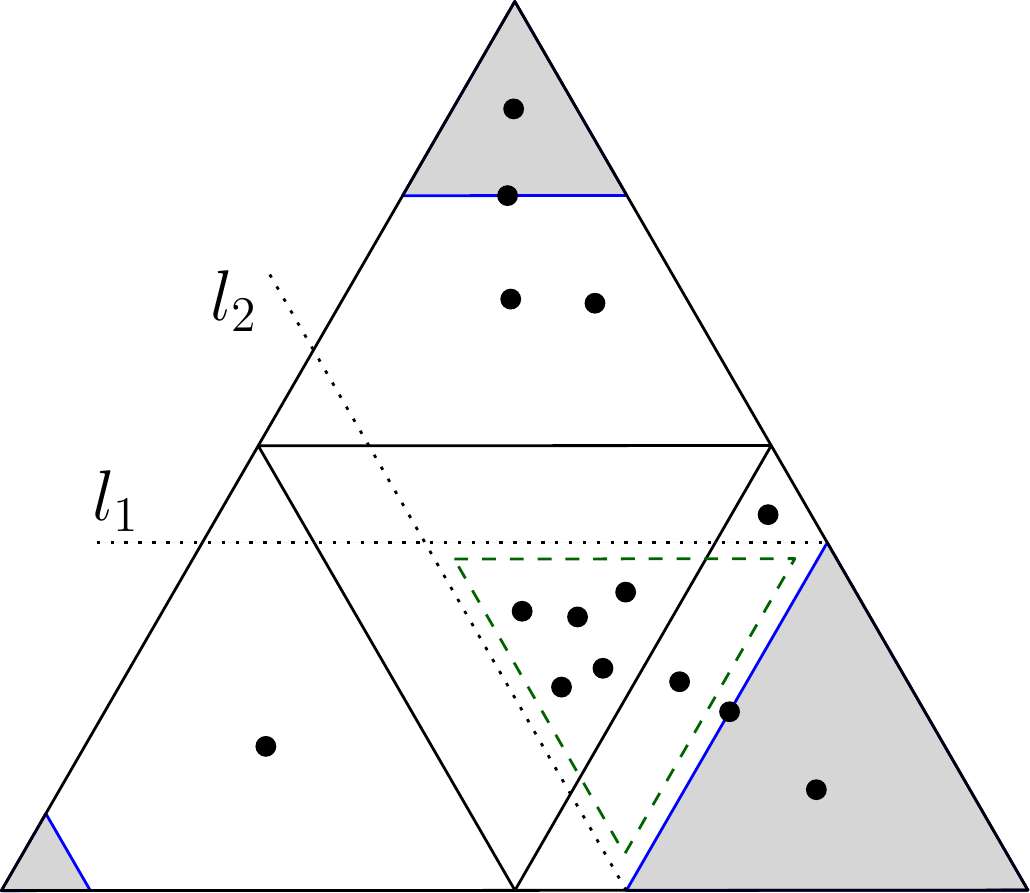}}
\\
(a) & (b)
\end{tabular}$
  \caption{(a) $\SM{3}{2}$ is larger than $\SM{2}{1}$ and $\SM{2}{4}$. (b) $\SM{2}{4}$ is larger than $\SM{2}{1}$ and $\SM{3}{2}$.}
\label{Theta-six-fig4}
\end{figure}
\begin{itemize}
 \item {\em $\SM{3}{2}$ is larger than $\SM{2}{1}$ and $\SM{2}{4}$.}
Define $l_1$, $l_2$, $S'_2$ as in Figure~\ref{Theta-six-fig4}(a). If at least two points of $S_1\cup S_2\cup S_3$ are to the right of $l_2$, then by induction, we get a matching of size $m_1+m_2+m_3+(m_4+1)$ in $\SM{2}{1}\cup \SM{3}{2}\cup\SM{2}{3l}\cup \SP{2}{4}$. If at least two points of $S_2\cup S_3\cup S_4$ are above $l_1$, then by induction, we get a matching of size $(m_1+1)+m_2+m_3+m_4$ in $\SP{2}{1}\cup \SM{3}{2}\cup\SM{2}{3b}\cup \SM{2}{4}$. Otherwise, $S'_2$ contains $n_3+1=4m_3+2$ points, and we get a matching of size $m_1+m_2+(m_3+1)+ m_4$ in $S_1\cup \SM{3}{2}\cup S'_2\cup S_4$. In all cases we get a matching of size $m+1$ in $S$.

\item {\em $\SM{2}{4}$ is larger than $\SM{2}{1}$ and $\SM{3}{2}$.}
Define $l_1$, $l_2$, $S'_4$ as in Figure~\ref{Theta-six-fig4}(b). If at least two points of $S_2\cup S_3\cup S_4$ are above $l_1$, then by induction, we get a matching of size $(m_1+1)+m_2+m_3+m_4$ in $\SP{2}{1}\cup \SM{3}{2}\cup\SM{2}{3b}\cup \SM{2}{4}$. If any point of $S_1\cup S_3\cup S_4$ is to the left of $l_2$, then by induction, we get a matching of size $m_1+(m_2+1)+m_3+m_4$ in $\SM{2}{1}\cup \SP{1}{2}\cup\SM{1}{3r}\cup \SM{2}{4}$. Otherwise, $S'_4$ contains at least $n_3+1=4m_3+2$ points, and we get a matching of size $m_1+m_2+(m_3+1)+ m_4$ in $S_1\cup S_2\cup S'_4\cup \SM{2}{4}$. In all cases we get a matching of size $m+1$ in $S$. 
\end{itemize}
\end{itemize}
\end{proof}
\section{Strong Matching in 
$\G{\sqrs}{P}$}

In this section we consider the problem of computing a strong matching in $\G{\sqrs}{P}$, where $\sqr$ is an axis-aligned square whose center is the origin. We assume that $P$ is in general position, i.e., (i) no two points have the same $x$-coordinate or the same $y$-coordinate, and (ii) no four points are on the boundary of any homothet of $\sqr$. Recall that $\G{\sqrs}{P}$ is equal to the $L_\infty$-Delaunay graph on $P$. \'{A}brego et al. \cite{Abrego2004, Abrego2009} proved that $\G{\sqrs}{P}$ has a strong matching of size at least $\lceil n/5\rceil$. Using a similar approach as in Section~\ref{theta-six-section}, we prove that $\G{\sqrs}{P}$ has a strong  matching of size at least $\lceil\frac{n-1}{4}\rceil$.

\label{infty-Delaunay-section}
\begin{theorem}
\label{infty-Delaunay-thr}
Let $P$ be a set of $n$ points in general position in the plane. Let $S$ be an axis-parallel square that contains $P$. Then, it is possible to find a strong matching of size at least $\lceil\frac{n-1}{4}\rceil$ for $\G{\sqrs}{P}$ in $S$.
\end{theorem}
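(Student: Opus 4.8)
The plan is to mirror the inductive argument of Theorem~\ref{theta-six-thr} in Section~\ref{theta-six-section}, replacing the four equal-area sub-triangles by the four congruent axis-parallel quadrant squares into which the vertical and horizontal medians of $S$ split it, and replacing homothets of $\trid$/$\triu$ by homothets of $\sqr$. The induction is on $n$, the hypothesis being that any point set of size $n'\le n-1$ contained in an axis-parallel square $S'$ has a strong matching of size $\lceil (n'-1)/4\rceil$ whose representing squares all lie inside $S'$. For the base cases $n\le 5$ (where the target size is at most $1$) I would exhibit a single strongly-matched pair inside $S$: shrink $S$, keeping it anchored at a corner, until it just captures two points, then invoke the shrinkability of $\sqr$ (Observation~\ref{shrink-triangle-obs}) to obtain an empty homothet of $\sqr$ with those two points on opposite sides and contained in $S$, which is an edge of $\G{\sqrs}{P}$.

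For the inductive step, the reduction for $n=4m+r$ with $r\in\{0,1,3\}$ is identical to the triangle case, since it uses only the arithmetic identity $\lceil\frac{n-1}{4}\rceil=\lceil\frac{(n-1)-1}{4}\rceil$; deleting one point and invoking the hypothesis disposes of these residues. The only genuine case is $n=4m+2$. Since any smaller square containing $P$ serves equally well (matching squares inside it remain inside $S$), I may assume $S$ is the smallest enclosing axis-parallel square of $P$, so that $P$ meets two opposite sides of $S$ and the corresponding median strictly separates the points; consequently the four quadrants $S_1,\dots,S_4$ cut out by the two medians each hold fewer than $n$ points and the hypothesis applies to each. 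Because every representing square produced inside a quadrant stays in that quadrant, the four sub-matchings are pairwise disjoint and lie in $S$, with total size $A=\sum_{i=1}^{4}\lceil\frac{n_i-1}{4}\rceil$. The computation of Claim~1 yields $A\ge m$ verbatim, and if $A>m$ we are done; when $A=m$, the argument of Claim~2 forces the residue multiset $\{r_i\}=\{n_i\bmod 4\}$ to be either $\{3,1,1,1\}$ or $\{0,0,1,1\}$, and it remains to extract one extra disjoint square.

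To win the extra pair I would use corner-anchored auxiliary squares playing the role of $\SM{x}{i}$ and $\SP{x}{i}$ in Section~\ref{theta-six-section}: for each quadrant, the smallest axis-parallel square anchored at the outer corner of $S$ that omits $x$ of the quadrant's points, and the smallest such anchored square that additionally captures $x$ points of $P$ lying toward the center of $S$. Shrinking toward the outer corner frees room near the center, while growing toward the center keeps the square inside $S$ as long as its side does not exceed that of $S$. Exactly as in the triangle proof, the medians — or lines parallel to them through a corner of the largest of the anchored squares — separate the enlarged squares from the remaining ones, securing disjointness. In each residue pattern (up to the eightfold dihedral symmetry of the square) I would select the largest of three such anchored squares, push its separating line toward the quadrant holding the surplus point(s) so as to form a region containing $\equiv 2 \pmod 4$ points, and apply the hypothesis there to gain one matched pair beyond $m$, reaching $\lceil\frac{n-1}{4}\rceil=m+1$.

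The main obstacle will be the case bookkeeping of this last step: checking, for the two residue patterns, that the reallocated corner-anchored squares are simultaneously (a) pairwise disjoint, (b) contained in $S$, and (c) each carrying the right point count so that the per-piece inductions sum to $m+1$. The square setting is more symmetric than the triangle one, which should collapse several of the triangle sub-cases; but one must still track carefully the position of the surplus points relative to the median cross — precisely the role played by the half-lines $l_1,l_2$ and the carved triangle $S'$ in Section~\ref{theta-six-section}. No new geometric difficulty should arise beyond this, since by shrinkability an empty axis-parallel square contained in a region corresponds directly to an edge of $\G{\sqrs}{P}$ realizable within that region.
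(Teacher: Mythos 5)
Your proposal follows essentially the same route as the paper's own proof: induction on $n$ with the two-median quadrant decomposition, verbatim reuse of Claims 1 and 2 from Theorem~\ref{theta-six-thr}, and corner-anchored auxiliary squares (the analogues of $\SP{$x$}{$i$}$ and $\SM{$x$}{$i$}$) with separating lines through the sides of the largest one to extract the extra matched pair when $A=m$. The case bookkeeping you defer is exactly what the paper carries out for the residue patterns $\{3,1,1,1\}$ and $\{0,0,1,1\}$, and, as you anticipate, the square's symmetry makes it shorter than the triangle analysis, so there is no substantive difference in approach.
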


\begin{proof}
The proof is by induction. Assume that any point set of size $n'\le n-1$ in an axis-parallel square $S'$, has a strong matching of size $\lceil \frac{n'-1}{4}\rceil$ in $S'$. If $n$ is $0$ or $1$, then there is no matching in $S$, and if $n\in\{2, 3, 4, 5\}$, then by shrinking $S$, it is possible to find a strongly matched pair. Suppose that $n\ge 6$, and $n=4m+r$, where $r\in\{0,1,2,3\}$. If $r\in\{0, 1,3\}$, then 
$\lceil \frac{n-1}{4}\rceil = \lceil \frac{(n-1)-1}{4}\rceil$, and by
induction we are done. Suppose that $n=4m+2$, for some $m\ge 1$. We prove that there are $\lceil\frac{n-1}{4}\rceil=m+1$ disjoint squares in $S$,
each of them matches a pair of points in $P$. Partition $S$ into four equal area squares $S_1, S_2, S_3, S_4$ which contain $n_1, n_2,n_3, n_4$ points, respectively; see Figure~\ref{square-fig1}(a). Let $n_i=4m_i+r_i$ for $1\le i\le 4$, where $r_i\in\{0,1,2,3\}$. Let $R$ be the multiset $\{r_1,r_2,r_3,r_4\}$. 
By induction, in $S_1\cup S_2\cup S_3\cup S_4$, we have a strong matching of size at least
\begin{equation}
A=\left\lceil\frac{n_1-1}{4}\right\rceil + \left\lceil\frac{n_2-1}{4}\right\rceil +\left\lceil\frac{n_3-1}{4}\right\rceil+\left\lceil \frac{n_4-1}{4}\right\rceil.\nonumber
\end{equation} 

In the proof of Theorem~\ref{theta-six-thr}, we have shown the following two claims:

{\bf Claim 1:} {$A\ge m$.}

{\bf Claim 2:} {\em If $A=m$, then either (i) one element in $R$ is equal to $3$ and the other elements are equal to $1$, or (ii) two elements in $R$ are equal to $0$ and the other elements are equal to $1$.}

If $A> m$, then we are done. Assume that $A=m$; in fact, by the induction hypothesis we have an strong matching of size $m$ in $S$. 
We show how to find one more strongly matched pair in each case of Claim 2.

We define $\SM{$x$}{1}$ as the smallest axis-parallel square contained in $S_1$ and anchored at the top-left corner of $S_1$, which contains all the points in $S_1$ except $x$ points. If $S_1$ contains less than $x$ points, then the area of $\SM{$x$}{1}$ is zero. We also define $\SP{$x$}{1}$ as the smallest axis-parallel square that contains $S_1$ and anchored at the top-left corner of $S_1$, which has all the points in $S_1$ plus $x$ other points of $P$. See Figure~\ref{square-fig1}(a). Similarly we define the squares $\SM{$x$}{2}$, $\SP{$x$}{2}$, and $\SM{$x$}{3}$, $\SP{$x$}{3}$, and $\SM{$x$}{4}$, $\SP{$x$}{4}$ which are anchored at the top-right corner of $S_2$, and the bottom-left corner of $S_3$, and the bottom-right corner of $S_4$, respectively.

{\bf Case 1:} {\em One element in $R$ is equal to 3 and the other elements are equal to 1.}

In this case, we have $m = m_1 + m_2 + m_3 + m_4 + 1$. Without loss of generality, assume that $r_1=3$ and $r_2=r_3=r_4=1$. Consider the squares $\SM{1}{1}$, $\SM{3}{2}$, $\SM{3}{3}$, and $\SM{3}{4}$. Note that the area of some of these squares\textemdash but not all\textemdash may be
equal to zero. See Figure~\ref{square-fig1}(b). By induction, we get matchings of size $m_1+1$, $m_2$, $m_3$, and $m_4$, in 
$\SM{1}{1}$, $\SM{3}{2}$, $\SM{3}{3}$, and $\SM{3}{4}$, respectively. Now consider the largest square among $\SM{1}{1}$, $\SM{3}{2}$, $\SM{3}{3}$, and $\SM{3}{4}$. Because of the symmetry, we have only three cases: (i) $\SM{1}{1}$ is the largest, (ii) $\SM{3}{2}$ is the largest, and (iii) $\SM{3}{4}$ is the largest.
\begin{figure}[htb]
  \centering
\setlength{\tabcolsep}{0in}
  $\begin{tabular}{ccc}
\multicolumn{1}{m{.33\columnwidth}}{\centering\includegraphics[width=.26\columnwidth]{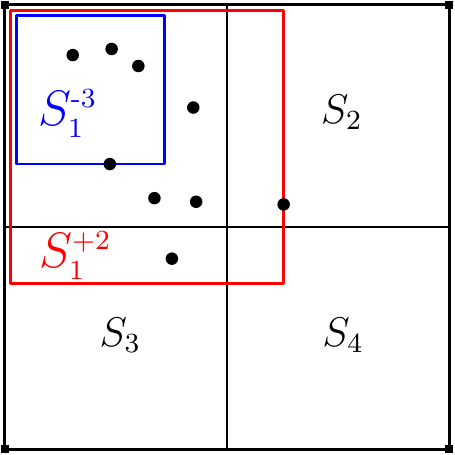}}
&\multicolumn{1}{m{.33\columnwidth}}{\centering\includegraphics[width=.26\columnwidth]{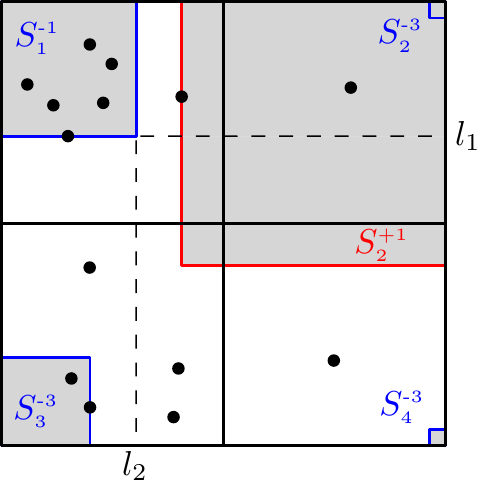}} &\multicolumn{1}{m{.33\columnwidth}}{\centering\includegraphics[width=.26\columnwidth]{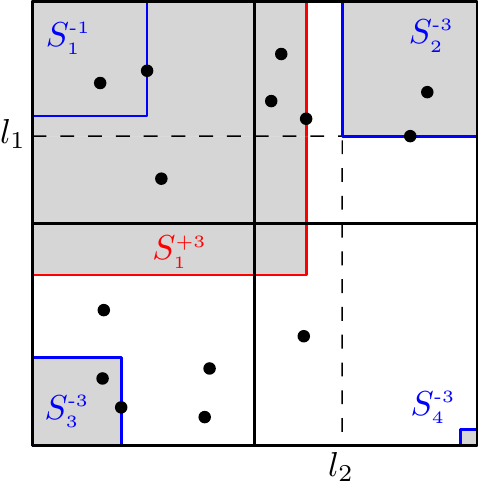}}
\\
(a) & (b)& (c)
\end{tabular}$
  \caption{(a) Split $S$ into four equal area squares. (b) $\SM{1}{1}$ is larger than $\SM{3}{2}$, $\SM{3}{3}$, and $\SM{3}{4}$. (c) $\SM{3}{2}$ is larger than $\SM{1}{1}$, $\SM{3}{3}$, and $\SM{3}{4}$.}
\label{square-fig1}
\end{figure}
\begin{itemize}
\item {\em $\SM{1}{1}$ is the largest square.}
Consider the lines $l_1$ and $l_2$ which contain the bottom side and right side of $\SM{1}{1}$, respectively; the dashed lines in Figure~\ref{square-fig1}(b). Note that $l_1$ and $l_2$ do not intersect any of $\SM{3}{2}$, $\SM{3}{3}$, and $\SM{3}{4}$. If any point of $S_1$ is to the right of $l_2$, then by induction, we get a matching of size $(m_1+1)+(m_2+1)+m_3+m_4$ in $\SM{1}{1}\cup \SP{1}{2}\cup\SM{3}{3}\cup \SM{3}{4}$. Otherwise, by induction, we get a matching of size $(m_1+1)+m_2+(m_3+1)+ m_4$ in $\SM{1}{1}\cup \SM{3}{2}\cup\SP{1}{3}\cup \SM{3}{4}$. In all cases we get a matching of size $m+1$ in $S$. 
\item {\em $\SM{3}{2}$ is the largest square.}
Consider the lines $l_1$ and $l_2$ which contain the bottom side and left side of $\SM{3}{2}$, respectively; the dashed lines in Figure~\ref{square-fig1}(c). Note that $l_1$ and $l_2$ do not intersect any of $\SM{1}{1}$, $\SM{3}{3}$, and $\SM{3}{4}$. If any point of $S_2$ is below $l_1$, then by induction, we get a matching of size $(m_1+1)+m_2+m_3+(m_4+1)$ in $\SM{1}{1}\cup \SM{3}{2}\cup\SM{3}{3}\cup \SP{1}{4}$. Otherwise, by induction, we get a matching of size $(m_1+2)+m_2+m_3+ m_4$ in $\SP{3}{1}\cup \SM{3}{2}\cup\SM{3}{3}\cup \SM{3}{4}$; see Figure~\ref{square-fig1}(c). In all cases we get a matching of size $m+1$ in $S$. 

\item {\em $\SM{3}{4}$ is the largest square.}
Consider the lines $l_1$ and $l_2$ which contain the top side and left side of $\SM{3}{4}$, respectively. If any point of $S_4$ is above $l_1$, then by induction, we get a matching of size $(m_1+1)+(m_2+1)+m_3+m_4$ in $\SM{1}{1}\cup \SP{1}{2}\cup\SM{3}{3}\cup \SM{3}{4}$. Otherwise, by induction, we get a matching of size $(m_1+1)+m_2+(m_3+1)+ m_4$ in $\SM{1}{1}\cup \SM{3}{2}\cup\SP{1}{3}\cup \SM{3}{4}$. In all cases we get a matching of size $m+1$ in $S$. 
\end{itemize}

{\bf Case 2:} {\em Two elements in $R$ are equal to 0 and two elements are equal to 1.}

In this case, we have $m = m_1 + m_2 + m_3 + m_4$. Because of the symmetry, only two cases may arise: (i) $r_1=r_2=1$ and $r_3=r_4=0$, (ii) $r_1=r_4=1$ and $r_2=r_3=0$. 
\begin{itemize}
 \item {\em $r_1=r_2=1$ and $r_3=r_4=0$.}
Consider the squares $\SM{3}{1}$, $\SM{3}{2}$, $\SM{2}{3}$, and $\SM{2}{4}$. By induction, we get matchings of size $m_1$, $m_2$, $m_3$, and $m_4$, in $\SM{3}{1}$, $\SM{3}{2}$, $\SM{2}{3}$, and $\SM{2}{4}$, respectively. Now consider the largest square among $\SM{3}{1}$, $\SM{3}{2}$, $\SM{2}{3}$, and $\SM{2}{4}$. Because of the symmetry, we have only two cases: (a) $\SM{3}{1}$ is the largest, (b) $\SM{2}{3}$ is the largest. In case (a) we get one more matched pair either in $\SP{1}{2}$ or in $\SP{2}{3}$. In case (b) we get one more matched pair either in $\SP{1}{1}$ or in $\SP{2}{4}$.

 \item {\em $r_1=r_4=1$ and $r_2=r_3=0$.}
Consider the squares $\SM{3}{1}$, $\SM{2}{2}$, $\SM{2}{3}$, and $\SM{3}{4}$. By induction, we get matchings of size $m_1$, $m_2$, $m_3$, and $m_4$, in $\SM{3}{1}$, $\SM{2}{2}$, $\SM{2}{3}$, and $\SM{3}{4}$, respectively. Now consider the largest square among $\SM{3}{1}$, $\SM{2}{2}$, $\SM{2}{3}$, and $\SM{3}{4}$. Because of the symmetry, we have only two cases: (a) $\SM{3}{1}$ is the largest, (b) $\SM{2}{2}$ is the largest. In case (a) we get one more matched pair either in $\SP{2}{2}$ or in $\SP{2}{3}$. In case (b) we get one more matched pair either in $\SP{1}{1}$ or in $\SP{1}{4}$.
\end{itemize}
\end{proof}

\section{A Conjecture on Strong Matching in $\G{\ddiscs}{P}$}
\label{conjecture-section}
In this section, we discuss a possible way to further improve upon Theorem~\ref{Gabriel-thr}, as well as
a construction leading to the conjecture that Algorithm~\ref{alg1} computes a strong matching of size at least $\lceil\frac{n-1}{8}\rceil$; unfortunately we are not able to prove this. 

In Section~\ref{Gabriel-section} we proved that $\mathcal{I}(e^+)$ contains at most 16 edges. In order to achieve this upper bound we used the fact that the centers of the disks in $\mathcal{I}(e^+)$ should be far apart. We did not consider the endpoints of the edges representing these disks. By Observation~\ref{no-point-in-circle-obs}, the disks representing the edges in $\mathcal{I}(e^+)$ cannot contain any of the endpoints. We applied this observation only on $u$ and $v$. Unfortunately, our attempts to apply this observation on the endpoints of edges in $\mathcal{I}(e^+)$ have been so far unsuccessful.

Recall that $T$ is a Euclidean minimum spanning tree of $P$, and for every edge $e=(u,v)$ in $T$, $\dg{e}$ is the degree of $e$ in $T(e^+)$, where $T(e^+)$ is the set of all edges of $T$ with weight at least $w(e)$. Note that $w(e)$ is directly related to the Euclidean distance between $u$ and $v$. Observe that the discs representing the edges adjacent to $e$ intersect $D(u,v)$. Thus, these edges are in $\Inf{e}$. 
We call an edge $e$ in $T$ a {\em minimal edge} if $e$ is not longer than any of its adjacent edges. We observed that the maximum degree of a minimal edge is an upper bound for $\Inf{e}$. We conjecture that,

\begin{conjecture}
{\em \Inf{$T$}} is at most the maximum degree of a minimal edge.
\end{conjecture}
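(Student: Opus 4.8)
The natural plan is to split the inequality into two upper bounds: first, that the maximum of $|\Inf{e}|$ over all tree edges is attained at (or can be bounded by) a minimal edge; and second, that for a minimal edge $e=(u,v)$ every edge of $\Inf{e}$ is incident to $e$, so that $|\Inf{e}|=\dg{e}$. If both hold, then $\Inf{T}=|\Inf{e_0}|=\dg{e_0}$ for some minimal edge $e_0$, which is exactly the assertion. Since the reverse inequality $|\Inf{e}|\ge\dg{e}$ is already the Remark following Theorem~\ref{GS-thr}, all the content lies in these two bounds. I would begin by fixing $e=(u,v)$ attaining $\Inf{T}=|\Inf{e}|$ and recording that $e$ is a smallest-weight edge of its own influence set, since every edge of $\Inf{e}$ lies in $T(e^+)$ and hence has weight at least $w(e)$.

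For the second part I would argue by the cycle property (Lemma~\ref{cycle-lemma}). Fix a minimal edge $e=(u,v)$ and a hypothetical $e'=(a,b)\in\Inf{e}$ with $\{a,b\}\cap\{u,v\}=\emptyset$. By Observation~\ref{no-point-in-circle-obs} the disk $D(u,v)$ omits $a,b$ and $D(a,b)$ omits $u,v$, and by Lemma~\ref{center-in-lemma} neither disk contains the centre of the other. The goal is to extract, from the intersection $D(u,v)\cap D(a,b)\neq\emptyset$, a cross pairing, say $(u,a)$ and $(v,b)$, for which both $D(u,a)$ and $D(v,b)$ are strictly smaller than $\max\{D(u,v),D(a,b)\}=D(a,b)$. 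The four points then form the cycle $u,v,b,a,u$ in $K_{\ddiscs}(P)$ whose unique heaviest edge is $e'\in T$, contradicting Lemma~\ref{cycle-lemma}. Minimality of $e$ is meant to be the ingredient that keeps $D(u,v)$ the smaller disk and forces the near endpoint of $e'$ to lie close to $u$ or $v$, exactly as in the arc analysis of Lemma~\ref{center-in-lemma}.

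For the first part I would attempt the same cycle machinery along a descent: replacing the current edge by a strictly shorter incident tree edge $f$ enlarges the threshold set from $T(e^+)$ to $T(f^+)$, and I would try to show that every disk of weight at least $w(e)$ meeting $D(e)$ either already meets $D(f)$ or can be charged to a fresh incident disk at the shared endpoint, so that $|\Inf{f}|\ge|\Inf{e}|$; iterating reaches a minimal edge without decreasing the influence count. This charging step is the tentative one, since intersection with $D(e)$ does not entail intersection with $D(f)$, and I would have to control what is lost when the reference disk shrinks.

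The hard part is the geometry of the second step, and this is where I expect to get stuck, as the authors did. When $D(u,v)$ and $D(a,b)$ overlap only in a thin lens with neither centre inside the other, the endpoints $a,b$ of the larger disk can both sit far from $u$ and $v$, so all four cross distances $|ua|,|ub|,|va|,|vb|$ may exceed $|ab|$, and no cross pairing supplies the short edges the cycle argument needs. The two-disk analysis of Lemma~\ref{center-in-lemma} therefore does not extend, and recovering adjacency seems to require using the emptiness of Observation~\ref{no-point-in-circle-obs} simultaneously at all four endpoints of every disk in $\Inf{e}$, a global packing argument in the spirit of Bateman and Erd\H{o}s but sharp enough to yield incidence rather than the crude count $17$. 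Turning that emptiness-at-the-endpoints into the required cross-edge bound is the obstacle I do not see how to clear.
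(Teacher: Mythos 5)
There is no proof in the paper to compare against: the statement you are working on is stated as a \emph{conjecture}, and the authors explicitly say they were unable to prove it (``unfortunately we are not able to prove this''; ``our attempts to apply this observation on the endpoints of edges in $\mathcal{I}(e^+)$ have been so far unsuccessful''). The surrounding section only proves the weaker, unconditional facts --- Lemma~\ref{edge-degree-lemma} on the existence of an MST with bounded minimal-edge degree --- which would give the $\lceil\frac{n-1}{8}\rceil$ bound \emph{if} the conjecture held. So the correct benchmark here is not a hidden argument you failed to reproduce; it is an open problem.

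Measured against that, your proposal is an honest plan rather than a proof, and you correctly locate the obstruction, but both of its pillars are unproven. The second step --- that for a minimal edge $e=(u,v)$ every $e'=(a,b)\in\Inf{e}$ must be incident to $e$ --- is exactly where the paper gets stuck: intersection of $D(u,v)$ with $D(a,b)$ does not produce a cross pairing with two short edges, because in the thin-lens configuration (neither center inside the other disk, guaranteed by Lemma~\ref{center-in-lemma}) all four distances $|ua|,|ub|,|va|,|vb|$ can exceed both $|uv|$ and $|ab|$, so the cycle $u,v,b,a,u$ has no candidate replacement edges and Lemma~\ref{cycle-lemma} gives nothing. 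Note also that even if it worked, Lemma~\ref{cycle-lemma} requires the maximum-weight edge of the cycle to be \emph{unique}, which your pairing does not guarantee. The first step --- that $\Inf{T}$ is attained at (or bounded by the degree of) a minimal edge --- rests on a charging argument you yourself flag as tentative, and it is genuinely problematic: when you descend from $e$ to a shorter incident edge $f$, a disk that meets $D(e)$ need not meet $D(f)$, and $T(f^+)\supseteq T(e^+)$ enlarges the candidate set in a way that does not obviously compensate. Since you end by conceding you cannot clear the main obstacle, the proposal should be read as a correct diagnosis of why the conjecture is hard --- consistent with the authors' own discussion --- not as a resolution of it.
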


Monma and Suri~\cite{Monma1992} showed that for every point set $P$ there exists a Euclidean minimum spanning tree, $MST(P)$, of maximum vertex degree five. Thus, the maximum edge degree in $MST(P)$ is 9. We show that for every point set $P$, there exists a Euclidean minimum spanning tree, $MST(P)$, such that the degree of each node is at most five and the degree of each minimal edge is at most eight. This implies the conjecture that $\Inf{MST(P)}\le 8$. That is, Algorithm~\ref{alg1} returns a strong matching of size at least $\lceil\frac{n-1}{8}\rceil$.

\begin{figure}[ht]
  \centering
    \includegraphics[width=0.4\textwidth]{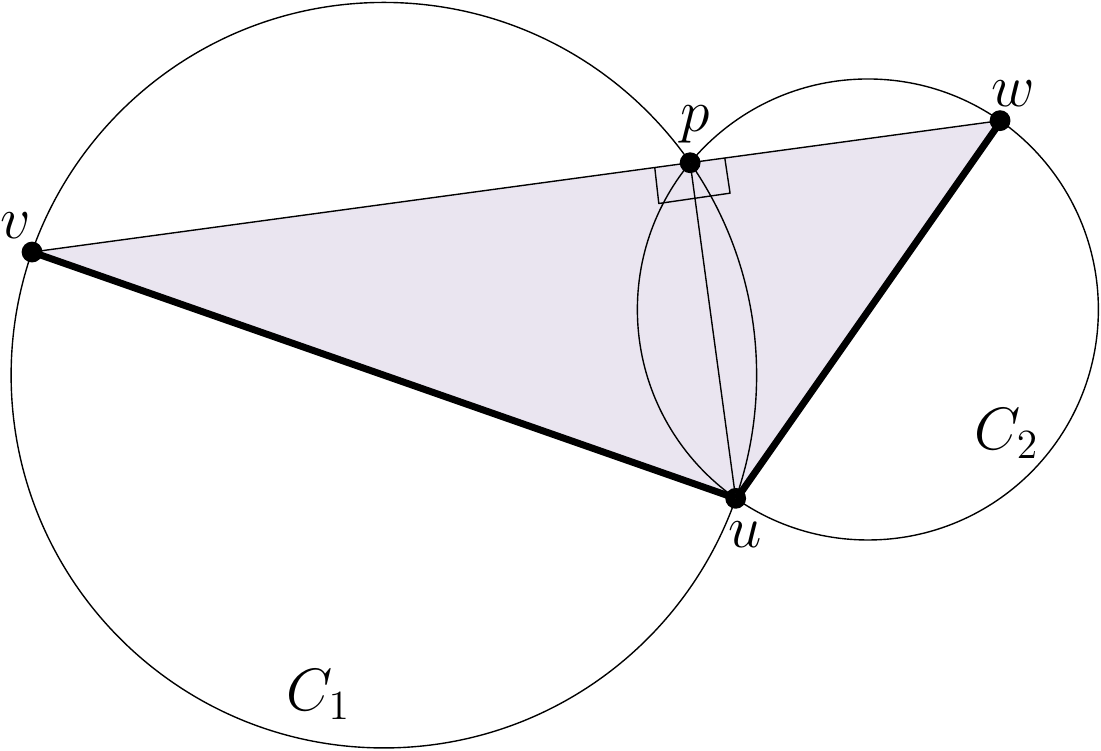}
  \caption{In $MST(P)$, the triangle $\bigtriangleup uvw$ formed by two adjacent edges $uv$ and $uw$, is empty.}
\label{empty-triangle-figure}
\end{figure}
\begin{lemma}
\label{empty-triangle-lemma}
If $uv$ and $uw$ are two adjacent edges in $MST(P)$, then the triangle $\bigtriangleup uvw$ has no point of $P\setminus\{u, v, w\}$ in its interior or on its boundary.
\end{lemma}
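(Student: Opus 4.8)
The plan is to argue by contradiction: suppose some $p\in P\setminus\{u,v,w\}$ lies in $\bigtriangleup uvw$, and derive a forbidden configuration in $MST(P)$. First I would pin down the shape of the triangle. Since $uv$ and $uw$ are both edges of the tree, the unique tree path between $v$ and $w$ is exactly $v,u,w$; hence adding the chord $vw$ creates the cycle $v,u,w,v$, and by Lemma~\ref{cycle-lemma} the non-tree edge $vw$ is its heaviest edge. As Euclidean length governs the weight, this gives $|vw|\ge\max\{|uv|,|uw|\}$, so $vw$ is the longest side and the angle $\angle vuw$ at $u$ is the largest angle of the triangle (in particular at least $60^\circ$).

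Next I would record the standard ``empty lune'' property of a Euclidean minimum spanning tree and use it to reduce the lemma to a covering statement. For an edge $ab$ of $MST(P)$, the open lune $L_{ab}=\{x:\;|xa|<|ab|\text{ and }|xb|<|ab|\}$ contains no point of $P$: if it contained some $q$, then removing $ab$ splits the tree into a component holding $a$ and one holding $b$, and reconnecting them by whichever of $aq,bq$ crosses this cut yields a strictly lighter spanning tree, a contradiction. Thus it suffices to prove the purely geometric claim that $\bigtriangleup uvw\subseteq L_{uv}\cup L_{uw}$, since then the supposed interior point $p$ would lie in the (empty) lune of one of the two tree edges $uv$ or $uw$.

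For the covering claim I would assume without loss of generality that $|uw|\ge|uv|$. Every vertex of the triangle is within distance $|uw|$ of $u$, so by convexity of $x\mapsto|xu|$ we get $|pu|<|uw|$ for every interior point $p$; hence $p\in L_{uw}$ unless $|pw|\ge|uw|$. It then remains to show that the residual region $E=\{x\in\bigtriangleup uvw:\;|xw|\ge|uw|\}$ lies in $\overline{L_{uv}}$. Let $z$ be the point of side $vw$ with $|zw|=|uw|$; the circle of radius $|uw|$ centered at $w$ passes through both $u$ and $z$, and since the sub-triangle $uzw$ has all vertices within distance $|uw|$ of $w$ it lies inside this disk, so $E\subseteq\bigtriangleup uvz$. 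Finally $\bigtriangleup uvz\subseteq\overline{L_{uv}}$: the vertices $u,v$ trivially meet the two distance constraints, $|zv|=|vw|-|uw|\le|uv|$ is the triangle inequality, and $|zu|\le|uv|$ follows from the isosceles triangle $uzw$ (giving $|zu|=2|uw|\sin(\angle w/2)$) together with the law of sines, the required inequality reducing to $\angle v\le\angle u$, which holds because $\angle u$ is the largest angle. As $\overline{L_{uv}}$ is convex it contains all of $\bigtriangleup uvz$, and hence $E$.

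I expect this last paragraph to be the main obstacle: the reduction to a covering problem and the lune property are routine, but verifying that the escape region $E$ is swallowed by the other lune---in particular the estimate $|zu|\le|uv|$---is precisely where the hypothesis that $\angle u$ is the largest angle is used, and it requires the trigonometric bookkeeping sketched above. The boundary cases demanded by the statement I would dispose of directly: a point strictly inside a side $uv$ already satisfies $|pu|,|pv|<|uv|$ and so lies in $L_{uv}$, while the remaining degenerate coincidences (a point exactly on a lune boundary) are ruled out by the general-position assumption.
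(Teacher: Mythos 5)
Your overall strategy is sound and genuinely different from the paper's. The paper also covers $\bigtriangleup uvw$ by two empty regions attached to the two tree edges, but it uses the \emph{diametral} (Gabriel) disks on $uv$ and $uw$: these two circles meet at $u$ and a second point $p$, the inscribed right angles $\angle upv=\angle wpu=\pi/2$ force $p$ to lie on the segment $vw$, and the two sub-triangles $uvp$ and $upw$ are inscribed in the respective circles, so the covering is immediate with no trigonometry. You instead use the larger lunes (the relative-neighborhood-graph condition), which makes the covering statement easier to believe but harder to verify; your bookkeeping ($|vw|\ge\max\{|uv|,|uw|\}$, the reduction of the residual set $E$ to $\bigtriangleup uvz$, and the estimate $|zu|\le|uv|$ via $\sin(\angle v)\le\cos(\angle w/2)$) is correct. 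One small imprecision: Lemma~\ref{cycle-lemma} as stated requires a \emph{unique} maximum-weight edge, so in the tie case $|uv|=|uw|>|vw|$ it does not directly apply; the inequality $|vw|\ge\max\{|uv|,|uw|\}$ instead follows from the same exchange/cut argument you use for the lune.

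The genuine gap is in your last step. Your exchange argument certifies emptiness only of the \emph{open} lune, yet your covering only places the residual region $E$ inside the \emph{closed} lune $\overline{L_{uv}}$. A point $q$ of $P$ with, say, $|qv|=|uv|$ and $|qu|\le|uv|$ cannot be excluded by the exchange argument: if $q$ falls in the component of $u$ after deleting $uv$, the swap of $uv$ for $qv$ does not strictly decrease the weight, so minimality is not violated. You dismiss such points by appeal to ``the general-position assumption,'' but no such assumption is available here: Lemma~\ref{empty-triangle-lemma} is invoked in Lemma~\ref{edge-degree-lemma} for an arbitrary finite point set, and none of the paper's general-position conditions forbids a point at distance exactly $|uv|$ from $u$ or from $v$. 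Indeed, for three points forming an equilateral triangle the closed lune of an MST edge genuinely does contain a third point of $P$, so the boundary case requires an actual argument rather than a genericity claim. This is precisely where the paper's choice of the diametral disk pays off: a point $q\neq u,v$ on the boundary circle of the disk with diameter $uv$ satisfies $\angle uqv=\pi/2$, hence $|qu|^2+|qv|^2=|uv|^2$ and both distances are \emph{strictly} smaller than $|uv|$, so the exchange argument applies even to boundary points and no degenerate case survives.
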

\begin{proof}
If the angle between $uv$ and $uw$ is equal to $\pi$, then there is no other point of $P$ on $uv$ and $uw$. Assume that $\angle vuw < \pi$. Refer to Figure \ref{empty-triangle-figure}. Since $MST(P)$ is a subgraph of the Gabriel graph, the circles $C_1$ and $C_2$ with diameters $uv$ and $uw$ are empty. Since $\angle vuw < \pi$, $C_1$ and $C_2$ intersect each other at two points, say $u$ and $p$. Connect $u$, $v$ and $w$ to $p$. Since $uv$ and $uw$ are the diameters of $C_1$ and $C_2$, $\angle upv=\angle wpu=\pi/2$.
This means that $vw$ is a straight line segment. Since $C_1$ and $C_2$ are empty and $\bigtriangleup uvw \subset C_1 \cup C_2$, it follows that $\bigtriangleup uvw \cap P = \{u, v, w\}$.
\end{proof}

\begin{figure}[htb]
  \centering
  \includegraphics[width=.4\textwidth]{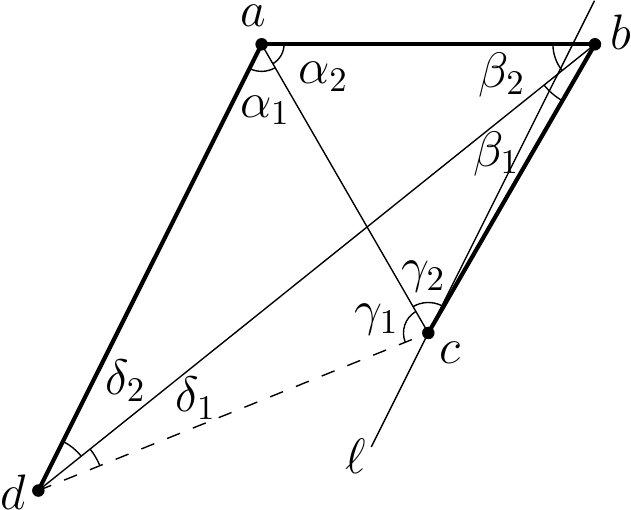}
  \caption{Illustration of Lemma~\ref{convex-quadrilateral-lemma}: $|ab|\le|bc|\le|ad|$, $\angle abc\ge \pi/3$, $\angle bad\ge \pi/3$, and $\angle abc + \angle bad \le \pi$.}
\label{convex-quadrilateral-fig}
\end{figure}

\begin{lemma}
\label{convex-quadrilateral-lemma}
Follow Figure~\ref{convex-quadrilateral-fig}. For a convex-quadrilateral $Q=a,b,c,d$ with $|ab|\le|bc|\le|ad|$, if  $\min\{\angle abc, \angle bad\}\ge \pi/3$ and $\angle abc + \angle bad \le \pi$, then $|cd|\le |ad|$.
\end{lemma}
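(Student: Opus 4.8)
The plan is to place the quadrilateral in coordinates and reduce the inequality to a short trigonometric check by exploiting the fact that the relevant quantity is \emph{affine} in one of the lengths and \emph{convex} in another. Write $p=|ab|$, $q=|bc|$, $s=|ad|$, $\alpha=\angle bad$, $\beta=\angle abc$, so the hypotheses become $p\le q\le s$, $\alpha,\beta\ge\pi/3$ and $\alpha+\beta\le\pi$. First I would put $a$ at the origin and $b=(p,0)$; convexity of $Q=a,b,c,d$ and the angles at $a$ and $b$ then give $d=(s\cos\alpha,\,s\sin\alpha)$ and $c=(p-q\cos\beta,\,q\sin\beta)$.

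The next step is to compute $f:=|cd|^2-|ad|^2$. After expanding, using $\cos\alpha\cos\beta-\sin\alpha\sin\beta=\cos(\alpha+\beta)$ and the law of cosines in $\bigtriangleup abc$, everything collapses to
\[
f=p^2+q^2-2pq\cos\beta-2sp\cos\alpha+2sq\cos(\alpha+\beta).
\]
The crucial observation is that the $s^2$ terms cancel, so $f$ is affine in $s$ with slope $2\bigl(q\cos(\alpha+\beta)-p\cos\alpha\bigr)$. Since $\alpha+\beta\ge 2\pi/3$ forces $\cos(\alpha+\beta)\le-\tfrac12$, while $\alpha\le\pi-\beta\le 2\pi/3$ forces $\cos\alpha\ge-\tfrac12$, this slope is at most $p-q\le 0$ by $p\le q$. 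Hence $f$ is non-increasing in $s$, and as $s\ge q$ it suffices to prove $f\le 0$ at $s=q$.

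With $s=q$ fixed, $f$ is a quadratic in $p$ with positive leading coefficient, hence convex on $[0,q]$, so its maximum is attained at an endpoint and only $p=0$ and $p=q$ need checking. At $p=0$ one gets $f=q^2\bigl(1+2\cos(\alpha+\beta)\bigr)\le 0$ directly from $\cos(\alpha+\beta)\le-\tfrac12$. At $p=q$ one gets $f=2q^2\bigl(1-\cos\alpha-\cos\beta+\cos(\alpha+\beta)\bigr)$, and the bracket factors as $(1-\cos\alpha)(1-\cos\beta)-\sin\alpha\sin\beta=-4\sin\tfrac{\alpha}{2}\sin\tfrac{\beta}{2}\cos\tfrac{\alpha+\beta}{2}$, which is $\le 0$ precisely because $\tfrac{\alpha+\beta}{2}\le\pi/2$. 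Combining the two reductions yields $f\le 0$, i.e.\ $|cd|\le|ad|$.

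The main obstacle is spotting the two reductions (affine in $s$, convex in $p$), which shrink a five-parameter inequality to two corner cases; once they are in place the only genuine computation is the single factorization at $p=q=s$. I would also remark that all three length hypotheses are used: $q\le s$ drives the $s$-reduction and $p\le q$ makes the slope non-positive, while $\alpha,\beta\ge\pi/3$ and $\alpha+\beta\le\pi$ supply exactly the sign conditions $\cos(\alpha+\beta)\le-\tfrac12$ and $\cos\tfrac{\alpha+\beta}{2}\ge 0$ needed in the two endpoint checks.
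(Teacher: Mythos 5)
Your proof is correct, and it takes a genuinely different route from the paper's. The paper argues synthetically: it introduces the split angles $\alpha_1=\angle cad$, $\alpha_2=\angle bac$, $\gamma_1=\angle acd$, $\gamma_2=\angle acb$, $\beta_i$, $\delta_i$, repeatedly applies ``the larger side faces the larger angle'' in the triangles $abc$, $abd$, $bcd$, and uses a line through $c$ parallel to $ad$ to get $\alpha_1\le\gamma_2$, ultimately deriving the chain $\alpha_1\le\gamma_2\le\alpha_2\le\gamma_1$ and reading off $|cd|\le|ad|$ from triangle $acd$. You instead coordinatize, compute $f=|cd|^2-|ad|^2$ in closed form, and kill it with two structural reductions: $f$ is affine in $s=|ad|$ with non-positive slope (via $\cos(\alpha+\beta)\le-\frac{1}{2}$, $\cos\alpha\ge-\frac{1}{2}$ and $p\le q$), and at $s=q$ it is convex in $p=|ab|$, so only $p=0$ and $p=q$ need checking; both endpoint values are non-positive by the identities you quote, and I have verified the expansion, the slope bound, and the factorization $1-\cos\alpha-\cos\beta+\cos(\alpha+\beta)=-4\sin\frac{\alpha}{2}\sin\frac{\beta}{2}\cos\frac{\alpha+\beta}{2}$. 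The analytic route is longer to write but fully mechanical, makes explicit exactly where each of the five hypotheses enters, and yields the equality case for free (equality forces $p=q=s$ and $\alpha+\beta=\pi$, i.e., $Q$ is a diamond) --- a fact the paper extracts less transparently from its own proof and then relies on in Lemma~\ref{edge-degree-lemma}. The paper's synthetic argument is shorter on the page, but some of its intermediate steps (e.g., passing from $\delta\le\beta$ and $\alpha_1\le\gamma_2$ to $\alpha_2\le\gamma_1$) take noticeably more effort to verify than anything in your computation.
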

\begin{proof}
 Let $\alpha_1=\angle cad$, $\alpha_2=\angle bac$, $\beta_1=\angle cbd$, $\beta_2=\angle abd$, $\gamma_1=\angle acd$, $\gamma_2=\angle acb$, $\delta_1=\angle bdc$, and $\delta_2=\angle adb$; see Figure~\ref{convex-quadrilateral-fig}. Since $|ab|\le|bc|\le |ad|$, $$\gamma_2\le \alpha_2\text{ and }\delta_2\le \beta_2.$$ Let $\ell$ be a line passing through $c$ which is parallel to $ad$. Since $\angle abc + \angle bad \le \pi$, $\ell$ intersects the line segment $ab$. This implies that $\alpha_1\le \gamma_2$. If $\beta_1<\delta_1$, then $|cd|<|bc|$, and hence $|cd|<|ad|$ and we are done. Assume that $\delta_1\le \beta_1$. In this case, $\delta\le \beta$. Now consider the two triangles $\bigtriangleup abc$ and $\bigtriangleup acd$. Since $\delta\le \beta$ and $\alpha_1\le\gamma_2$, $\alpha_2\le \gamma_1$. Then we have $$\alpha_1\le \gamma_2\le \alpha_2\le\gamma_1.$$

Since $\alpha_1\le\gamma_1$, $|cd|\le |ad|$, where the equality holds only if $\alpha_1= \gamma_2= \alpha_2=\gamma_1$, i.e., $Q$ is a diamond. This completes the proof.
\end{proof}

\begin{figure}[htb]
  \centering
  \includegraphics[width=.7\textwidth]{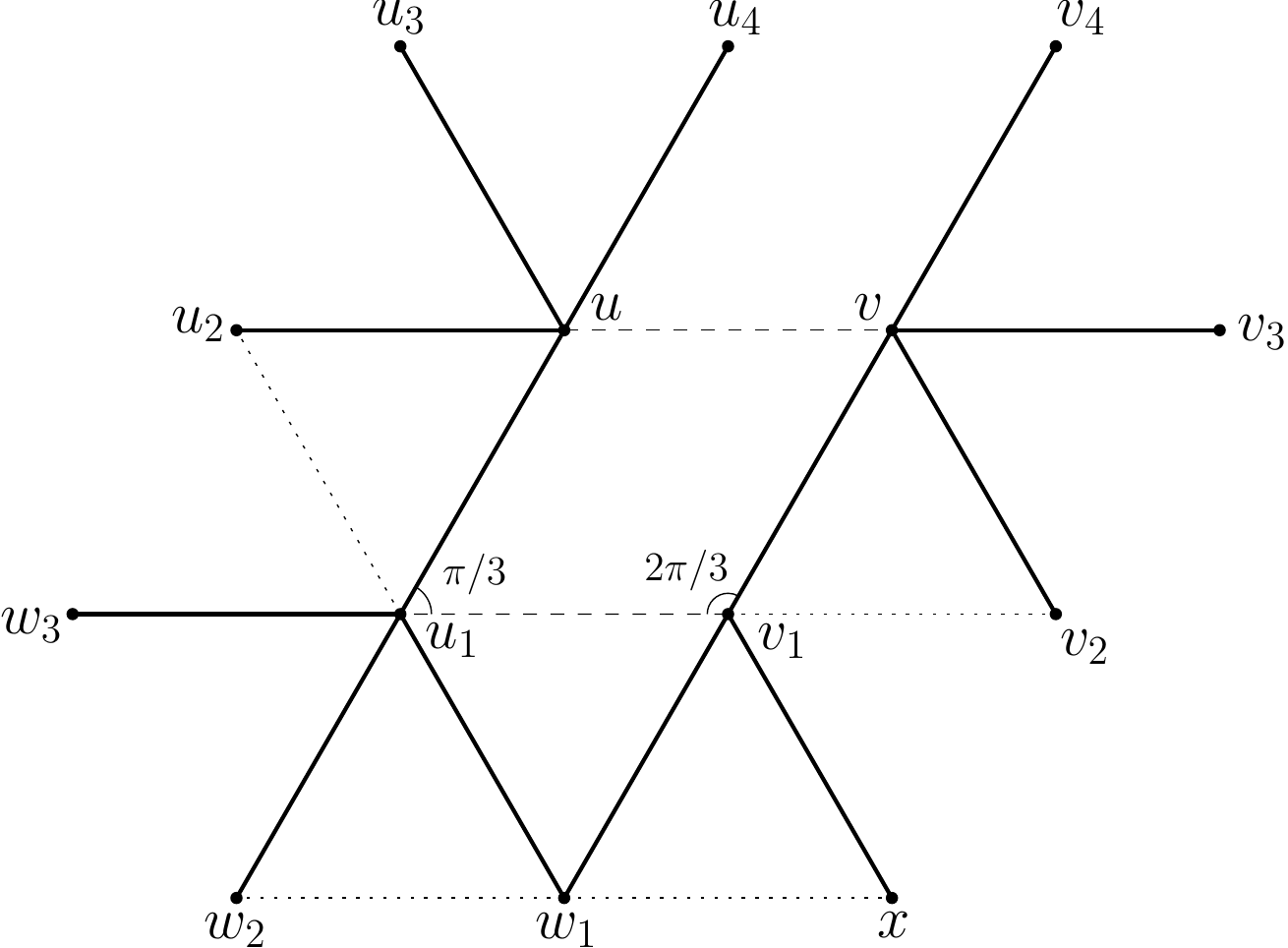}
  \caption{Solid segments represent the edges of $MST(P)$. Dashed segments represent the swapped edges. Dotted segments represent the edges which cannot exist.}
\label{degree8-fig}
\end{figure}

\begin{lemma} 
\label{edge-degree-lemma}
Every finite set of points $P$ in the plane admits a minimum spanning tree whose node degree is at most five and whose minimal-edge degree is at most nine.
\end{lemma}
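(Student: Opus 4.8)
The plan is to upgrade Monma and Suri's construction by an exchange argument that is confined to equal-weight swaps. First I would invoke Monma and Suri~\cite{Monma1992} to know that the family $\mathcal{F}$ of minimum spanning trees of $P$ with maximum node degree at most five is nonempty, and then select from $\mathcal{F}$ a tree $T$ that is extremal for an appropriate integer potential $\Phi$ (for concreteness, the number of vertices of degree five, refined by a lexicographic tie-breaker). Because every swap I perform exchanges two edges of the \emph{same} weight, the outcome of a legal swap stays a minimum spanning tree, so it suffices to show that a ``bad'' minimal edge permits a swap that keeps the max degree at five yet lowers $\Phi$, contradicting the choice of $T$. Recall that for a minimal edge $e=(u,v)$ every edge incident to $u$ or $v$ is at least as long as $e$, hence lies in $T(e^+)$; thus the degree of $e$ equals $\dg{u}+\dg{v}-1$, and the minimal-edge bound is equivalent to the statement that no minimal edge has both endpoints of degree five (which yields degree at most eight, matching the preceding discussion).

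Assume for contradiction that $uv$ is a minimal edge with $\dg{u}=\dg{v}=5$. In a minimum spanning tree two edges sharing an endpoint subtend an angle of at least $60^\circ$, so the five edges at $u$ create five angular gaps, each at least $60^\circ$. Writing $p,p'$ for the two edges of $u$ flanking $uv$ and $q,q'$ for the two edges of $v$ flanking $uv$, the three gaps at $u$ that are not adjacent to $uv$ already total at least $180^\circ$, whence $\angle vup+\angle vup'\le 180^\circ$, and symmetrically $\angle uvq+\angle uvq'\le 180^\circ$. Adding these, the two sides of the line $uv$ have angle-sums totalling at most $360^\circ$, so on one side---say the one containing $p$ and $q$---we get $\angle vup+\angle uvq\le \pi$. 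I would then look at the quadrilateral $u,v,q,p$: Lemma~\ref{empty-triangle-lemma} guarantees that the triangles $uvp$ and $uvq$ are empty of other points of $P$, which I would use to confirm that $u,v,q,p$ are in convex position in this cyclic order. Setting $a=u$, $b=v$, $c=q$, $d=p$ (relabelling $u\leftrightarrow v$ if needed so that $|vq|\le|up|$), all hypotheses of Lemma~\ref{convex-quadrilateral-lemma}---namely $|uv|\le|vq|\le|up|$, $\angle bad,\angle abc\ge 60^\circ$, and $\angle abc+\angle bad\le\pi$---hold, and the lemma yields $|pq|\le|up|$.

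The swap I propose is $T'=T-\{up\}+\{pq\}$: deleting the longest edge $up$ separates $p$ from the component still containing $u$, $v$, $q$, and $pq$ reconnects them, so $T'$ is a spanning tree of weight $w(T)-|up|+|pq|\le w(T)$; minimality forces $|pq|=|up|$, so $T'$ is again a minimum spanning tree and $\dg{u}$ drops to four. The hard part---and the role of the ``dotted'' edges in Figure~\ref{degree8-fig}---is to prevent this swap from merely relocating the defect to $q$ by pushing $\dg{q}$ to six, i.e. to rule out $\dg{q}=5$ beforehand (which would also violate the degree bound). Here I expect to use emptiness once more: if $q$ carried five edges, the $60^\circ$ spacing around $q$ together with the empty triangles of Lemma~\ref{empty-triangle-lemma} would force one of $q$'s edges into a region that the new edge $pq$ (of length $|up|\ge|uv|$) must cross, contradicting either emptiness or the minimality of $uv$; these are precisely the configurations the figure marks as impossible. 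Once $\dg{q}\le 4$ is secured and the remaining boundary case (where $q$ becomes degree five and $\Phi$ must be driven down by the lexicographic refinement) is dispatched, $\Phi$ strictly decreases and we reach the contradiction. I anticipate that the bulk of the write-up, and the main obstacle, will be exactly this control of the gaining vertex $q$ across the symmetric choices of side and of which endpoint supplies the longest edge, while Lemmas~\ref{empty-triangle-lemma} and~\ref{convex-quadrilateral-lemma} furnish the quantitative core.
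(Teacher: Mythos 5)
Your setup coincides with the paper's up to and including the quantitative core: Monma--Suri for the degree-$5$ tree, the pigeonhole on the five angular gaps at $u$ and at $v$ to find a side with $\angle vup+\angle uvq\le\pi$, the empty triangles of Lemma~\ref{empty-triangle-lemma} to get convex position, and Lemma~\ref{convex-quadrilateral-lemma} plus minimality (equivalently Lemma~\ref{cycle-lemma}) to force $|pq|=|up|$ and the rhombus/all-angles-$\pi/3$ configuration. The divergence is in the swap and, crucially, in the bookkeeping afterwards, and that is where your argument has a genuine gap. You delete $up$ and insert $pq$, so only $u$ loses a unit of degree while $v$ keeps degree five and $q$ gains one. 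You then need $\dg{q}\le 4$ beforehand, and you only gesture at why ($60^\circ$ spacing plus emptiness ``would force'' a crossing); no such exclusion is proved, and the paper does not prove one either --- it sidesteps the issue by deleting $uv$ instead, which drops \emph{both} $u$ and $v$ to degree four, and then runs an explicit two-case analysis on $\min\{\angle vv_1u_1,\angle v_1u_1u\}$ (performing a \emph{second} equal-weight swap in the boundary case $=\pi/3$) to certify that every affected vertex ends with degree at most four, so no new minimal edge of degree nine can appear. Your version leaves $v$ at degree five; in the rigid configuration all nine edges at $u$ and $v$ have length $|uv|$, so after your swap the edge $vq$ (and possibly $pq$) can itself be a minimal edge with both endpoints of degree five, i.e.\ the defect simply relocates.

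This also breaks your termination argument: your potential $\Phi$ (number of degree-five vertices) is \emph{unchanged} by the swap whenever $\dg{q}=4$ beforehand (you trade $u$'s degree-five status for $q$'s), and the ``lexicographic refinement'' that is supposed to rescue this case is never specified. So the extremal choice of $T$ does not yield a contradiction. To close the proof you would either have to actually establish the exclusion $\dg{q}\le 3$ (which I do not believe holds in general), or switch to the paper's swap $uv\leftrightarrow pq$ and reproduce its case analysis on the angles at $p$ and $q$, including the secondary swap, so that all vertices touched by the exchange end at degree at most four and the count of degree-nine minimal edges strictly decreases. One minor point in your favour: your observation that $|pq|<|up|$ directly contradicts minimality of $T$ is cleaner than invoking Lemma~\ref{cycle-lemma}, which formally requires the maximum-weight edge of the cycle to be unique.
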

\begin{proof}
Consider a minimum spanning tree, $MST(P)$, of maximum vertex degree 5. The maximum edge degree in $MST(P)$ is 9. Consider any minimal edge, $uv$. If the degree of $uv$ is 8, then $MST(P)$ satisfies the statement of the lemma. Assume that the degree of $uv$ is 9. Let $u_1, u_2,u_3,u_4$ and $v_1, v_2,v_3,v_4$ be the the neighbors of $u$ and $v$ in clockwise and counterclockwise orders, respectively. See Figure~\ref{degree8-fig}. In $MST(P)$, the angles between two adjacent edges are at least $\pi/3$. Since $\angle u_iuu_{i+1}\ge \pi/3$ and $\angle v_ivv_{i+1}\ge \pi/3$ for $i=1,2,3$, either $\angle vuu_1+\angle uvv_1\le \pi$ or $\angle vuu_4+\angle uvv_4\le \pi$. Without loss of generality assume that $\angle vuu_1+\angle uvv_1\le \angle vuu_4+\angle uvv_4$ and $\angle vuu_1+\angle uvv_1\le \pi$. We prove that the spanning tree
obtained by swapping the edge $uv$ with $u_1v_1$ is also a minimum spanning tree, and it has one fewer minimal-edge of degree 9. By repeating this procedure at each minimal-edge of degree 9, we obtain a minimum spanning tree which satisfies the statement of the lemma. 
Let $Q=u,v,v_1,u_1$. By Lemma~\ref{empty-triangle-lemma}, $v_1$ is outside the triangle $\bigtriangleup u_1uv$, and $u_1$ is outside the triangle $\bigtriangleup uvv_1$. In addition, $u_1$ and $v_1$ are on the same side of the line subtended from $uv$. Thus, $Q$ is a convex quadrilateral.
Without loss of generality assume that $|vv_1|\le |uu_1|$. By Lemma~\ref{convex-quadrilateral-lemma}, $|u_1v_1|\le |uu_1|$. If $|u_1v_1|< |uu_1|$, we get a contradiction to Lemma~\ref{cycle-lemma}. Thus, assume that $|u_1v_1|= |uu_1|$. As shown in the proof of Lemma~\ref{convex-quadrilateral-lemma}, this case happens only when $Q$ is a diamond. This implies that $\angle vuu_1+\angle uvv_1=\pi$, and consequently $\angle vuu_4+\angle uvv_4= \pi$. In addition, $\angle u_iuu_{i+1}= \pi/3$ and $\angle v_ivv_{i+1}= \pi/3$ for $i=1,2,3$. To establish the validity of our edge-swap, observe that the nine edges incident to $u$ and $v$ are all equal in length. Therefore, swapping $uv$ with $u_1v_1$ does not change the cost of the spanning tree and, furthermore, the resulting tree is a valid spanning tree
since $u_1v_1$ is not an edge of the original spanning tree $MST(P)$; otherwise $u,v,v_1$, and $u_1$ would form a cycle. We have removed a minimal edge $uv$ of degree 9, but it remains to show that the degree of $u_1$ and $v_1$ does not increase to six and new minimal edge of degree 9 is not generated. Note that $u_1u_2$ and $v_1v_2$ are not the edges of $MST(P)$, and hence, $\dg{u_1}$ and $\dg{v_1}$ are still less than six. In order to show that no new minimal edge is generated, we differentiate between two cases:

\begin{itemize}
 \item $\min\{\angle vv_1u_1, \angle v_1u_1u\} > \pi/3$. Since $\angle v_1u_1u>\pi/3$ and $\angle uu_1u_2= \pi/3$, $u_1$ can be adjacent to at most two vertices other than $u$ and $v_1$, and hence $\dg{u_1}\le 4$; similarly $\dg{v_1}\le 4$. Thus, $u$, $v$, $u_1$, and $v_1$ are of degree at most four, and hence no new minimal edge of degree 9 is generated.

  \item $\min\{\angle vv_1u_1, \angle v_1u_1u\} = \pi/3$. W.l.o.g. assume that $\angle v_1u_1u=\pi/3$. This implies that $\angle vv_1u_1= 2\pi/3$. Since $\angle v_1u_1u=\pi/3$ and $\angle uu_1u_2= \pi/3$, $u_1$ is adjacent to at most three vertices other than $u$ and $v_1$. Let $u,v_1, w_1,w_2,w_3$ be the neighbors of $u_1$ in clockwise order. Note that $v_1$ is not adjacent to $u$, $v_2$ nor $w_1$. But $v_1$ can be connected to another vertex, say $x$, which implies that $\dg{v_1}\le 3$. We prove that the spanning tree obtained by swapping the edge $u_1v_1$ with $v_1w_1$ is also a minimum spanning tree of node degree at most five, which has one fewer minimal edge of degree 9. The new tree is a legal minimum spanning tree for $P$, because $|v_1w_1|=|v_1u_1|$. In addition, $\dg{u_1}\le 4$ and $\dg{v_1}\le 4$. Since $w_1w_2$ and $w_1x$ are illegal edges, $\dg{w_1}\le 4$. Thus, $u$, $v$, $u_1$, $v_1$, and $w_1$ are of degree at most four and no new minimal edge of degree 9 is generated. This completes the proof that our edge-swap reduces the number of minimal-edges of degree nine by one.
\end{itemize}
\end{proof}

\section{Conclusion}
\label{conclusion}

Given a set of $n$ points in general position in the plane, we considered the problem of strong matching of points with convex geometric shapes. A matching is strong if the objects representing whose edges are pairwise disjoint. In this paper we presented algorithms which compute strong matchings of points with diametral-disks, equilateral-triangles, and squares. Specifically we showed that:
\begin{itemize}
 \item There exists a strong matching of points with diametral-disks of size at least $\lceil\frac{n-1}{17}\rceil$.
 \item There exists a strong matching of points with downward equilateral-triangles of size at least $\lceil\frac{n-1}{9}\rceil$.
\item There exists a strong matching of points with downward/upward equilateral-triangles of size at least $\lceil\frac{n-1}{4}\rceil$.
\item There exists a strong matching of points with axis-parallel squares of size at least $\lceil\frac{n-1}{4}\rceil$.
\end{itemize}
 
The existence of a downward/upward equilateral-triangle matching of size at least $\lceil\frac{n-1}{4}\rceil$, implies the existence of either a downward equilateral-triangle matching of size at least $\lceil\frac{n-1}{8}\rceil$ or an upward equilateral-triangle matching of size at least $\lceil\frac{n-1}{8}\rceil$. This does not imply a lower bound better than $\lceil\frac{n-1}{9}\rceil$ for downward equilateral-triangle matching (or any fixed oriented equilateral-triangle).

A natural open problem is to improve any of the provided lower bounds, or extend these results for other convex shapes. The specific open problem is to prove that Algorithm~\ref{alg1} computes a strong matching of points with diametral-disks of size at least $\lceil\frac{n-1}{8}\rceil$ as discussed in Section~\ref{conjecture-section}.
\bibliographystyle{abbrv}
\bibliography{Strong-Matching.bib}
\end{document}